\documentclass{vldb}
\usepackage{graphicx}
\usepackage{balance}
\usepackage{color}
\usepackage[linesnumbered,ruled,vlined]{algorithm2e}
\usepackage{algpseudocode}
\usepackage{multirow}
\usepackage{times}
\usepackage{bm}
\usepackage{url}
\usepackage{hyperref}
\usepackage[show]{boxnotes}

\vldbTitle{Efficient Algorithms for Densest Subgraph Discovery}
\vldbAuthors{Yixiang Fang, Kaiqiang Yu, Reynold Cheng, Laks V.S. Lakshmanan, Xuemin Lin}
\vldbDOI{https://doi.org/10.14778/3342263.3342645}
\vldbVolume{12}
\vldbNumber{11}
\vldbYear{2019}

\newcommand{\squishlist}{
 \begin{list}{$\bullet$}
  {  \setlength{\itemsep}{0pt}
     \setlength{\parsep}{3pt}
     \setlength{\topsep}{3pt}
     \setlength{\partopsep}{0pt}
     \setlength{\leftmargin}{2em}
     \setlength{\labelwidth}{1.5em}
     \setlength{\labelsep}{0.5em}
} }

\newcommand{\squishend}{
  \end{list}
}
\newcommand{\eat}[1]{}

\newtheorem{theorem}{Theorem}
\newtheorem{lemma}{Lemma}

\newtheorem{example}{Example}

\newtheorem{definition}{Definition}
\newtheorem{problem}{Problem}

\textfloatsep 1.8mm plus 2mm \intextsep 1.8mm plus 2mm

\pagestyle{empty}

\begin{document}

\title{Efficient Algorithms for Densest Subgraph Discovery}

\numberofauthors{1}

\author{
\alignauthor
Yixiang Fang$^{\heartsuit\dag\star}$, Kaiqiang Yu$^{\ddag}$, Reynold Cheng$^{\ddag}$, Laks V.S. Lakshmanan$^{\S}$, Xuemin Lin$^{\dag\star}$\\
       \affaddr{$^{\heartsuit}$Guangzhou University, China, $^{\dag}$The University of New South Wales, Australia, $^{\star}$Zhejiang Lab, China, \\
       $^{\ddag}$The University of Hong Kong, China, $^{\S}$The University of British Columbia, Canada}\\
       \email{$^{\dag}$\{yixiang.fang@,lxue@cse.\}unsw.edu.au,$^{\ddag}$\{ky, ckcheng\}@cs.hku.hku,$^{\S}$laks@cs.ubc.ca}
}

\date{19 July 2019}

\maketitle

\newcommand{\tabincell}[2]{\begin{tabular}{@{}#1@{}}#2\end{tabular}}

\begin{abstract}

{\it Densest subgraph discovery} (DSD) is a fundamental problem in graph mining. It has been studied for decades, and is widely used in various areas, including network science, biological analysis, and graph databases. Given a graph $G$, DSD aims to find a subgraph $D$ of $G$ with the highest density (e.g., the number of edges over the number of vertices in $D$).  Because DSD is difficult to solve, we propose a new solution paradigm in this paper.  Our main observation is that the densest subgraph can be accurately found through a $k$-core (a kind of dense subgraph of $G$), with theoretical guarantees. Based on this intuition, we develop efficient exact and approximation solutions for DSD. Moreover, our solutions are able to find the densest subgraphs for a wide range of graph density definitions, including clique-based- and general pattern-based density. We have performed extensive experimental evaluation on both real and synthetic datasets. Our results show that our algorithms are up to four orders of magnitude faster than existing approaches.

\end{abstract}
\section{Introduction}
\label{sec:into}

Given a graph $G$ with $n$ vertices and $m$ edges, the {\it densest subgraph discovery} (DSD) is the problem of discovering a ``dense'' subgraph from $G$~\cite{chen2012dense,tsourakakis2013denser,fratkin2006motifcut,cohen2003reachability}.  For example, the densest subgraph of Figure~\ref{fig:intro}(a) is $S_1$, because its {\it edge-density}, or the average number of edges over the number of vertices in $S_1$, is the highest among all possible subgraphs of $G$.  The DSD problem is fundamental to graph mining \cite{tutorial}, and is widely used in network science, biological analysis, graph databases, and system optimization. In network science, for instance, the densest subgroups discovered can be used to find ``cohesive groups'' in social networks, for purposes of community detection \cite{chen2012dense,tsourakakis2013denser}. In biology, as another example, bioinformatics researchers have studied the use of DSD in identifying regulatory motifs in genomic DNA \cite{fratkin2006motifcut} and gene annotation graphs \cite{saha2010dense}.
In graph databases, the DSD is a building block for many graph algorithms, such as creating elegant index structures for reachability and distance queries \cite{cohen2003reachability,jin20093} and supporting graph visualization \cite{YZhang12,zhao2012large}.  In system optimization, DSD has been used in social piggybacking \cite{gionis2013piggybacking,tutorial}, which can be used to improve the throughput of social networking systems (e.g., Facebook).

At present, two variants of DSD have been proposed.  The first problem is to find the subgraph with the highest edge-density in $G$. In Figure~\ref{fig:intro}(a), for example, $S_1$ has the highest edge-density of 11/7 among all possible subgraphs of $G$.  Recently, researchers have studied DSD by defining density based on $h$-clique, which is a complete graph of $h$ vertices, with $h\geq2$. Figure~\ref{fig:intro}(b) shows a 3-clique (or ``triangle'') and a 4-clique. The goal of DSD is then to find the subgraph of $G$ that has the highest {\it $h$-clique-density} \cite{tsourakakis2015k,mitzenmacher2015scalable}, or the average number of $h$-cliques that a vertex participates in. In Figure~\ref{fig:intro}(a), subgraph $S_2$ has the highest ``3-clique-density'', in terms of number of triangles.
The DSD problem, based on clique-density, can be used for detecting larger near-cliques \cite{tsourakakis2015k,mitzenmacher2015scalable} (which can be used for communication network analysis and automatic test pattern generation \cite{clique}). The triangle-based densest subgraphs are useful for finding research groups in the DBLP network and clusters in senators' network on US bill voting \cite{tsourakakis2015k}, and discovering compact dense subgraphs from networks \cite{samusevich2016local}.
Note that an edge is a 2-clique, so edge-density is the 2-clique-density.

\begin{figure}
\hspace*{-.4cm}
\centering
\begin{tabular}{c c c}
  \begin{minipage}{4.3cm}
	\includegraphics[width=4.60cm]{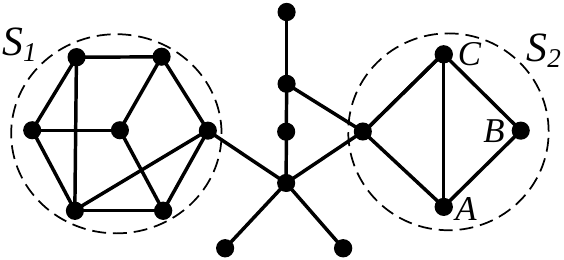}
  \end{minipage}
  & &
  \begin{minipage}{2.1cm}
	\includegraphics[width=2.1cm]{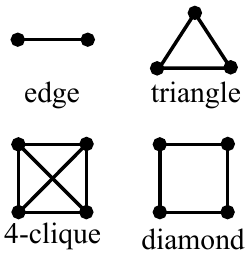}
  \end{minipage}
  \\
  (a) An example graph
  & &
  (b) Cliques and pattern
\end{tabular}
\caption{Illustrating the densest subgraphs.}
\label{fig:intro}
\end{figure}

Our main goal is to solve the DSD problem with respect to edge- and clique- densities.
This problem is technically challenging~\cite{goldberg1984finding,tsourakakis2015k,charikar2000greedy,zhao2012large}.
Existing DSD solutions, which often involve solving the maximum flow problem, are computationally expensive. For example, given a graph $G$ with $n$ vertices and $m$ edges, a well-known algorithm based on edge-density~\cite{goldberg1984finding} may incur a time complexity of ${\mathcal O}((mn + m^3)\log n)$, and is thus impractical for very large graphs. The $h$-clique-based DSD problem is even more complex~\cite{tsourakakis2015k,mitzenmacher2015scalable}. Moreover, our experiments show that existing DSD solutions cannot handle large graphs very well, and there is considerable room for developing faster solutions.

In this paper, our goal is to develop efficient algorithms for finding the subgraph with the highest edge- and $h$-clique-density. We leverage the $k$-core~\cite{md1983}, or the largest subgraph of graph $G$, where each vertex has at least $k$ neighbors.  We show that the densest subgraph (in terms of edge-density) is located in some $k$-cores, which are often much smaller than the entire graph $G$. For example, in Figure~\ref{fig:intro}(a), the subgraph $S_1$ is the 3-core, which is also the densest subgraph of $G$, w.r.t. edge-density. To solve DSD w.r.t. $h$-clique-density, we extend the $k$-core to the $k$-clique-core, or ($k$, $\Psi$)-core, which incorporates an $h$-clique $\Psi$ into the $k$-core definition.
Based on the cores, we develop efficient exact and approximation algorithms for finding the subgraphs with the highest edge-density and $h$-clique-density. Notably, this ``core-based solution'' achieves the same approximation ratio as the current state-of-the-art.

It is non-trivial to use ($k$, $\Psi$)-core to solve the DSD problem. Here we give an outline of this process. We denote by $k$ the {\it core number}. We first derive the lower and upper bounds on the $h$-clique-density for each ($k$, $\Psi$)-core. Based on these tight bounds, we can compute the upper and lower bounds of $\rho_{opt}$, which is the density of the densest subgraph, and further locate the densest subgraph w.r.t. an $h$-clique in some specific ($k$, $\Psi$)-cores. These ($k$, $\Psi$)-cores are often much smaller than the entire graph $G$, and thus we can directly compute the densest subgraph from these small cores, resulting in high efficiency.

Specifically, to compute the exact densest subgraph $D$, we first locate $D$ in a specific ($k$, $\Psi$)-core. Then, we build a flow network on this core, and find $D$ by solving the maximum flow problem using binary search. During the binary search, whenever we obtain a larger lower bound of $\rho_{opt}$, we can further locate $D$ in another core with higher core number and build an even smaller flow network to compute $D$. The binary search process stops when we have found $D$.
We further show that the  ($k_{\max}$, $\Psi$)-core, which is a ($k$, $\Psi$)-core with $k$ attaining the maximum value, is a good approximation to the densest subgraph, with theoretical guarantees. To find the ($k_{\max}$, $\Psi$)-core, a straightforward method is to perform core decomposition, which computes all the $(k,\Psi)$-cores in an incremental manner. This is costly and unnecessary because we only need the $(k_{\max}, \Psi)$-core, rather than all the $(k, \Psi)$-cores. We thus develop another efficient method that extracts the ($k_{\max}$, $\Psi$)-core without computing all the ($k$, $\Psi$)-cores. This solution finds the ($k_{\max}$, $\Psi$)-core from a set of small subgraphs induced by vertices with high degrees, and thus yields better performance.

In addition, we generalize the notion of density to allow arbitrary ``pattern graphs'' (e.g., the diamond pattern in Figure~\ref{fig:intro}(b)), and propose {\it pattern-density} to measure the average number of {\it patterns} in which a vertex participates. We further extend $k$-clique-core to $k$-pattern-core, and show that our solutions above can be smoothly adapted to finding the densest subgraph w.r.t. pattern-density.

We have performed extensive experiments to evaluate our approaches. On both real and synthetic graph datasets ranging from a few thousand to millions of vertices and edges, our new solutions show high efficiency. For example, our core-based exact algorithm, namely {\tt CoreExact}, is up to four orders of magnitude faster than the state-of-the-art exact DSD solution.  Our best approximation algorithm, called {\tt CoreApp}, is up to two orders of magnitude faster than the existing approximation solution. We further perform experiments to find pattern-based densest subgraphs and our results again confirm the superiority of our core-based approaches.

{\bf Contributions.} In summary, our main contributions are:
\squishlist

\item We present a new perspective on solving the DSD problem. Particularly, we propose the ($k$, $\Psi$)-core by incorporating an $h$-clique $\Psi$ where $h\geq2$ (Section~\ref{sec:kcore}). We further establish the lower and upper bounds of densities for ($k$, $\Psi$)-cores.

\item Based on the ($k$, $\Psi$)-cores, we develop fast exact and approximation DSD algorithms w.r.t. $h$-clique-density (Section~\ref{sec:advanced}).

\item We generalize $h$-clique-density to pattern-density and adapt our solutions to solving DSD w.r.t. pattern-density (Section~\ref{sec:pds}).

\item  We conduct extensive experiments on ten real datasets and three synthetic datasets to evaluate our algorithms. The results reveal that our proposed DSD algorithms are several orders of magnitude faster than existing ones (Section~\ref{sec:exp}).
\squishend

\textbf{Organization.} We review the related work in Section~\ref{sec:related}. The DSD problem is stated in Section~\ref{sec:problem}. In Sections 4-6 we present different DSD solutions. In Section~\ref{sec:pds}, we extend our algorithms for finding densest subgraphs for general patterns. We report experimental results in Section~\ref{sec:exp}, and conclude in Section~\ref{sec:conclusion}.
{\sl Due to space limitation, for some lemmas, we do not show the complete proof in this paper; instead, we give the proof sketch and show the complete proof in the technical report~\cite{fullVersion}.}
\vspace{-0.1in}
\section{Related Work}
\label{sec:related}

The problem of dense subgraph computation has been extensively studied \cite{tutorial,Sahu:2017,chen2012dense,tsourakakis2013denser}. In the following, we review existing works that are highly related to our DSD problem.

{\bf Edge-based Densest Subgraph (EDS).} The edge-density of an undirected graph $G(V,E)$ is defined as $\frac{m}{n}$ with $n$=$|V|$ and $m$=$|E|$.  The EDS problem aims to find a subgraph such that its edge-density is the highest among all subgraphs.  This problem can be addressed by solving a parametric maximum-flow problem~\cite{goldberg1984finding,gallo1989fast}.  A typical variant of EDS is to impose a size restriction on the returned subgraph, i.e., finding a subgraph of  up to a given number of vertices whose density is the highest. This problem is NP-hard~\cite{asahiro2000greedily,asahiro2002complexity}.
Another version of EDS, called optimal quasi-clique~\cite{tsourakakis2013denser}, extracts a subgraph, which is more compact, with a smaller diameter than the EDS. Again, this variant is NP-hard~\cite{thesis2013}.
Qin et al.\ developed solutions for finding the top-$k$ locally densest subgraphs~\cite{qin2015locally}.
The EDS problem on evolving graphs is studied in~\cite{evolving2015}.
In \cite{tatti2015density,danisch2017large}, the edge-density-based graph decomposition is extensively studied.
Kannan and Vinay~\cite{directed1999} modeled the density on directed graphs, and then studied the DSD problem on directed graphs~\cite{charikar2000greedy}.

In general, exact EDS solutions work well for small graphs, but they perform poorly for large graphs.
Thus, researchers have developed approximation algorithms, in order to achieve higher efficiency. In~\cite{charikar2000greedy}, Charikar et al. proposed a greedy 0.5-approximation algorithm for solving the EDS problem.
Bahmani et al.~\cite{bahmani2012densest} devised a $1/(2+2\varepsilon$)-approximation algorithm under the streaming model, which takes ${\mathcal O}(m\frac{\log(n)}{\varepsilon})$ time.  The densest subgraph on directed graphs can also be computed by an approximation algorithm~\cite{khuller2009finding}.

Our solution is based on computing $k$-cores, which can then be used to find the EDS. Based on this intuition, we have developed exact and approximation algorithms, and show using extensive experiments that they are much faster than existing EDS solutions.

{\bf $h$-clique Densest Subgraph (CDS).} In \cite{tsourakakis2015k,mitzenmacher2015scalable}, Tsourakakis et al. modeled graph density based on $h$-cliques, and studied the $h$-clique densest subgraph (CDS) problem. It generalizes the EDS problem, which is a special case of CDS for $h$=2. They found that the 3-clique densest subgraphs (a 3-clique is a triangle) help identify cohesive researcher groups in a bibliographical network, as well as clusters of republicans in the network of US senators. Recently, a variant based on the 3-clique, called top-$k$ local triangle-densest subgraphs discovery, has been investigated \cite{samusevich2016local}.

There are four key differences between existing works \cite{tsourakakis2015k,mitzenmacher2015scalable} and our work. (1) Our algorithms, based on ($k$, $\Psi$)-cores where $\Psi$ is an $h$-clique, are substantially different from existing CDS solutions. (2) Whereas \cite{tsourakakis2015k,mitzenmacher2015scalable} can only handle $h$-cliques, our work supports any general pattern (e.g., 4-vertex subgraph \cite{jha2015path,wang2018vertex}).
(3) The approximation algorithm in \cite{mitzenmacher2015scalable} is a randomized algorithm which has a failure probability to obtain an approximation solution, while our core-based approximation algorithms are deterministic algorithms.
(4) Our empirical evaluation shows that our algorithms significantly outperform previous exact algorithms \cite{tsourakakis2015k,mitzenmacher2015scalable} and deterministic approximation algorithm \cite{tsourakakis2015k}.

{\bf Other Dense Subgraphs.} Recently, many other dense subgraph models \cite{fang2019survey}, such as $k$-core~\cite{kcore2003,lu2016h,peng2018efficient,Fang2016,fang2017effective,fang2017VLDBJ,fang2018spatial,fang2018effective,wang2018efficient,chen2018exploring}, $k$-truss~\cite{cohen2008trusses,Huang2014,wu2015robust,huang2016truss,huang2017attribute},
$k$-($r$, $s$) nucleus \cite{sariyuce2015finding,sariyuce2016fast,sariyuce2017nucleus,sariyuce2018local} (a generalization of $k$-core and $k$-truss),
$k$-clique \cite{cui2013online,hu2019discovering},
$k$-edge connected components \cite{hu2016querying,hu2017minimal}.
and $k$-plexes~\cite{seidman1978graph}, have also been explored.
However, these dense subgraphs are different from EDS and CDS, which attain the highest edge-density and clique-density.

\section{Problem Definition}
\label{sec:problem}

\textbf{Data model.}
In this paper, we consider an undirected, unweighted, and simple graph $G(V, E)$ with vertex set $V$ and edge set $E$, where
$n$=$|V|$ and $m$=$|E|$. The degree of a vertex $v$ in $G$, denoted by $deg_G(v)$, is the number of its neighbors, and we denote the maximum degree by $d$. Table~\ref{tab:notation} summarizes all the notations frequently used in this paper. Next, we first introduce two prominent notions of density that were employed in the DSD literature, namely edge-density and $h$-clique-density.

\begin{definition}[Edge-density~\cite{goldberg1984finding,gallo1989fast}]\label{def:edge-density}
Given a graph $G$ $(V,$ $E)$, its edge-density is $\tau(G)$= $\frac{{|E|}}{{|V|}}$.
\end{definition}


\begin{definition}[Clique instance]
\label{def:instance}
Given a graph $G(V,$ $E)$ and an integer $h$$\geq$$2$, we say a set of $h$ vertices, $S$$\in$$V$, is an $h$-clique instance, if each pair of vertices $u,v\in S$ is connected by an edge.
\end{definition}

\begin{definition}[Clique-degree]
\label{def:patterndegree}
Given a graph $G(V,E)$ and an $h$-clique $\Psi$, the clique-degree of a vertex $v$ in $G$,
or $deg_{G}(v$, $\Psi)$, is the number of clique instances containing $v$.
\end{definition}

Note that for each of these instances, we do not consider permutations of vertices. For example, let $\Psi$ be the triangle (i.e., 3-clique). Then in Figure~\ref{fig:intro}(a), the subgraph $S_2$ contains two clique instances of $\Psi$, which share an edge. The clique-degrees of vertices $A$, $B$, and $C$ are 2, 1, and 2 respectively.

\begin{definition}[$h$-clique-density \cite{tsourakakis2015k}]
\label{def:cliquedensity}
Given a graph $G$ $(V, E)$ and an $h$-clique $\Psi(V_\Psi, E_\Psi)$ with $h$$\geq$2, the $h$-clique-density of $G$ w.r.t. $\Psi$ is
\begin{equation}
\small
\rho(G, \Psi)=\frac{\mu(G, \Psi)}{|V|},
\end{equation}
where $\mu(G, \Psi)$ is the number of clique instances of $\Psi$ in $G$.
\end{definition}

The densest subgraph of $G$ w.r.t. edge-density (resp., $h$-clique-density), i.e., \emph{EDS} \cite{goldberg1984finding} (resp., \emph{CDS} \cite{tsourakakis2015k,mitzenmacher2015scalable}), is the subgraph $D$= ($V_D$, $E_D$) of $G$ whose edge-density (resp., $h$-clique-density) is the highest. Clearly, if the $h$-clique is a single edge (i.e., $h$=2), the $h$-clique-density reduces to edge-density. For ease of exposition, in the following we simply focus on the $h$-clique-density with $h\geq 2$. We use the term CDS when we refer to the DSD problem using the edge-, or $h$-clique-based density. Where necessary, we make the distinction between EDS and CDS.

Now we formally introduce the problem studied in this paper.

\begin{problem}[CDS Problem \cite{tsourakakis2015k,mitzenmacher2015scalable}]
\label{prob:PDS}
Given a graph $G(V$ ,$E)$ and an $h$-clique $\Psi(V_\Psi$, $E_\Psi)$ ($h\geq 2$), return the subgraph $D$ of $G(V, E)$,
whose $h$-clique-density $\rho(D, \Psi)$ is the highest.
\end{problem}

We denote the $h$-clique-density of $D$ by $\rho_{opt}$, i.e., $\rho_{opt}$=$\rho(D, \Psi)$, where $D$ is the CDS.
For the graph $G$ of Figure~\ref{fig:intro}(a), if we let $\Psi$ be the single edge, we will return $S_1$ as the densest subgraph; if we let $\Psi$ be the 3-clique (i.e., triangle), then, $S_2$ is the subgraph with the highest 3-clique-density.

\renewcommand\arraystretch{1.12}
\begin{table}[]
  \scriptsize
  \centering
  \caption {Notations and meanings.}\label{tab:notation}
  \begin{tabular}{c|l}
    \hline
         {\bf Notation}         & {\bf Meaning}\\
    \hline\hline
         $G(V, E)$              & a graph with vertex set $V$ and edge set $E$\\
    \hline
         $n$, $m$               & $n$=$|V|$, $m$=$|E|$\\
    \hline
         $deg_G(v)$             & (classical edge-based) degree of vertex $v$ in $G$\\
    \hline
         $d$                    & the maximum (classical edge-based) degree of $G$\\
    \hline
         $G[T]$                 & a subgraph of $G$ induced by vertex set $T$\\
    \hline
         $\Psi(V_\Psi, E_\Psi)$ & an $h$-clique (vertex set: $V_\Psi$, edge set $E_\Psi$)\\
    \hline
         $deg_G(v, \Psi)$       & clique-degree of vertex $v$ in $G$ w.r.t. $\Psi$\\
    \hline
         $\mu(S, \Psi)$         & number of clique instances of $\Psi$ in the graph $S$\\
    \hline
         $\rho(G, \Psi)$        & $h$-clique-density of graph $G$ w.r.t. an $h$-clique $\Psi$\\
    \hline
         $D(V_D, E_D)$          & the CDS whose $h$-clique-density is $\rho_{opt}$\\
    \hline
         ${\mathcal F}(V_{\mathcal F}, E_{\mathcal F})$          & a flow network with node set $V_{\mathcal F}$ and edge set $E_{\mathcal F}$\\
    \hline
  \end{tabular}
\end{table}


\section{Existing Approaches}
\label{sec:baseline}

In this section, we review existing algorithms for the EDS and CDS problems, and then discuss their limitations.

\subsection{The Exact Method}
\label{sec:basicExact}

Generally, the algorithms for finding exact EDS and CDS~\cite{goldberg1984finding,tsourakakis2015k,mitzenmacher2015scalable} follow the same framework by solving a maximum flow problem using binary search.
A flow network~\cite{flow} is a directed graph ${\mathcal F}(V_{\mathcal F}, E_{\mathcal F})$, where there is a source node\footnote{We use ``node'' to mean ``flow network node'' in this paper.} $s$, a sink node $t$, and some intermediate nodes; each edge has a capacity and the amount of flow on an edge cannot exceed the capacity of the edge.
The maximum flow of a flow network equals the capacity of its minimum st-cut, (${\mathcal S}$, ${\mathcal T}$), which partitions the node set $V_{\mathcal F}$ into two disjoint sets, $\mathcal S$ and $\mathcal T$, such that $s\in\mathcal S$ and $t\in\mathcal T$.

\begin{algorithm}{}
\small
\caption{The algorithm: {\tt Exact}.}
\label{alg:basicExact}
\KwIn{$G(V,E)$, $\Psi(V_\Psi,E_\Psi)$;}
\KwOut{The CDS $D(V_D, E_D)$;}
initialize $l \gets 0$, $u\gets\mathop {\max }\limits_{v \in V} deg_G(v, \Psi)$\;
initialize $\Lambda\gets$all the instances of ($h$--1)-clique in $G$, $D\gets\emptyset$\;
\While{$ u-l\geq \frac{1}{n(n-1)} $}{
    $\alpha\gets \frac{l+u}{2}$\;
    $V_{\mathcal F}\gets \{s\}\cup V\cup\Lambda\cup\{t\}$\tcp*{build a flow network}
    \For{$each$ $vertex$ $v\in V$}{
        add an edge $s$$\rightarrow$$v$ with capacity $deg_G(v, \Psi)$\;
        add an edge $v$$\rightarrow$$t$ with capacity $\alpha|V_\Psi|$\;

    }
    \For{$each$ ($h$--1)-$clique$ $\psi\in\Lambda$}{
        \For{$each$ $vertex$ $v\in\psi$}{
            add an edge $\psi$$\rightarrow$$v$ with capacity $+\infty$\;
        }
    }
    \For{$each$ ($h$--1)-$clique$ $\psi\in\Lambda$}{
        \For{$each$ $vertex$ $v\in V$}{
            \If {$\psi$ and $v$ form an $h$-clique}{
                add an edge $v$$\rightarrow$$\psi$ with capacity 1\;
            }
        }
    }
    find minimum st-cut ($\mathcal S$, $\mathcal T$) from the flow network $\mathcal F(V_{\mathcal F}, E_{\mathcal F})$\;
    \textbf{if} $\mathcal S$=$\{s\}$ \textbf{then} $u \gets \alpha$\;
    \textbf{else} \text{ } \text{ } \text{ } \text{ } \text{ }$l \gets \alpha$,  $D\gets$ the subgraph induced by $\mathcal S\backslash \{s\}$\;
}
\Return $D$\;
\end{algorithm}

We present the state-of-the-art algorithm from~\cite{mitzenmacher2015scalable} in Algorithm~\ref{alg:basicExact}, where the input is a graph $G$ and an $h$-clique $\Psi$.
First, it initializes lower and upper bounds of $\rho_{opt}$ and collects all the instances of ($h$--1)-clique (lines 1-2).
Then, it finds $D$ by using binary search (lines 3-18).
Specifically, in each binary search (lines 4-18), it tries to find a subgraph with density larger than a guessed value $\alpha$, by computing the minimum st-cut using Gusfield's algorithm~\cite{ahuja1994improved} in a flow network $\mathcal F(V_{\mathcal F}, E_{\mathcal F})$.
To build $\mathcal F(V_{\mathcal F}, E_{\mathcal F})$, it first creates a node set $V_{\mathcal F}$ (line 5), and then links its nodes by directed edges with different capacities (lines 6-15).
The binary search stops when the gap between the upper and lower bounds of $\alpha$ is less than $\frac{1}{n(n-1)}$.
We denote this algorithm by {\tt Exact}.

Note that if $\Psi$ is the single edge, the flow network $\mathcal F(V_{\mathcal F}$, $E_{\mathcal F})$ can be simplified such that \cite{goldberg1984finding}:
$V_{\mathcal F}$=$\{s\}\cup V\cup\{t\}$,
and for each vertex $v\in$$G$, there is a directed edge from $s$ to $v$ with capacity $m$ and a directed edge from $v$ to $t$ with capacity $m$+2$\alpha$-$deg_G(v)$;
for each edge ($v$, $u$)$\in$$G$, there is a directed edge from $u$ to $v$ with capacity $1$ and a directed edge from $v$ to $u$ with capacity $1$.

\begin{example}
\label{eg:exact-flow}
{\em
Let $\Psi$ be the triangle and $G$ be the graph in Figure~\ref{fig:exact-flow}(a).
The graph contains 4 edges (see Figure~\ref{fig:exact-flow}(b)).
By Algorithm~\ref{alg:basicExact}, we construct the flow network, as depicted in Figure~\ref{fig:exact-flow}(c), where the value on each edge denotes its capacity.}
\qed
\end{example}

\begin{figure}[]
	\centering
	\includegraphics[width=0.88\linewidth]{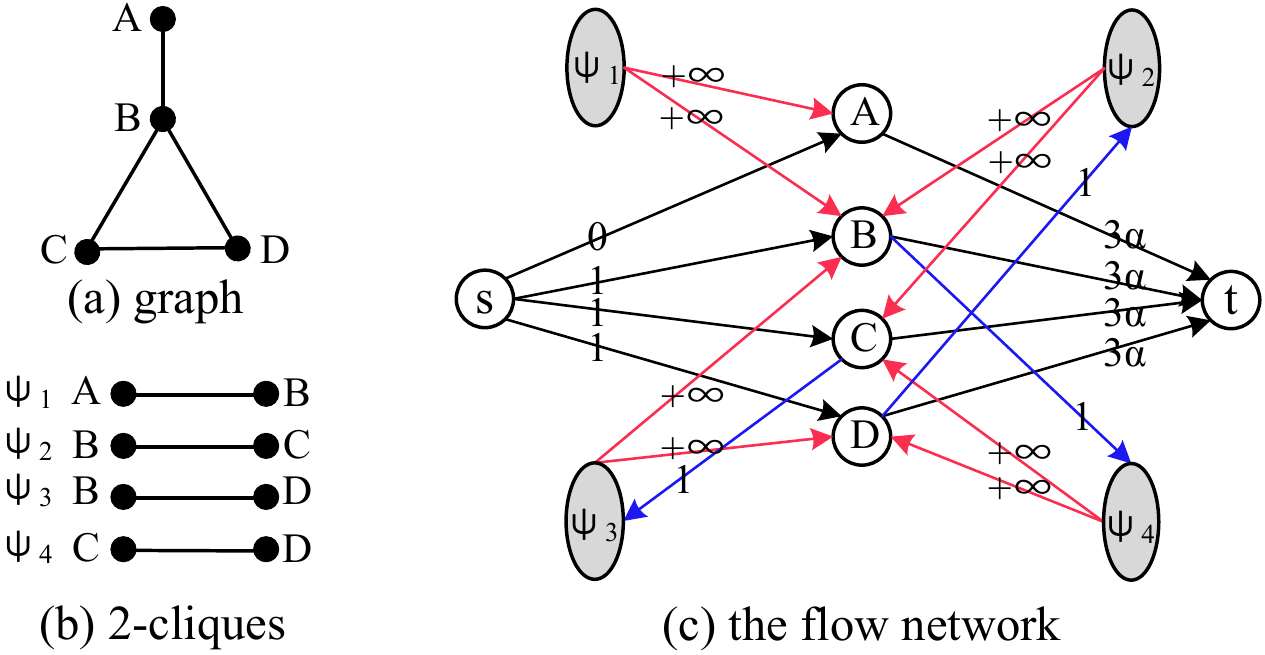}
    \vspace{-0.10in}
	\caption{Illustrating the flow network ($\Psi$ is a triangle).}
	\label{fig:exact-flow}
\end{figure}

\vspace{-0.1in}

\begin{lemma}
\label{lemma:exactTime}
Given a graph $G(V$,$E)$ and an $h$-clique $\Psi(V_\Psi$,$E_\Psi)$, {\tt Exact} takes
${\mathcal O}$$\left( {n\cdot{d-1 \choose h-1}}+ (n|\Lambda|+\min {{(n,|\Lambda|)}^3})\log n\right)$ time and $\mathcal{O}$ $(n+|\Lambda|)$ space, where $\Lambda$ is set of ($h-$1)-clique instances in $G$ \cite{tsourakakis2015k}.
\end{lemma}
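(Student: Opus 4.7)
The plan is to decompose the running time of Algorithm~1 into the one-shot initialization of lines 1--2 and the $\mathcal{O}(\log n)$ binary-search iterations, and then to isolate the minimum $st$-cut computation as the only nontrivial subroutine to bound. For the initialization, I would argue that enumerating the set $\Lambda$ of all $(h-1)$-clique instances (and, as a byproduct, the per-vertex clique-degrees needed to set $u$) is dominated by a standard recursive clique lister: for every vertex $v$ one enumerates $(h-2)$-cliques inside $G[N(v)]$, contributing at most $\binom{d-1}{h-1}$ new $(h-1)$-cliques per vertex and hence $\mathcal{O}(n\binom{d-1}{h-1})$ total work. For the outer loop, a halving argument applied to a gap that starts polynomial in $n$ and ends below $1/(n(n-1))$ yields $\mathcal{O}(\log n)$ iterations.

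For a single iteration I would next count nodes and edges of the flow network built in lines 5--15. The node set has size $n+|\Lambda|+2$; lines 6--8 contribute $\mathcal{O}(n)$ edges, lines 9--11 contribute $(h-1)|\Lambda|$ edges, and lines 12--15 contribute up to $n|\Lambda|$ edges, each testable in $\mathcal{O}(h)$ time via an adjacency query on $G$. Hence constructing $\mathcal F$ costs $\mathcal{O}(n|\Lambda|)$ time per iteration. The main obstacle is bounding the subsequent minimum $st$-cut computation. The constructed network is essentially bipartite, with vertex-nodes $V$ on one side and clique-nodes $\Lambda$ on the other (and $s$, $t$ attached only to $V$), and I would invoke the bipartite maximum-flow algorithm of~\cite{ahuja1994improved} as a black box: on such a network the min cut is found in $\mathcal{O}(n|\Lambda|+\min(n,|\Lambda|)^3)$ time, where the cubic term depends on the smaller side of the bipartition rather than on the total node count. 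Reproducing that analysis inside this lemma would merely duplicate~\cite{ahuja1994improved}, so I would simply cite it.

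Combining the pieces, the total time is $\mathcal{O}(n\binom{d-1}{h-1})+\mathcal{O}(\log n)\cdot\mathcal{O}(n|\Lambda|+\min(n,|\Lambda|)^3)$, which collapses to the stated bound. For the space claim, the list $\Lambda$ and the per-vertex clique-degrees together occupy $\mathcal{O}(n+|\Lambda|)$, and the flow network's explicit nodes, $s$/$t$-incident edges, and $\Lambda$-to-$V$ edges also fit in $\mathcal{O}(n+|\Lambda|)$. The only potentially large component is the set of $\mathcal{O}(n|\Lambda|)$ vertex-to-clique edges; these need not be materialized, since whether an edge $v\to\psi$ is present reduces to a constant-time adjacency test (for fixed $h$), so the max-flow routine can access them through an oracle on $G$. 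Thus the working space stays $\mathcal{O}(n+|\Lambda|)$, matching the claim.
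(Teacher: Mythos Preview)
Your proposal is correct and follows essentially the same decomposition as the paper's proof: bound the $(h-1)$-clique enumeration by $\mathcal{O}\bigl(n\binom{d-1}{h-1}\bigr)$, bound the number of binary-search rounds by $\mathcal{O}(\log n)$ via the halving of an initially polynomial gap, and invoke the bipartite max-flow algorithm of~\cite{ahuja1994improved} as a black box for the per-iteration $\mathcal{O}\bigl(n|\Lambda|+\min(n,|\Lambda|)^3\bigr)$ bound. Your space argument is in fact more careful than the paper's, which simply asserts that the flow network has size $\mathcal{O}(n+|\Lambda|)$ without discussing the vertex-to-clique edges from lines 12--15.
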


\vspace{-0.05in}
\textsc{Proof sketch:} In the worst case, we will consider $n\cdot{d-1\choose h-1}$ $h$-clique instances, and each binary search of {\tt Exact} takes ${\mathcal O}(n\cdot|\Lambda |$+$\min {{(n,|\Lambda |)}^3})$ time. These are the main cost of {\tt Exact}.\qed

In practice, $h$ is often small and the number of clique instances, $|\Lambda|$, is often much larger than the number $n$ of vertices, so the second summand dominates the overall computational cost.

\subsection{The Approximation Method}
\label{sec:peel}

The approximation method of computing the EDS~\cite{charikar2000greedy} and CDS~\cite{tsourakakis2015k} follows the {\it peeling paradigm} and achieves an approximation ratio of $\frac{1}{|V_\Psi |}$. Here, the approximation ratio is the ratio of the $h$-clique-density of subgraph returned, over $\rho_{opt}$, which is at most 1.0.
Specifically, given a graph $G$ of $n$ vertices, it works in $n$ rounds.
In each round, it removes the vertex that participates in the minimum number of $h$-cliques, and recomputes the density of the residual graph. Finally, the subgraph of the largest $h$-clique-density is returned. Algorithm~\ref{alg:peelAlgo} outlines the steps. Because the algorithm removes vertices one by one, we call it {\tt PeelApp}.

\begin{algorithm}
\small
\caption{The algorithm: {\tt PeelApp}.}
\label{alg:peelAlgo}
\KwIn{$G(V, E)$, $\Psi(V_\Psi, E_\Psi)$;}
\KwOut{A subgraph $S^*$;}
initialize $S\gets G$, $S^*\gets\emptyset$\;
compute the clique-degree for each vertex of $G$\;
\While{$S\ne\emptyset$}{
   $v\gets$ the vertex with the minimum clique-degree in $S$\;
   $S\gets$ remove the vertex $v$ from $S$\;
   \textbf{if} $\rho(S, \Psi)\textgreater\rho(S^*, \Psi)$ \textbf{then} $S^*\gets S$\;
}
\Return $S^*$\;
\end{algorithm}

\begin{lemma}
\label{lemma:peelAppTime}
Given a graph $G$ and an $h$-clique $\Psi(V_\Psi,$$E_\Psi)$, then {\tt PeelApp} takes ${\mathcal O}\left( {n \cdot {{d-1} \choose {h-1}}}\right)$ time and ${\mathcal O}\left(m\right)$ space \cite{tsourakakis2015k}.
\end{lemma}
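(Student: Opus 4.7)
The plan is to bound the running time by splitting Algorithm~\ref{alg:peelAlgo} into two phases: (i) the initial computation of the clique-degree of every vertex of $G$ (line 2), and (ii) the peeling loop (lines 3--6) that repeatedly extracts a minimum clique-degree vertex and refreshes the clique-degrees of the residual graph.

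For phase (i), I would enumerate all $h$-clique instances of $G$ and accumulate a contribution of $1$ to $deg_G(v,\Psi)$ for each vertex $v$ participating in the instance. A natural enumeration iterates over every vertex $v$ and searches for $(h-1)$-cliques inside the neighborhood $N_G(v)$; since $|N_G(v)|\le d$, the number of $(h-1)$-subsets examined from a single $v$ is at most $\binom{d-1}{h-1}$ (the ``$d-1$'' comes from fixing an orientation so that $v$ is the smallest vertex of the clique and therefore its $h-1$ partners are drawn from its $\le d-1$ later neighbors). Summing over $v$ gives a bound of $\mathcal{O}\!\left(n\cdot\binom{d-1}{h-1}\right)$ for this phase, which also dominates the total number of $h$-clique instances in $G$.

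For phase (ii), the key idea is that each iteration peels one vertex, so there are at most $n$ iterations. When a vertex $v$ is selected for removal, I would re-enumerate the $h$-cliques containing $v$ on the fly in the current residual graph $S$, and decrement $deg_S(u,\Psi)$ by $1$ for every other vertex $u$ in each such clique; by the same neighborhood argument as above this costs $\mathcal{O}\!\left(\binom{d-1}{h-1}\right)$ per removal, yielding $\mathcal{O}\!\left(n\cdot\binom{d-1}{h-1}\right)$ in total. To make the extraction of the minimum clique-degree vertex cheap, I would maintain a Batagelj--Zaver{\v s}nik-style bucket array indexed by clique-degree together with doubly linked lists, so that each extraction and each degree-decrement runs in amortized $O(1)$ time; the cost of these bookkeeping operations is absorbed by the clique-update work just bounded. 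Evaluating $\rho(S,\Psi)$ incrementally (updating the running count $\mu(S,\Psi)$ by the number of cliques destroyed during the current removal) is free once the cliques containing $v$ have been enumerated.

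For space, the algorithm needs only the adjacency representation of $G$ (which costs $\mathcal{O}(m)$), an array of clique-degrees and pointers into the bucket structure (which cost $\mathcal{O}(n)$), and a few scratch variables while enumerating the cliques that contain $v$; crucially, we never materialize the set $\Lambda$ of lower-order clique instances, so the total space is $\mathcal{O}(m)$. The main obstacle I anticipate is the tightness of the ``$\binom{d-1}{h-1}$ per removal'' accounting: one must verify that re-enumerating cliques through $v$'s residual neighborhood never overcounts across iterations, and that the bucket-array updates triggered by the decrements do not leak a hidden $\log n$ factor. Once these details are fixed using the standard core-decomposition data structure, both the time and space bounds in the statement follow.
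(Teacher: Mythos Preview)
Your proposal is correct and follows essentially the same approach as the paper's own proof: bound the initial clique-degree computation by $\mathcal{O}\!\bigl(\binom{d-1}{h-1}\bigr)$ per vertex, bound each removal's update cost by the same quantity via re-enumerating the $h$-cliques through the removed vertex, and observe that densities update in $O(1)$ time and that space stays at $\mathcal{O}(m)$ because clique instances are processed sequentially rather than stored. Your write-up is in fact more detailed than the paper's (you make the orientation argument for the ``$d-1$'' explicit and name the bucket data structure), but the underlying accounting is identical.
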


\vspace{-0.1in}
\textsc{Proof sketch:} The main time cost comes from enumerating clique instances, whose number is $n\cdot{d-1\choose h-1}$ in the worst case.\qed

\subsection{Limitations of Existing Methods}
\label{sec:basicDiscuss}

From the above lemmas, we see that while {\tt PeelApp} is faster than {\tt Exact}, it also sacrifices some accuracy. For example, when $\Psi$ is an edge, {\tt Exact} finds the exact EDS in ${\mathcal O}((mn + m^3)\log n)$ time, while {\tt PeelApp} returns a subgraph with 0.5-approximation ratio in linear time, i.e., ${\mathcal O} (m)$.  Both solutions can be inefficient on larger graphs with more complex cliques.  We found that {\tt Exact} suffers from several problems: (1) the initial lower and upper bounds of $\alpha$ are not very tight; (2) the size of the flow network can be large when the graph is large and there are many clique instances of $\Psi$; and (3) the flow network $\mathcal F$ is always built on the entire graph $G$ in each iteration, while the CDS is often in a small subgraph of $G$. The {\tt PeelApp} algorithm also involves a lot of unnecessary computation: for the first few iterations, the graph contains many vertices with lower clique-degrees, which are unlikely to be in the CDS, but {\tt PeelApp} still computes the $h$-clique-density.  As shown in our experiments later, on a moderate-size graph ($n$$\approx$26K and $m$$\approx$100K), {\tt Exact} takes more than 5 days to find the densest subgraphs for 6-clique; on a million-scale graph ($n$$\approx$19M and $m$$\approx$298M), {\tt PeelApp} takes more than 2 days to find the CDS for 6-clique. Thus, there is room for improving their efficiency.

We next propose a core-based approach for locating a CDS, by quickly converging on smaller dense subgraphs that contain the CDS. To make our approach applicable for processing all the $h$-clique-density definitions ($h$$\geq$2), we lift the notion of $k$-cores to $k$-clique-cores and study how to exploit them in the DSD solution.

\section{The Clique-Based Cores}
\label{sec:kcore}

We now study the $k$-clique-core, or ($k$, $\Psi$)-core, which is a generalization of the classical $k$-core~\cite{md1983,kcore2003} for an $h$-clique $\Psi$ (Section~\ref{sec:kcoreIntro}). As we will show, ($k$, $\Psi$)-cores are useful in locating the CDS in both exact and approximation algorithms. We then establish upper and lower bounds on the clique-density of ($k$, $\Psi$)-cores  (Section~\ref{sec:coredensity}), present efficient algorithms for decomposing ($k$, $\Psi$)-cores (Section~\ref{sec:decompose}), and give some discussions (Section~\ref{sec:kcoreDiscuss}) .

\subsection{$k$-core and ($k$, $\Psi$)-core}
\label{sec:kcoreIntro}

We first review the definition of $k$-core.

\begin{definition}[$k$-core~\cite{md1983,kcore2003}]
\label{def:kcore}
Given a graph $G$ and an integer $k$ ($k\geq 0$), the $k$-core,
denoted by $\mathcal H_k$, is the largest subgraph of $G$, such that $\forall v \in {\mathcal H}_k$, $deg_{{\mathcal H}_k}(v) \geq k$.
\end{definition}

We say that ${\mathcal H}_k$ has order $k$. The \emph{core number} of a vertex $v\in V$ is defined as the highest order of a $k$-core that contains $v$. In other words, a $k$-core is the largest subgraph induced by vertices whose core numbers are at least $k$.
A $k$-core has some interesting properties~\cite{kcore2003}:
(1) $k$-cores are ``nested'': given two nonnegative integers $i$ and $j$, if $i<j$, then ${\mathcal H}_j \subseteq {\mathcal H}_i$;
(2) a $k$-core may not be connected; and
(3) computing core numbers of all the vertices in a graph, known as $k$-core decomposition, can be done in linear time.

\begin{figure}[ht]
    \hspace*{-.2cm}
	\centering
    \begin{tabular}{c c}
        \begin{minipage}{3.45cm}
	       \includegraphics[width=3.45cm]{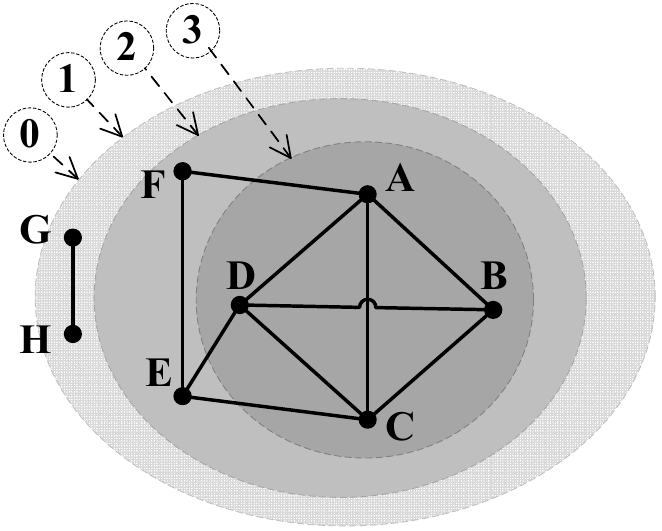}
        \end{minipage}
        &
        \begin{minipage}{3.45cm}
	       \includegraphics[width=3.45cm]{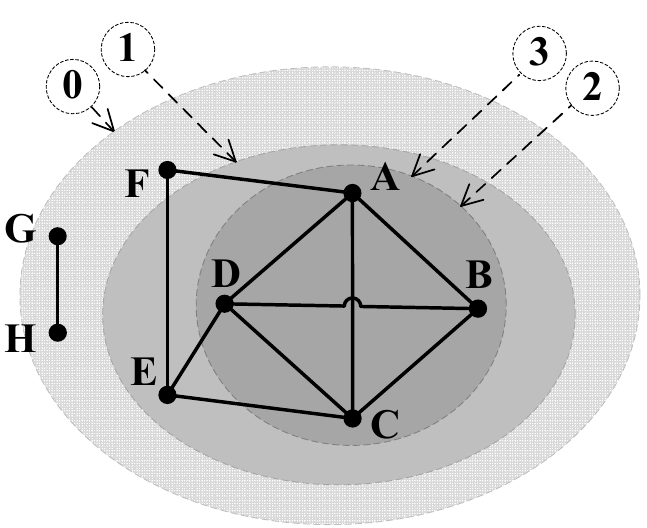}
        \end{minipage}
        \\
        (a) $k$-cores
        &
        (b) ($k$, $\Psi$)-cores
    \end{tabular}
    \vspace{-0.10in}
	\caption{$k$-core, and ($k$, $\Psi$)-core ($\Psi$ is a triangle). }
	\label{fig:kcoreGraph}
\end{figure}

\begin{example}
\label{eg:kcore}
{\em
Figure~\ref{fig:kcoreGraph}(a) depicts a graph of 8 vertices and its $k$-cores.
The number $k$ in each ellipse indicates the $k$-core contained in that ellipse. For instance, the subgraph induced by $\{A,B,C,$ $D\}$ is the 3-core, and the entire graph is both the 0-core and 1-core, which consist of two connected components. }
\qed
\end{example}

\begin{definition}[($k$, $\Psi$)-core]
\label{def:kpsicore}
Given a graph $G$, an integer $k$ ($k$$\geq$0), and an $h$-clique $\Psi$, the ($k$, $\Psi$)-core, denoted by ${\mathcal R}_k$,
is the largest subgraph of $G$ such that $\forall v \in {\mathcal R}_k$, $deg_{{\mathcal R}_k}(v, \Psi) \geq k$.
\end{definition}

Similar to $k$-cores, we say that ${\mathcal R}_k$ has order $k$. The \emph{clique-core number} of a vertex $v\in V$, $core_G(v, \Psi)$, is then the highest order of a ($k$, $\Psi$)-core containing $v$. We denote the maximum clique-core number by $k_{\max}$, where the underlying clique $\Psi$ is understood from the context.
Given a clique $\Psi$, a ($k$, $\Psi$)-core also has the following properties:
(1) ($k$, $\Psi$)-cores are ``nested'': given two nonnegative integers $i$ and $j$, if $i<j$, then ${\mathcal R}_j \subseteq {\mathcal R}_i$; (2) a ($k$, $\Psi$)-core may not be connected; and (3) $core_G(v,\Psi)$$\leq$$deg_G(v, \Psi)$.

\begin{example}
\label{eg:kphicore}
{\em
Let $\Psi$ be the triangle. Figure~\ref{fig:kcoreGraph}(b) shows all ($k$, $\Psi$)-cores of the graph.
The number $k$ in each circle indicates the ($k$, $\Psi$)-core contained in that ellipse.
For instance, the subgraph of $\{A,B,C,D\}$ is the (3, $\Psi$)-core as the 4-clique contains 4 triangle instances, and each vertex participates in 3 of them. Observe that $k$-cores and $(k,\Psi)$-cores are different between Figures~\ref{fig:kcoreGraph}(a) and ~\ref{fig:kcoreGraph}(b), for $k$=1, 2. Also, the entire graph is a $(0,\Psi)$-core.} \qed
\end{example}

\subsection{Density Bounds of ($k$, $\Psi$)-core}
\label{sec:coredensity}

The main result of this section is on the lower and upper bounds on the density of a ($k$, $\Psi$)-core.

\begin{theorem}
\label{thm:coreBounds}
Given a graph $G$ and an $h$-clique $\Psi(V_\Psi, E_\Psi)$, let ${\mathcal R}_k$ be a ($k$, $\Psi$)-core of $G$.
Then, the $h$-clique-density of ${\mathcal R}_k$ satisfies
\begin{equation}
\small
\frac{k}{{|{V_\Psi}|}} \le \rho ({\mathcal R}_k,\Psi ) \le {k_{\max}}.
\end{equation}
\end{theorem}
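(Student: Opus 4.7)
The plan is to split the theorem into the two inequalities $\rho(\mathcal{R}_k,\Psi) \ge k/|V_\Psi|$ and $\rho(\mathcal{R}_k,\Psi) \le k_{\max}$, which need quite different arguments.

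For the lower bound I would use a standard handshaking-style double counting. Summing the clique-degrees of all vertices inside $\mathcal{R}_k$ counts each $h$-clique instance of $\Psi$ exactly $|V_\Psi|$ times (once per vertex in the instance), so
\[
\sum_{v \in \mathcal{R}_k} deg_{\mathcal{R}_k}(v,\Psi) \;=\; |V_\Psi| \cdot \mu(\mathcal{R}_k,\Psi).
\]
By Definition~\ref{def:kpsicore} every $v \in \mathcal{R}_k$ has $deg_{\mathcal{R}_k}(v,\Psi) \ge k$, so the left-hand side is at least $k\cdot|\mathcal{R}_k|$. Dividing both sides by $|V_\Psi|\cdot|\mathcal{R}_k|$ gives exactly $\rho(\mathcal{R}_k,\Psi) \ge k/|V_\Psi|$.

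For the upper bound, note first that $\mathcal{R}_k$ is a subgraph of $G$, so by optimality of the CDS $D$ we automatically have $\rho(\mathcal{R}_k,\Psi) \le \rho_{opt}$. Hence it suffices to show $\rho_{opt} \le k_{\max}$. I would establish this via the following structural property of the CDS: for every $v \in D$, $deg_D(v,\Psi) \ge \rho_{opt}$. The argument is by contradiction — if some $v \in D$ had $deg_D(v,\Psi) < \rho_{opt}$, a short computation with $\mu(D\setminus\{v\},\Psi) = \mu(D,\Psi) - deg_D(v,\Psi)$ and $|D\setminus\{v\}|=|D|-1$ shows that $\rho(D\setminus\{v\},\Psi) > \rho_{opt}$, contradicting the optimality of $D$.

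Since clique-degrees are integers, this property upgrades to $deg_D(v,\Psi) \ge \lceil \rho_{opt}\rceil$ for every $v\in D$, so $D$ itself is a subgraph of $G$ in which every vertex has clique-degree at least $\lceil\rho_{opt}\rceil$. By the maximality clause in Definition~\ref{def:kpsicore}, $D$ must be contained in $\mathcal{R}_{\lceil\rho_{opt}\rceil}$; in particular this core is non-empty, which means $k_{\max} \ge \lceil\rho_{opt}\rceil \ge \rho_{opt}$. Chaining $\rho(\mathcal{R}_k,\Psi)\le\rho_{opt}\le k_{\max}$ yields the desired upper bound. I expect the main obstacle to be the ``minimum clique-degree in the CDS'' lemma: one has to be careful with the algebra showing that peeling a sub-$\rho_{opt}$ vertex strictly increases density, and with the integrality step that turns $\ge \rho_{opt}$ into $\ge \lceil\rho_{opt}\rceil$. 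Once that is nailed down, both bounds fall out cleanly.
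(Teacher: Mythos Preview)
Your proposal is correct and mirrors the paper's proof. The lower bound is the same handshaking count, and the upper bound is exactly the paper's Lemmas~\ref{lemma:prune} and~\ref{lemma:upperbound}: every vertex of $D$ has clique-degree at least $\rho_{opt}$ (else peeling it would raise the density), whence $\rho_{opt}\le k_{\max}$ by the definition of $k_{\max}$; the only cosmetic difference is that the paper states the peeling lemma for arbitrary vertex subsets $U\subseteq V_D$ rather than single vertices, since that more general form is reused later for Lemma~\ref{lemma:mds}.
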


To prove this theorem, we develop the following lemmas.

\begin{lemma}
\label{lemma:connectivity}
Given a graph $G$ and an $h$-clique $\Psi$, the connected components of CDS $D$ have the same clique-density.
\end{lemma}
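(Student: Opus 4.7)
The plan is to prove this by contradiction, exploiting the optimality of $D$ and the fact that $h$-clique instances cannot straddle distinct connected components.

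Suppose $D$ decomposes into connected components $C_1, \ldots, C_p$ with $p \ge 2$. My first key observation would be that every $h$-clique instance of $\Psi$ in $D$ is a connected vertex set, so it must lie entirely within a single $C_i$. Consequently
$$\mu(D,\Psi) \;=\; \sum_{i=1}^{p} \mu(C_i,\Psi), \qquad |V_D| \;=\; \sum_{i=1}^{p} |V_{C_i}|,$$
and therefore $\rho(D,\Psi)$ is the weighted average of the $\rho(C_i,\Psi)$ with weights $|V_{C_i}|/|V_D|$.

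Next I would argue, for any fixed component $C_i$, that $\rho(C_i,\Psi)=\rho(D,\Psi)$. If we had $\rho(C_i,\Psi) > \rho(D,\Psi)$, then $C_i$ itself is a subgraph of $G$ strictly denser than the CDS $D$, contradicting the optimality of $D$. If instead $\rho(C_i,\Psi) < \rho(D,\Psi)$, then since $\rho(D,\Psi)$ is a weighted average of the component densities, removing $C_i$ would leave the subgraph $D' := D \setminus C_i$ (still a valid subgraph of $G$) with
$$\rho(D',\Psi) \;=\; \frac{\mu(D,\Psi)-\mu(C_i,\Psi)}{|V_D|-|V_{C_i}|} \;>\; \rho(D,\Psi),$$
again contradicting optimality. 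Hence each component matches the overall density, which is precisely the claim.

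The step that needs the most care is the weighted-average manipulation in the second case; I would write it explicitly as ``removing a below-average part strictly raises the average,'' which is an elementary but slightly fiddly inequality on positive rationals. No other step looks hard: the disjointness of clique-instance supports across components is immediate, and the rest is just the optimality of $D$. No new notation or machinery beyond Definitions~\ref{def:cliquedensity} and Problem~\ref{prob:PDS} is required.
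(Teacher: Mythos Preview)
Your proposal is correct and follows essentially the same contradiction argument as the paper: write $\rho(D,\Psi)$ as a weighted average of the component densities and observe that if the densities differ, dropping a below-average component yields a strictly denser subgraph. The paper argues the two-component case and waves at the extension, whereas you treat general $p$ directly and make explicit that clique instances cannot straddle components; these are cosmetic differences only.
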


\vspace{-0.05in}
\textsc{Proof sketch.} The lemma can be proved by contradiction.
\qed

%
%

\begin{lemma}
\label{lemma:prune}
Given a graph $G(V,$ $E)$, an $h$-clique $\Psi(V_\Psi, E_\Psi)$, and the CDS $D(V_D, E_D)$, for any subset $U$ of $V_D$, removing $U$ from $D$ will result in the removal of at least $\rho_{opt}\times |U|$ clique instances from $D$.
\end{lemma}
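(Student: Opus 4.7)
\textsc{Proof plan.} The plan is to argue by a simple optimality/counting comparison, using only the fact that $D$ maximizes the $h$-clique-density over all subgraphs of $G$. Let $U\subseteq V_D$, and let $D'$ denote the subgraph of $D$ induced by $V_D\setminus U$. Every $h$-clique instance of $\Psi$ that is removed from $D$ is precisely one whose vertex set is not entirely contained in $V_D\setminus U$, so the number of removed clique instances equals $\mu(D,\Psi)-\mu(D',\Psi)$.

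Next I would invoke the optimality of $D$. Assuming $D'$ is nonempty, the density $\rho(D',\Psi)=\mu(D',\Psi)/(|V_D|-|U|)$ is a well-defined $h$-clique-density of a subgraph of $G$, and by the definition of the CDS it satisfies $\rho(D',\Psi)\le \rho_{opt}$. Rearranging gives
\begin{equation*}
\mu(D',\Psi)\;\le\;\rho_{opt}\cdot(|V_D|-|U|).
\end{equation*}
Since $D$ itself has $\mu(D,\Psi)=\rho_{opt}\cdot|V_D|$, subtracting yields
\begin{equation*}
\mu(D,\Psi)-\mu(D',\Psi)\;\ge\;\rho_{opt}\cdot|V_D|-\rho_{opt}\cdot(|V_D|-|U|)\;=\;\rho_{opt}\cdot|U|,
\end{equation*}
which is precisely the claim. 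The boundary case $U=V_D$ (where $D'$ is empty and $\rho(D',\Psi)$ is undefined) is handled directly: all $\mu(D,\Psi)=\rho_{opt}\cdot|V_D|=\rho_{opt}\cdot|U|$ clique instances are removed, so the inequality holds with equality.

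The argument is essentially one line once the right quantity is written down, so I do not expect any real technical obstacle; the only subtlety is being careful that $D'$ is a legitimate subgraph of $G$ to which the optimality of $\rho_{opt}$ applies, and treating the degenerate empty case. I would not need to appeal to Lemma~\ref{lemma:connectivity} here, since the optimality inequality $\rho(D',\Psi)\le \rho_{opt}$ is valid whether or not $D$ (or $D'$) is connected.
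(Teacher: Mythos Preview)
Your proposal is correct and is essentially the same argument as the paper's, just phrased directly rather than by contradiction: the paper assumes fewer than $\rho_{opt}|U|$ instances are removed and derives $\rho(D\setminus U,\Psi)>\rho_{opt}$, which is exactly the contrapositive of your inequality $\rho(D',\Psi)\le\rho_{opt}\Rightarrow \mu(D,\Psi)-\mu(D',\Psi)\ge\rho_{opt}|U|$. Your explicit treatment of the degenerate case $U=V_D$ is a minor refinement the paper omits.
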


\begin{proof}
We prove the lemma by contradiction. Assume that $D$ is the CDS and the removal of $U$ results in removing less than $\rho_{opt}\times |U|$ clique instances. Then, after removing $U$ from $V_D$, the clique-density of the residual graph
(denoted by $D\backslash U$) becomes:
\begin{equation}
\small
\rho (D\backslash U,\Psi ) = \frac{{\mu (D\backslash U,\Psi )}}{{|{V_D}| - |U|}} > \frac{{{\rho _{opt}}|{V_D}| - {\rho _{opt}}  |U|}}{{|{V_D}| - |U|}} = {\rho _{opt}}.
\end{equation}
However, this contradicts the assumption that $D$ is the CDS. Hence, the lemma holds.
\end{proof}

Based on the lemma above, we show an upper bound of $\rho_{opt}$.
\begin{lemma}
\label{lemma:upperbound}
Given a graph $G$, an $h$-clique $\Psi(V_\Psi, E_\Psi)$, and its maximum clique-core number $k_{\max}$, we have:
\begin{equation}
\rho_{opt} \le {k_{\max}}.
\end{equation}
\end{lemma}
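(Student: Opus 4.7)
The plan is to deduce the bound directly from Lemma~\ref{lemma:prune} by applying it to singleton sets. Intuitively, Lemma~\ref{lemma:prune} says that every vertex of the CDS $D$ must ``carry'' at least $\rho_{opt}$ clique instances, which is exactly the local condition defining a ($k$, $\Psi$)-core of order $\lceil \rho_{opt}\rceil$. So $D$ itself certifies the existence of a large-order core, forcing $k_{\max}$ to be at least $\rho_{opt}$.

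More concretely, I would proceed as follows. First, fix an arbitrary vertex $v \in V_D$ and apply Lemma~\ref{lemma:prune} with $U = \{v\}$: removing $v$ from $D$ destroys at least $\rho_{opt}\cdot|U| = \rho_{opt}$ clique instances. But the number of clique instances destroyed by removing $v$ from $D$ is exactly $deg_D(v,\Psi)$, the clique-degree of $v$ inside $D$. Hence
\begin{equation}
\small
deg_D(v,\Psi)\ \geq\ \rho_{opt}\qquad\text{for every } v\in V_D.
\end{equation}
Since clique-degrees are integers, this actually gives $deg_D(v,\Psi)\geq \lceil \rho_{opt}\rceil$ for every $v\in V_D$.

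Next, I would invoke the definition of $(k,\Psi)$-core (Definition~\ref{def:kpsicore}). The inequality above says that $D$ is a subgraph in which every vertex has clique-degree at least $\lceil \rho_{opt}\rceil$ with respect to $\Psi$. By the maximality clause of Definition~\ref{def:kpsicore}, the $(\lceil \rho_{opt}\rceil,\Psi)$-core $\mathcal{R}_{\lceil \rho_{opt}\rceil}$ of $G$ must contain $D$, and in particular is nonempty. Therefore the clique-core number of any $v\in V_D$ is at least $\lceil \rho_{opt}\rceil$, which yields
\begin{equation}
\small
k_{\max}\ \geq\ \lceil \rho_{opt}\rceil\ \geq\ \rho_{opt},
\end{equation}
as desired.

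I do not expect any real obstacle here; the proof is essentially a one-line consequence of Lemma~\ref{lemma:prune}. The only mildly delicate point is making sure that the integrality of clique-degrees is handled cleanly when passing from the real-valued lower bound $\rho_{opt}$ to an integer order $k$ for which $\mathcal{R}_k$ is witnessed as nonempty by $D$; taking the ceiling is enough.
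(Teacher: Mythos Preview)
Your proposal is correct and follows essentially the same approach as the paper: both arguments apply Lemma~\ref{lemma:prune} with singleton $U=\{v\}$ to conclude that every vertex of $D$ has clique-degree at least $\rho_{opt}$ inside $D$, and then use this to witness a nonempty $(k,\Psi)$-core of sufficiently high order. The only cosmetic difference is that the paper phrases it as a proof by contradiction (assuming $\rho_{opt}>k_{\max}$ and deriving a $(k_{\max}+1,\Psi)$-core), whereas you argue directly via the ceiling; the underlying idea is identical.
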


\begin{proof}
We prove the lemma by contradiction. Suppose that we have $\rho_{opt} \textgreater {k_{\max}}$.
From Lemma \ref{lemma:prune}, we know that removing any vertex of $D$ will result in the removal of at least $\rho_{opt}$ clique instances, or more than $k_{\max}$ clique instances from $D$. In other words, each vertex of $D$ participates in at least $k_{\max}$+1 clique instances. This contradicts the fact that $k_{\max}$ is the maximum clique-core number. Hence, the value of $\rho_{opt}$ is at most ${k_{\max}}$.
\end{proof}

\noindent
{\bf Proof of \textsc{Theorem}~\ref{thm:coreBounds}}:
The upper bound follows by Lemma~\ref{lemma:upperbound}. Let us focus on the lower bound.
Let $r_k$ be the number of vertices in ${\mathcal R}_k$. By Definition~\ref{def:kpsicore}, since ${\mathcal R}_k$ is a ($k$, $\Psi$)-core, each vertex $v$ of ${\mathcal R}_k$ participates in at least $k$ clique instances. Meanwhile, each clique instance involves $|V_\Psi|$ vertices. As a result, there are at least $\frac{{k \times r_k}}{{|{V_\Psi}|}}$ clique instances in ${\mathcal R}_k$. Thus, we have $\rho ({\mathcal R}_k,\Psi ) \geq \frac{k}{{|{V_\Psi}|}}$. \qed

To further illustrate Theorem \ref{thm:coreBounds}, we give Example \ref{eg:bound}.

\begin{figure}[ht]
	\centering
	\includegraphics[width=0.78\linewidth]{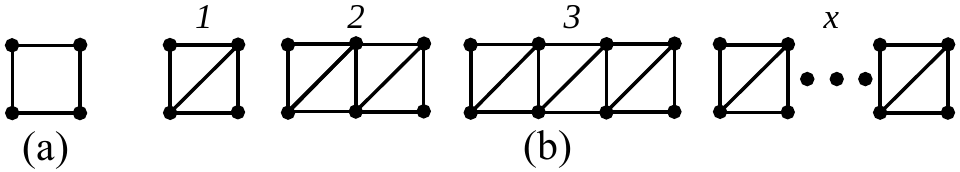}
    \vspace{-0.10in}
	\caption{Illustrating the lower and upper bounds. }
	\label{fig:bound}
\end{figure}

\begin{example}
\label{eg:bound}
{\em
Let $\Psi$ be an edge and consider the $k_{\max}$-core with $k_{\max}$=2. By Theorem \ref{thm:coreBounds}, the lower and upper bounds of the density of $k_{\max}$-core are 1 and 2 respectively. These bounds are attained by graphs in Figures \ref{fig:bound}(a) and \ref{fig:bound}(b) respectively. In Figure \ref{fig:bound}(a), the density of the $k_{\max}$-core is 4/4=1. In Figure \ref{fig:bound}(b), there is a list of graphs with $k_{\max}$=2, and the density values of $k_{\max}$-cores in the 1st, 2nd, $\cdots$, $x$-th graphs are $\frac{1+4}{2+2}$, $\frac{1+8}{2+4}$, $\cdots$, $\frac{1+4x}{2+2x}$, respectively. Clearly, when $x$$\rightarrow$$\infty$, the density converges to 2.
}
\qed
\end{example}

\subsection{($k$, $\Psi$)-core Decomposition}
\label{sec:decompose}

Inspired by the $k$-core decomposition algorithm~\cite{kcore2003}, we develop an efficient ($k$, $\Psi$)-core decomposition algorithm for computing the clique-core number of each vertex. The algorithm exploits a key observation that, if we recursively remove vertices whose clique-degrees are less than a non-negative integer $k$, then the remaining graph, if non-empty, must be the ($k$, $\Psi$)-core.

Specifically, we first compute the clique-degree of each vertex, and sort vertices in increasing order of their clique-degrees. Then, we iteratively remove the vertex $v$ whose clique-degree is the smallest in each iteration, until the graph is empty. In each iteration, after removing $v$, we need to decrease the clique-degrees of vertices, which share clique instances with $v$, and re-sort the vertices. Notice that by using the bin-sort technique \cite{kcore2003}, sorting all the vertices takes linear time and re-sorting can also be done efficiently.

\begin{algorithm}
\small
\caption{$(k,\Psi)$-core decomposition.}
\label{alg:core}
\KwIn{$G(V, E)$, $\Psi(V_\Psi, E_\Psi)$;}
\KwOut{The clique-core number of each vertex;}
initialize $core[\text{ }]\gets$ an array with $n$ entries\;
\textbf{for} \emph{each vertex} $v\in V$ \textbf{do} compute its clique-degree $deg_G(v, \Psi)$\;
sort vertices of $V$ in increasing order of their clique-degrees\;
\While{$V$ is not empty}{
    $core[v]\gets deg_G(v, \Psi)$ where $v$ has the minimum clique-degree\;
    \For{each clique instance $\psi$ containing $v$}{
        \For{each vertex $u$ in $\psi$}{
            \If{$deg_G(u, \Psi)\textgreater deg_G(v, \Psi)$}{
                decrease $u$'s clique-degree;
            }
        }
    }
    update $G$ by removing $v$ and its incident edges\;
    resort the vertices in $V$\;
}
\Return the array $core[\text{ }]$;
\end{algorithm}

Algorithm~\ref{alg:core} presents the core decomposition algorithm.
First, we initialize an array $core[\text{ }]$ and compute the clique-degree of each vertex (lines 1-2).
Then, we sort all the vertices in increasing order (line 3).
Next, we recursively remove the vertex $v$ whose clique-degree is the smallest (lines 4-11).
In each iteration, we record $v$'s clique-core number (line 5), decrease the clique-degrees of vertices in $v$'s clique instances as removing $v$ causes the deletion of some clique instances (lines 6-9), update $G$, and resort vertices (lines 10-11).
Finally, we return $core[\text{ }]$ (line 12).

To compute the clique-degrees of all the vertices, we can first run an $h$-clique enumeration algorithm, and then compute the clique-degree of each vertex by listing all the $h$-cliques. During the core decomposition process, after removing a vertex $v$, we can first locate the subgraph induced by $v$ and its neighbors, then enumerate all the $h$-cliques in this subgraph, and finally decrease the clique-degrees of the vertices involved. In this paper, we use the state-of-the-art $h$-clique enumeration algorithm~\cite{Mauro1988}.

\begin{lemma}
\label{lemma:coreTime}
Given a graph $G$ and an $h$-clique $\Psi(V_\Psi, E_\Psi)$, the core decomposition algorithm above completes in ${\mathcal O}\left( {n \cdot {{d-1} \choose {h-1}}}\right)$ time and ${\mathcal O}\left(m\right)$ space.
\end{lemma}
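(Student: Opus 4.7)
\textsc{Proof proposal.} The plan is to bound the cost of each of the three logical phases of Algorithm~\ref{alg:core} separately and then argue that the bin-sort bookkeeping of Batagelj and Zaver\v{s}nik~\cite{kcore2003} keeps the ordering overhead linear. The dominant quantity throughout will be the total number of $h$-clique instances in $G$, for which I will first establish the worst-case bound $\mathcal{O}\!\left(n \cdot \binom{d-1}{h-1}\right)$ by observing that any $h$-clique containing a vertex $v$ is determined by picking $h-1$ further vertices from $N(v)$, so $v$ lies in at most $\binom{\deg_G(v)}{h-1}\le \binom{d-1}{h-1}$ cliques once one neighbor is fixed as the pivot, and summing over all $n$ vertices gives the claimed bound.

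Next I would account for the initial clique-degree computation in lines~1--2. Using the state-of-the-art $h$-clique enumerator of Chiba--Nishizeki~\cite{Mauro1988} cited by the authors, every $h$-clique is listed in amortized constant time, and during the listing we simply increment a counter $deg_G(v,\Psi)$ on each of the $h$ vertices of the reported clique; the total work is thus proportional to the number of $h$-cliques, i.e.\ $\mathcal{O}\!\left(n \cdot \binom{d-1}{h-1}\right)$. Sorting in line~3 uses bin-sort on the (integer) clique-degrees, so it is $\mathcal{O}(n + \text{max clique-degree}) = \mathcal{O}\!\left(n \cdot \binom{d-1}{h-1}\right)$.

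For the peeling loop (lines 4--11) I will charge the work of one iteration to the $h$-cliques that are destroyed when $v$ is removed. Locating those cliques does not require storing them from the first phase: we re-enumerate $h$-cliques inside the subgraph induced by $v$ and its surviving neighbors, which again by~\cite{Mauro1988} takes time linear in the number of such cliques. For each destroyed clique we decrement the clique-degree of its $h-1$ remaining vertices (constant work per decrement), and a single bin-sort swap suffices to move each updated vertex to its new bucket (the standard $k$-core trick of~\cite{kcore2003}). Summed over all iterations, each $h$-clique of $G$ is destroyed exactly once, so the total peeling cost is $\mathcal{O}\!\left(n \cdot \binom{d-1}{h-1}\right)$, which dominates the removal of incident edges and the bin-sort maintenance.

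For space, observe that we never store the set of $h$-cliques explicitly: the adjacency representation of $G$ plus the arrays $core[\,]$, $deg_G(\cdot,\Psi)$, and the bin-sort buckets all take $\mathcal{O}(n+m)=\mathcal{O}(m)$ room. The main obstacle I anticipate is making the re-enumeration step rigorous, i.e.\ arguing that when $v$ is deleted we can identify exactly the $h$-cliques containing $v$ without traversing the whole graph; this is handled by running the clique lister of~\cite{Mauro1988} on $G[\{v\}\cup N_G(v)]$ and invoking the per-clique constant-amortized-time guarantee, so that the analysis telescopes cleanly over the peeling order.
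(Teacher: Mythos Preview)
Your proposal is correct and follows essentially the same approach as the paper: bound each vertex's clique-degree by $\binom{d-1}{h-1}$, invoke the bin-sort trick of~\cite{kcore2003} for the ordering and re-ordering, and argue $\mathcal{O}(m)$ space because cliques are (re-)enumerated on the fly rather than stored. Your write-up is considerably more detailed than the paper's short paragraph---in particular, your explicit charging of the peeling-loop work to destroyed cliques and the re-enumeration in $G[\{v\}\cup N_G(v)]$ are spelled out, whereas the paper only alludes to this---but the underlying argument is the same.
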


\begin{proof}
For each vertex $v$, we need to compute the number of clique instances it involves, i.e., $deg_G(v,\Psi)$. In the worst case, any $h$--1 neighbors of $v$ can form an $h$-clique with $v$, so $deg_G(v,\Psi)$ is up to ${d-1 \choose h-1}$.
By using the bin-sort technique in \cite{kcore2003}, we can sort vertices of $V$ in linear time cost, and resorting after removing a vertex takes linear time cost to the clique-degree. In addition, computing $deg_G(v,\Psi)$ takes $\mathcal O(m)$ space as we can compute the clique instances sequentially. Hence, Lemma \ref{lemma:coreTime} holds.
\end{proof}

\subsection{Extension and Discussion}
\label{sec:kcoreDiscuss}

The $k$-clique-core can be extended to $k$-pattern-core by incorporating a general pattern (e.g., star, loop, etc.). Let $\Psi$ be a pattern. Then, the ($k$, $\Psi$)-core is the largest subgraph of $G$, in which each vertex participates in at least $k$ instances of $\Psi$. The properties of $k$-clique-cores also hold for $k$-pattern-core. Besides, for any two patterns $\Psi$ and $\Psi'$, if $|V_\Psi|$=$|V_{\Psi'}|$ and $\Psi\subseteq\Psi'$, i.e., $\Psi$ is a subpattern of $\Psi'$, then the ($k$, $\Psi'$)-core is a subgraph of the ($k$, $\Psi$)-core. Algorithm \ref{alg:core} can also be extended for decomposing $k$-pattern-cores. We skip the details due to the space limitation.

Recently, Sariy{\"u}ce et al. studied the $k$-($r$, $s$) nucleus \cite{sariyuce2015finding,sariyuce2016fast,sariyuce2018local}, which is the maximal connected subgraph of the $r$-cliques where each $r$-clique is contained in at least $k$ $s$-cliques ($r$$\textless$$s$). When $\Psi$ is an $h$-clique, our ($k$, $\Psi$)-core can be considered as a special case of $k$-($r$, $s$) nucleus, i.e., $k$-(1, $h$) nucleus, in terms of clique-degree (or $\mathcal S$-degree in \cite{sariyuce2018local}). However, when $\Psi$ is a non-clique, ($k$, $\Psi$)-core is different with the $k$-($r$, $s$) nucleus, because in our ($k$, $\Psi$)-core, $\Psi$ can be an arbitrary pattern, such as clique, star, loop, etc., while $k$-($r$, $s$) nucleus is defined purely based on cliques. In other words, our ($k$, $\Psi$)-core can capture pattern-based dense subgraphs.
A second difference is that a $k$-($r$, $s$) nucleus requires that any two $r$-cliques $R$ and $R'$ are $\mathcal S$-connected: i.e., there exists a sequence of $r$-cliques $R$=$R_1$, $R_2$, $\cdots$, $R_l$=$R'$, such that $R_i$, $R_{i+1}$ are both contained by a specific $s$-clique ($i\in[1,l-1]$).
In addition, when $\Psi$ is an $h$-clique, the nucleus decomposition algorithm \cite{sariyuce2018local} can be applied to decomposing ($k$, $\Psi$)-cores. We will experimentally compare this method with ours in Section~\ref{sec:expKDS}.

\section{Core-Based Approaches}
\label{sec:advanced}

Based on ($k$, $\Psi$)-cores, we develop efficient exact and approximation DSD algorithms. While our exact algorithm, {\tt CoreExact},  is significantly faster than the state-of-the-art algorithm ({\tt Exact}), we can speed it up further by trading accuracy: we develop an efficient approximation algorithm, namely {\tt CoreApp}, which has an approximation ratio of $\frac{1}{{\left| {{V_\Psi }} \right|}}$.

\subsection{The Core-Based Exact Method}
\label{sec:advancedExact}

As shown in Lemma~\ref{lemma:exactTime}, the major limitation of Algorithm {\tt Exact} is its high computational cost.
To address this, in this section we exploit the $k$-clique-cores and propose the following three optimization techniques for boosting the efficiency. 

\noindent{\textbf{\underline{1 Tighter bounds on $\alpha$.}}}
In {\tt Exact}, the value of $\alpha$ is 
within the range $[0, \mathop {\max }\limits_{v \in V} deg_G(v, \Psi)]$.
As discussed in Section~\ref{sec:coredensity}, by using the ($k$, $\Psi$)-cores, we can derive a tighter bound on $\alpha$.
Specifically, consider a ($k_{\max}$, $\Psi$)-core ${\mathcal R}_{k_{\max}}$. By Theorem~\ref{thm:coreBounds}, we can see that $\rho ({\mathcal R}_{k_{\max}},\Psi ) \ge \frac{k_{\max}}{{|{V_\Psi}|}}$, which implies that $\rho_{opt}\ge\frac{k_{\max}}{{|{V_\Psi}|}}$, so the lower bound of $\alpha$ is $\frac{k_{\max}}{{|{V_\Psi}|}}$.
On the other hand, by Lemma~\ref{lemma:upperbound}, we have $\rho_{opt}\le{k_{\max}}$ and thus the upper bound of $\alpha$ is $k_{\max}$. In practice, since $\frac{k_{\max}}{{|{V_\Psi}|}}$ is larger than 0 and $k_{\max}$ is smaller than the maximum clique-degree, the number of binary searches can be greatly reduced by using the tighter bounds.

\noindent{\textbf{\underline{2 Locating the CDS in a core.}}}
Recall that in each binary search of the algorithm {\tt Exact}, the flow network is reconstructed based on the entire graph $G$. This, however, is unnecessary, since the CDS is often in some ($k$, $\Psi$)-cores which could be much smaller than $G$.

\begin{lemma}
\label{lemma:mds}
Given a graph $G$ and an $h$-clique $\Psi$, the CDS is contained in the ($k$, $\Psi$)-core, where $k = \left\lceil {{\rho _{opt}}} \right\rceil$.
\end{lemma}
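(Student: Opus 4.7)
The plan is to show that the densest subgraph $D$ itself satisfies the defining property of a $(k,\Psi)$-core with $k=\lceil\rho_{opt}\rceil$, and then invoke the maximality of $\mathcal{R}_k$ to conclude $D\subseteq\mathcal{R}_k$.

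First, I would apply Lemma~\ref{lemma:prune} to the singleton $U=\{v\}$ for an arbitrary vertex $v\in V_D$. This tells me that deleting $v$ from $D$ destroys at least $\rho_{opt}\cdot|U|=\rho_{opt}$ clique instances. But every clique instance destroyed by removing $v$ must contain $v$, so the number of clique instances containing $v$ in $D$, i.e., $deg_D(v,\Psi)$, is at least $\rho_{opt}$. Since $deg_D(v,\Psi)$ is an integer, I can tighten this to $deg_D(v,\Psi)\geq\lceil\rho_{opt}\rceil=k$.

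Second, since this holds for every $v\in V_D$, the subgraph $D$ is a subgraph of $G$ in which every vertex has clique-degree at least $k$. By Definition~\ref{def:kpsicore}, $\mathcal{R}_k$ is the \emph{largest} such subgraph of $G$. Here I would briefly note the standard observation that the union of two subgraphs satisfying the clique-degree-$\geq k$ property still satisfies it (adding more vertices and edges can only increase clique-degrees), so there is a unique maximal such subgraph, and any subgraph with the property is contained in it. Applying this to $D$ yields $D\subseteq\mathcal{R}_k$.

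There is no real obstacle: the argument hinges on the single observation that the per-vertex consequence of Lemma~\ref{lemma:prune} is an integer lower bound, which lets us jump from the real number $\rho_{opt}$ to the integer $\lceil\rho_{opt}\rceil$ required by the core definition. The only subtlety worth stating explicitly is the maximality/closure-under-union property of $\mathcal{R}_k$, which justifies why ``$D$ satisfies the $(k,\Psi)$-core condition'' actually implies ``$D$ sits inside $\mathcal{R}_k$'' rather than just ``$D$ is one of many candidate cores.''
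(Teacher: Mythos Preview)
Your proposal is correct and follows essentially the same approach as the paper: apply Lemma~\ref{lemma:prune} with a singleton $U=\{v\}$ to get $deg_D(v,\Psi)\geq\rho_{opt}$, round up by integrality, and then invoke the definition of the $(k,\Psi)$-core. Your write-up is in fact more careful than the paper's in two places: you make the integrality step explicit, and you spell out why the maximality of $\mathcal{R}_k$ (via closure under union) forces $D\subseteq\mathcal{R}_k$, whereas the paper simply asserts the containment ``by the definition.''
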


\begin{proof}
By Lemma~\ref{lemma:prune}, deleting any single vertex from CDS will result in the removal of $\left\lceil {{\rho _{opt}}} \right\rceil$ clique instances in CDS. In other words, each vertex of the CDS has participated in $\left\lceil {{\rho _{opt}}} \right\rceil$ clique instances. By the definition of ($k$, $\Psi$)-core, we conclude that the CDS is in the ($k$, $\Psi$)-core, where $k$=$\left\lceil {{\rho _{opt}}}\right\rceil$.
\end{proof}

As the value of $\rho _{opt}$ may not be known in advance, we can only locate it in the cores using the lower bounds of ${\rho_{opt}}$, by exploiting the nested property of cores.
For example, by Theorem~\ref{thm:coreBounds}, we have ${\rho _{opt}} \ge \frac{k_{\max}}{{|{V_\Psi}|}}$, which implies that the CDS must be in the ($k$, $\Psi$)-core, where $k$=$\left\lceil {\frac{k_{\max}}{{|{V_\Psi}|}}} \right\rceil$. Recall that in the core decomposition process, we delete vertices iteratively and obtain a residual subgraph after removing a vertex. In order to get a tighter lower bound on $\rho_{opt}$, we can compute the densities of these residual subgraphs.

$\bullet$ \emph{Pruning1:} The CDS is in the ($k'$, $\Psi$)-core, where $k'$=$\left\lceil {\rho'} \right\rceil$ and $\rho'$ is the highest $h$-clique-density of all residual graphs. The correctness directly follows Lemma \ref{lemma:mds}, since $\rho'\leq\rho_{opt}$.

Since the ($k'$, $\Psi$)-core may be disconnected and some connected components may be denser than others, we can further locate the CDS in a core with a larger core number, using \emph{Pruning2}.

$\bullet$ \emph{Pruning2:} For each connected component of the ($k'$, $\Psi$)-core, we compute its $h$-clique-density. Let $\rho''$ be the maximum $h$-clique-density of these connected components. If $\left\lceil {\rho ''} \right\rceil\textgreater k'$, we increase $k'$ to $k''$=$\left\lceil {\rho''} \right\rceil$ and the CDS is in the ($k''$, $\Psi$)-core. The correctness holds by Lemma \ref{lemma:mds}, since $\rho'\leq\rho''\leq\rho_{opt}$.

$\bullet$ \emph{Pruning3:} After locating the CDS in a connected component $C(V_C, E_C)$, we can change the stopping criterion of binary search to ``$u-l\textless\frac{1}{|V_C|(|V_C|-1)}$''.
Since $C(V_C$, $E_C)$ contains the CDS and the flow network is built using $C(V_C$, $E_C)$, the pruning is correct by following Algorithm~\ref{alg:basicExact}.

%
%

\noindent{\textbf{\underline{3 The flow network gradually becomes smaller}.}}
During the binary search, since the lower bound $l$ of $\alpha$ is gradually enlarged, we can locate the CDS in cores with larger clique-core numbers. As clique-core numbers increase, the sizes of cores become smaller, so the flow networks constructed become smaller gradually, and the cost of computing the minimum st-cut is greatly reduced.

\begin{algorithm}[]
\small
\caption{The algorithm: {\tt CoreExact}.}
\label{alg:advancedExact}
\KwIn{$G(V,E)$, $\Psi(V_\Psi,E_\Psi)$;}
\KwOut{The CDS $D(V_D, E_D)$;}
perform core decomposition using Algorithm~\ref{alg:core}\;
locate the ($k''$, $\Psi$)-core using pruning criteria\;
$\mathcal C\gets$all the connected components of ($k''$, $\Psi$)-core\;
initialize $D\gets\emptyset$, $U\gets\emptyset$, $l \leftarrow \rho''$, $u\gets k_{\max}$\;
\For{each connected component $C(V_C,E_C)\in\mathcal C$}{
    \textbf{if} $l \textgreater k''$ \textbf{then} $C(V_C,E_C)\gets C\cap(\left\lceil l \right\rceil$, $\Psi$)-core\;
    build a flow network $\mathcal F$($V_{\mathcal F}$, $E_{\mathcal F}$) by lines 5-15 of Algorithm \ref{alg:basicExact}\;
    find minimum st-cut ($\mathcal S$, $\mathcal T$) from $\mathcal F$($V_{\mathcal F}$, $E_{\mathcal F}$)\;
    \textbf{if} {$\mathcal S$=$\emptyset$} \textbf{then} continue\;
    \While{$ u-l\geq \frac{1}{|V_C|(|V_C|-1)} $}{
        $\alpha\gets \frac{l+u}{2}$\;
        build ${\mathcal F}$($V_{\mathcal F}$, $E_{\mathcal F}$) by lines 5-15 of Algorithm \ref{alg:basicExact}\;
        find minimum st-cut ($\mathcal S$, $\mathcal T$) from $\mathcal F$($V_{\mathcal F}$, $E_{\mathcal F}$)\;
        \uIf {$\mathcal S$=$\{s\}$ }{
            $u \gets \alpha$\;
        }\Else{
            \textbf{if} {$\alpha \textgreater \left\lceil {l} \right\rceil$} \textbf{then} remove some vertices from $C$\;
            $l \gets \alpha$\;
            $U\gets\mathcal S\backslash \{s\}$\;
        }
    }
    \textbf{if} $\rho(G[U], \Psi)>\rho(D, \Psi)$ \textbf{then} $D\gets G[U]$\;
}
\Return $D$\;
\end{algorithm}

Combining the three optimization techniques above, we develop an advanced exact algorithm, called {\tt CoreExact}, as presented in Algorithm~\ref{alg:advancedExact}.
We first perform core decomposition and locate the CDS in the ($k''$, $\Psi$)-core (lines 1-2).
Then, we put the connected components of ($k''$, $\Psi$)-core into a set $\mathcal C$,
and initialize some variables including the lower and upper bounds of $\alpha$ (lines 3-4).
Next, in the loop (lines 5-20), we consider the connected components one by one.
Note that the lower bound $l$ is never decreased during the iterations.
If the current lower bound $l\textgreater k''$,
we replace $C$ by the core which has higher clique-core number and is contained by $C$ (line 6).
Intuitively, if $l$ is too large, then $C$ may not contain a subgraph with density $l$ and thus we skip it. Consequently, we build a flow network (line 7), and check whether $l$ is a feasible lower bound (line 8), i.e., whether there exists a subgraph with density at least $l$.

If $C$ cannot be skipped, we use binary search to find the CDS (lines 10-19). Each time we set a guess value of $\rho_{opt}$, namely $\alpha$, and check whether there is a subgraph with density of $\alpha$ or more.
Once we get a larger lower bound (line 16), we locate the CDS in the core with a larger clique-core number, so the network based on $C$ is even smaller. In other words, during the binary search, as the value of $\alpha$ approaches the true value of $\rho_{opt}$, the flow networks constructed become smaller. Finally, we get the CDS (line 21). We further illustrate {\tt CoreExact} by Example \ref{eg:CoreExact}.

\begin{figure}[ht]
	\centering
	\includegraphics[width=0.72\linewidth]{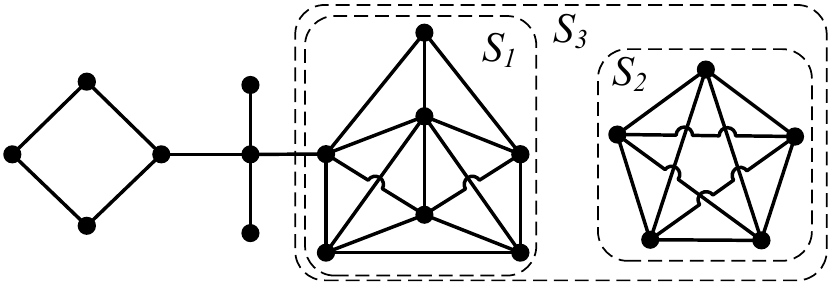}
    \vspace{-0.10in}
	\caption{Illustrating the core-based algorithms.}
    \label{fig:coreAlgo}
\end{figure}

\begin{example}
\label{eg:CoreExact}
{\em
Let $\Psi$ be a single edge and consider the graph in Figure \ref{fig:coreAlgo}, where $k_{\max}$=4.
During core decomposition, we track densities of residual graphs and obtain $\rho'$=25/12$\approx$2.08 (i.e., density of subgraph $S_3$). Thus, we get $\left\lceil {{\rho'}}\right\rceil$=3 and locate the EDS in the 3-core (i.e., subgraph $S_3$). The EDS (i.e., subgraph $S_1$ with density 15/7$\approx$2.14) can be computed by conducting binary search using the flow networks built on the two connected components $S_1$ and $S_2$ of $S_3$, rather than the entire graph, respectively.
}
\qed
\end{example}

\subsection{The Core-Based Approximation Methods}
\label{sec:coreAppAlgo}

Recall that Theorem~\ref{thm:coreBounds} gives the lower and upper bounds of density of a ($k$, $\Psi$)-core. Moreover, for a specific clique $\Psi$, the larger the value of $k$, the higher the lower bound on the density of the corresponding ($k$, $\Psi$)-core. Using Theorem~\ref{thm:coreBounds}, we can show:

\begin{lemma}
\label{lemma:approx}
Given a graph $G$ and an $h$-clique $\Psi(V_\Psi, E_\Psi)$, the ($k_{\max}$, $\Psi$)-core is a $\frac{1}{|V_\Psi |}$-approximation solution to CDS problem.
\end{lemma}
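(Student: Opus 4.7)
The plan is to derive this approximation guarantee directly by combining the density bounds established in Theorem~\ref{thm:coreBounds} together with the upper bound on $\rho_{opt}$ proved in Lemma~\ref{lemma:upperbound}. The key observation is that both bounds are expressed in terms of $k_{\max}$, so they sandwich the density of the $(k_{\max},\Psi)$-core and the optimal density into comparable quantities.

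First, I would instantiate Theorem~\ref{thm:coreBounds} with $k = k_{\max}$. The lower-bound half of the theorem directly gives
\[
\rho(\mathcal{R}_{k_{\max}}, \Psi) \;\geq\; \frac{k_{\max}}{|V_\Psi|}.
\]
Second, I would invoke Lemma~\ref{lemma:upperbound}, which states that $\rho_{opt} \leq k_{\max}$. Combining these two inequalities, the ratio of the density attained by the $(k_{\max},\Psi)$-core to the optimal density satisfies
\[
\frac{\rho(\mathcal{R}_{k_{\max}}, \Psi)}{\rho_{opt}} \;\geq\; \frac{k_{\max}/|V_\Psi|}{k_{\max}} \;=\; \frac{1}{|V_\Psi|},
\]
which is exactly the claimed approximation ratio. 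One minor edge case to handle is when $k_{\max} = 0$, in which case $\rho_{opt} = 0$ as well (by Lemma~\ref{lemma:upperbound}), and the $(0,\Psi)$-core is the whole graph, so the approximation statement holds trivially or can be interpreted vacuously.

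Because the two ingredients (the lower bound on $\rho(\mathcal{R}_{k_{\max}},\Psi)$ and the upper bound on $\rho_{opt}$) have already been proved in this section, there is no real obstacle to the argument; the proof is essentially a one-line division. The conceptual content of the lemma is in fact the tightness of Theorem~\ref{thm:coreBounds} applied at $k = k_{\max}$ — it is precisely the matching of the lower bound $k/|V_\Psi|$ with the global upper bound $k_{\max}$ at the extremal value $k = k_{\max}$ that yields the factor $1/|V_\Psi|$.
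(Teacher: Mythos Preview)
Your proof is correct and matches the paper's approach essentially line-for-line: the paper also applies Theorem~\ref{thm:coreBounds} at $k=k_{\max}$ for the lower bound, invokes $\rho_{opt}\le k_{\max}$ (Lemma~\ref{lemma:upperbound}) for the upper bound, and divides. Your handling of the $k_{\max}=0$ edge case is a small addition the paper omits, but otherwise the arguments are identical.
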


\begin{proof}
By Theorem~\ref{thm:coreBounds}, 
we have $\frac{k_{\max}}{{|{V_\Psi}|}} \le \rho ({\mathcal R}_{k_{\max}},\Psi ) \le {k_{\max}}$.
Using the fact that $\rho_{opt} \leq k_{\max}$, we have
\begin{equation}
\small
\frac{{\rho ({{\mathcal R}_{{k_{\max }}}},\Psi )}}{{{\rho _{opt}}}} \ge \frac{{{k_{\max }}/\left| {{V_\Psi}} \right|}}{{{k_{\max }}}} = \frac{1}{{\left| {{V_\Psi}} \right|}}.
\end{equation}
The lemma follows.
\end{proof}

To compute the ($k_{\max}$, $\Psi$)-core, we can use the core decomposition method discussed in Section~\ref{sec:decompose}, which computes all the cores in an incremental manner. We denote this approximation algorithm by {\tt IncApp} (see  Algorithm~\ref{alg:IncApp}). Clearly, it has the same time complexity as the core decomposition algorithm.

\begin{algorithm}
\small
\caption{The algorithm: {\tt IncApp}.}
\label{alg:IncApp}
\KwIn{$G(V, E)$, $\Psi(V_\Psi, E_\Psi)$;}
\KwOut{The ($k_{\max}$, $\Psi$)-core;}
run ($k$, $\Psi$)-core decomposition algorithm (Section~\ref{sec:decompose})\;
\Return the ($k_{\max}$, $\Psi$)-core;
\end{algorithm}

A subtle point is that although the ($k_{\max}$, $\Psi$)-core is dense and provides an approximation solution, the CDS may not be in the ($k_{\max}$, $\Psi$)-core or even share some vertices with it.
For example, in Figure~\ref{fig:coreAlgo}, let $\Psi$ be a single edge. Then, the subgraph $S_2$ is the $k_{\max}$-core ($k_{\max}$=4), but the EDS is the subgraph $S_1$.

To further improve efficiency, we propose another method, called {\tt CoreApp}. Unlike {\tt IncApp} which computes all the cores, it focuses on computing the ($k_{\max}$, $\Psi$)-core directly. It relies on a key observation that the ($k_{\max}$, $\Psi$)-core often tends to be a subgraph of vertices with higher clique-degrees. We thus propose to discover the CDS from a sequence of subgraphs induced by vertices, whose clique-degrees are the largest. Moreover, once we find a core with higher clique-core number, we can prune some subgraphs, whose vertices' clique-degrees are too small. Thus, the CDS can be discovered efficiently. We remark that for $h$-cliques where $h$$\geq$3, computing the clique-degree $deg_G(v, \Psi)$ may be costly. Instead, we replace it by an upper bound $\gamma(v,\Psi)$, which can be computed more efficiently. Specifically, we run the $k$-core decomposition algorithm~\cite{kcore2003}, and for each vertex $v$ in an $x$-core, we set $\gamma(v,\Psi)$=${x \choose h-1}$.

\begin{algorithm}
\small
\caption{The algorithm: {\tt CoreApp}.}
\label{alg:CoreApp}
\KwIn{$G(V, E)$, $\Psi(V_\Psi, E_\Psi)$;}
\KwOut{The ($k_{\max}$, $\Psi$)-core;}
\textbf{for} $\forall v\in V$ \textbf{do} compute $\gamma(v,\Psi)$ of $deg_G(v, \Psi)$\;
sort vertices of $V$ in decreasing order of their $\gamma(v,\Psi)$ values\;
initialize $W$, $k_{\max}\gets$0, $S^*\gets\emptyset$\;
\While{$\mathop {\max }\limits_{v \in V\backslash W} {\gamma(v, \Psi)} \geq k_{\max}$}{
    \textbf{for} $\forall v\in W$ \textbf{do} compute $deg_{G[W]}(v, \Psi)$\;
    $k_l\gets\mathop {\min }\limits_{v \in W} {deg _{G[W]}}(v,\Psi )$,
    $k_u\gets\mathop {\max }\limits_{v \in W} {deg _{G[W]}}(v,\Psi )$\;
    $k\gets \max\{k_l,k_{\max}+1\}$\;
    \While{$k\leq k_u$ and $|W|\textgreater$0}{
        \While{$(\exists v\in W$, $deg_{G[W]}\textless k)$}{
            delete $v$ from $W$ and decrease clique-degrees\;
        }
        \If {$|W|\textgreater 0$} {
            \If {$k \textgreater k_{\max}$} {
                $k_{\max}\gets k$, $S^*\gets G[W]$\;
            }
            $k\gets k$+1\;
        }
    }
    $W\gets$ top-(2$\times|W|$) vertices in $V$\;
}
\Return $S^*$;
\end{algorithm}

Algorithm~\ref{alg:CoreApp} presents {\tt CoreApp}. First, we compute $\gamma(v)$ for each vertex, and sort vertices based on their $\gamma(v)$ values (lines 1-2). Then, we initialize three variables $W$, $k_{\max}$, and $S^*$, where $W$ keeps a set of vertices whose clique-degrees are the largest, and $S^*$ is used to track the ($k_{\max}$, $\Psi$)-core (line 3), which is computed from the vertex-induced subgraph $G[W]$.
Next, we compute the ($k_{\max}$, $\Psi$)-core in $G$, by using a while loop (lines 4-15).
Specifically, we first compute the exact clique-degree for each vertex in $G[W]$, and record the minimum and maximum clique-degrees (lines 6-7).
Then, we perform core decomposition for $G[W]$ with clique-core numbers in $[k_l, k_u]$ (lines 8-15), during which the maximum clique-core number $k_{\max}$ and ($k_{\max}$, $\Psi$)-core are kept (lines 13-14).
After that, we double the size of $W$ for the next iteration (line 15).
The loop can be stopped safely by using the stopping criterion (line 4).
Finally, we get ($k_{\max}$, $\Psi$)-core (line 16).
Note that $k_{\max}$ tracks the maximum clique-core number during the iterations, and for each subgraph $G[W]$, we focus on finding cores with core numbers larger than the previous $k_{\max}$ (line 7).

\noindent\textbf{Correctness.} Essentially, {\tt CoreApp} finds the ($k_{\max}$, $\Psi$)-core from a sequence of subgraphs induced by vertices in $W$ which have the largest clique-degrees. For each small subgraph $G[W]$, it computes the core with the highest core number by running core decomposition steps (lines 7-14). The stopping criterion (line 4) ensures that the ($k_{\max}$, $\Psi$)-core is correctly computed, i.e., since the maximum clique-degree of all the remaining vertices (in the set $V$$\backslash$$W$) is less than $k_{\max}$, their clique-core numbers must be less than $k_{\max}$.

\begin{lemma}
\label{lemma:CoreApp}
The time and space complexities of {\tt CoreApp} are ${\mathcal O}\left( {n \cdot {{d-1} \choose {h-1}}}\right)$ and ${\mathcal O}\left(m\right)$ respectively.
\end{lemma}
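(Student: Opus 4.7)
The plan is to bound each piece of Algorithm~\ref{alg:CoreApp} separately and then sum up the contributions. The space bound is the easier half: we keep the adjacency structure of $G$ (which costs $O(m)$), a few arrays of size $O(n)$ for $\gamma(\cdot,\Psi)$ and clique-degree counters, and the current working subgraph $G[W]$. Crucially, as in Lemma~\ref{lemma:coreTime}, clique instances are enumerated on demand rather than materialized, so the clique-degree updates do not require additional storage beyond $O(m)$.

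For the time bound, first I would handle the preprocessing in lines~1--3. Running classical $k$-core decomposition once on $G$ gives, for each vertex $v$ in an $x$-core, the value $\gamma(v,\Psi) = \binom{x}{h-1}$; this costs $O(m)$. Since the $\gamma$-values are bounded integers, bin-sort yields the ordering in $O(n)$ additional time.

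Next I would analyze the outer loop (lines 4--15). Let $W_1 \subseteq W_2 \subseteq \cdots \subseteq W_T$ be the sequence of working sets, with $|W_{i+1}| = 2|W_i|$ and $|W_T| \le n$. Inside one iteration the costly steps are: computing $deg_{G[W_i]}(v,\Psi)$ for every $v \in W_i$ (line 5), and the inner $(k,\Psi)$-core peeling (lines 8--14). Applying the argument of Lemma~\ref{lemma:coreTime} to the induced subgraph $G[W_i]$, both steps together cost $O\!\bigl(|W_i|\cdot\binom{d-1}{h-1}\bigr)$, since every vertex has at most $\binom{d-1}{h-1}$ clique instances incident to it in $G$, and peeling processes each such instance a constant number of times. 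Summing over all iterations and exploiting the geometric doubling,
\begin{equation}
\sum_{i=1}^{T} |W_i|\cdot\binom{d-1}{h-1} \;\le\; 2|W_T|\cdot\binom{d-1}{h-1} \;=\; O\!\left(n\cdot\binom{d-1}{h-1}\right).
\end{equation}
Combined with the $O(m)$ preprocessing and the fact that $m \le n\binom{d-1}{h-1}$ when $h\ge 2$, this yields the claimed time bound.

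The main obstacle I anticipate is justifying that recomputing clique-degrees ``from scratch'' on each successive $W_i$ does not inflate the total cost. This is precisely where the doubling schedule in line~15 matters: because $|W_i|$ grows geometrically, the total work telescopes into $O(|W_T|)$ times the per-vertex cost, so the final iteration dominates. Once this is in place, the time and space claims follow directly.
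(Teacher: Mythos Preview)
Your proposal is correct and follows essentially the same line as the paper's proof: both exploit the geometric doubling of $|W_i|$, bound the cost of iteration $i$ by $O\!\bigl(|W_i|\cdot\binom{d-1}{h-1}\bigr)$ via the core-decomposition argument of Lemma~\ref{lemma:coreTime}, and sum the resulting geometric series to $O\!\bigl(n\cdot\binom{d-1}{h-1}\bigr)$; the $O(m)$ space bound likewise comes from running the iterations sequentially without materializing clique instances. Your write-up is somewhat more explicit about the preprocessing cost and the space accounting, but the substance is identical.
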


\begin{proof}
Let the number of iterations be $t$. Since we adopt the exponential growth strategy, the numbers of vertices involved in these iterations are at most
${\left( {\frac{1}{2}} \right)^{t - 1}} \cdot n,{\left( {\frac{1}{2}} \right)^{t - 2}} \cdot n, \cdots, n$ respectively, which form a geometric sequence. In the $i$-th iteration $i\in$[1, $t$], it takes ${\mathcal O}\left( {{\left( {\frac{1}{2}} \right)^{t - i}}\cdot n \cdot {{d-1} \choose {h-1}}}\right)$ time and ${\mathcal O}(m)$ space, as it performs core decomposition. By summarizing the time cost of all iterations, we obtain
${\mathcal O}\left( {2 \cdot n \cdot {d-1 \choose h-1}}\right)$ =
${\mathcal O}\left( {n \cdot {d-1 \choose h-1}}\right)$. The space cost is ${\mathcal O}(m)$, since the iterations are sequentially executed.
\end{proof}

Although, {\tt CoreApp} has almost the same {\sl worst-case cost} as {\tt PeelApp} and {\tt IncApp}, it performs much faster in practice, because the CDS is often much smaller than $G$ and thus only a few subgraphs are examined in the iterations. As shown by our experiments next, {\tt CoreApp} is up to two orders of magnitude faster than {\tt PeelApp} and {\tt IncApp}. Moreover, the approximation algorithms generate high-quality solutions -- their actual approximation ratios are often much higher than their theoretical approximation ratios.

\noindent\textbf{Remark.}
In \cite{cheng2011efficient}, Cheng et al. present an external-memory core decomposition algorithm, called {\tt EMcore}, which also works in a top-down manner. However, there are four differences between {\tt CoreApp} and {\tt EMcore}:
(1) {\tt CoreApp} can handle any $h$-clique- and pattern-cores, while {\tt EMcore} is developed for processing the classical (edge-based) $k$-cores.
(2) {\tt CoreApp} focuses on computing the ($k_{\max}$, $\Psi$)-core while {\tt EMcore} decomposes all $k$-cores.
(3) The methods of estimating upper bounds of core numbers are different.
(4) In the worst case, for classical $k$-cores, {\tt CoreApp} takes $\mathcal O(n+m)$ time while {\tt EMcore} takes $\mathcal O(k_{\max}(n+m))$ time, since both of them conduct core decomposition for a sequence of subgraphs, but the strategies of considering subgraphs are different.
Our later experiments show that for computing the $k_{\max}$-core, {\tt CoreApp} is faster than {\tt EMcore}.

\subsection{Discussions}
\label{sec:advancedDiscuss}

Below, we discuss the parallelizability of our algorithms and show that our algorithms can solve a variant of the CDS problem.

\noindent\textbf{Parallelizability.} The existing parallel $k$-core decomposition algorithms \cite{montresor2013distributed,mandal2017distributed,sariyuce2018local} can be easily extended for decomposing ($k$, $\Psi$)-cores, so our approximation solutions, which rely on the ($k_{\max}$, $\Psi$)-core, can be computed in parallel. Moreover, for the exact solution {\tt CoreExact}, the main overhead comes from the step of computing the minimum st-cut. The parallel algorithms of computing the minimum st-cut have been studied extensively \cite{johnson1987parallel,pham2005distributed}, so our exact algorithm can also be easily parallelized.

\noindent\textbf{A variant of CDS problem.}
In \cite{tsourakakis2015k}, Tsourakakis et al. studied a variant of the densest $k$ subgraph problem \cite{bhaskara2010detecting,andersen2009finding}, which aims to find a subgraph that contains a given set $Q$ of $k$ query vertices ($|Q|$=$k$) with the highest density, and its exact solution follows the framework of the exact solution of CDS problem by solving a maximum flow problem. 
To solve this problem with edge-density, we can first decompose $k$-cores and get the minimum core number $x$ of these $k$ vertices. Then, then lower bound of the edge-density of $x$-core is $\frac{x}{2}$ by Theorem~\ref{thm:coreBounds}.
Since $x$-core contains $Q$, we get a lower bound of $\rho_{opt}$ which is $\frac{x}{2}$. As a result, we can locate the densest subgraph in $\frac{x}{2}$-core, so we can build a flow network on $\frac{x}{2}$-core, rather than the entire graph, resulting in higher efficiency.

\section{The PDS Problem and Solutions}
\label{sec:pds}

A {\it pattern} (a.k.a. {\it motif} or {higher-order structure}) is a small graph containing a few vertices (e.g., a diamond in Figure~\ref{fig:intro}(b)).  These patterns can be considered as building blocks of knowledge graphs or biological databases~\cite{nature2003,hu2019discovering,fang2018spatialICDE}.  Compared to graph edges, they can better capture the intricate relationship among vertices, as well as the underlying rich semantics.
For example, in a protein interaction network, proteins are often organized in cohesive patterns of interactions, each of which represents some particular functions~\cite{nature2003}. We now study the discovery of {\it pattern-aware densest subgraphs}, i.e., subgraphs that are ``dense'' in terms of the number of patterns. We term this pattern densest subgraph (PDS) problem and show how our previous CDS solutions can be adapted.

%

\subsection{The PDS Problem}
\label{sec:pdsProblem}

We generalize the $h$-clique to a general pattern, which is a connected simple graph $\Psi(V_\Psi, E_\Psi)$. We formally introduce definitions of pattern instance and pattern-density below.

\begin{definition}[Subgraph isomorphism]
\label{def:iso}
A graph $G(V$, $E)$ is subgraph isomorphic to a pattern $\Psi(V_\Psi$, $E_\Psi)$ if there exists an injection $\phi$:$V_\Psi\rightarrow V$, such that for all $v,v' \in V_{\Psi}$, if $(v,v')\in E_{\Psi}$, then $(\phi(v), \phi(v')) \in E$.
\end{definition}

\begin{definition}[Pattern instance]
\label{def:instance}
Given a graph $G(V$, $E)$ and a pattern $\Psi(V_\Psi$, $E_\Psi)$, a subgraph $S(V_S, E_S)\subseteq G$
is a pattern instance of $\Psi$, if $S$ is isomorphic to $\Psi$.
\end{definition}

\begin{definition}[Pattern-degree]
\label{def:patterndegree}
Given a graph $G(V,E)$ and a pattern $\Psi$, the pattern-degree of a vertex $v$,
or $deg_{G}(v, \Psi)$, is the number of pattern instances of $\Psi$ containing $v$.
\end{definition}

Clearly, $G$ is subgraph isomorphic to $\Psi$ iff it has a subgraph $S(V_S, E_S)$ that is isomorphic to $\Psi$. Note that $S$ may not be a vertex-induced subgraph, although we note that our algorithms can be easily adapted for the vertex-induced case.
Due to symmetry, for a single subgraph $S$ of $G$, there may be multiple mappings witnessing that $\Psi$ is isomorphic to $S$, which are automorphisms, but in this case we do not distinguish between different automorphisms of $S$ and instead count instances based on the edge set.

\begin{definition}[Pattern-density]
\label{def:patterndensity}
Given a graph $G(V,$ $E)$ and a pattern $\Psi(V_\Psi, E_\Psi)$, the pattern-density of $G$ w.r.t. $\Psi$ is
$\rho(G, \Psi)=\frac{\mu(G, \Psi)}{|V|}$,
where $\mu(G, \Psi)$ is the number of pattern instances of $\Psi$ in $G$.
\end{definition}

\begin{problem}[PDS Problem]
\label{prob:PDS}
Given a graph $G(V, E)$ and a pattern $\Psi(V_\Psi,E_\Psi)$, return the subgraph $D$ of $G(V, E)$,
whose pattern-density $\rho(D, \Psi)$ is the highest.
\end{problem}

For example, consider the graph in Figure~\ref{fig:flow}(a) and let $\Psi$ be the diamond pattern (Figure~\ref{fig:intro}(b)). Then, the subgraph of $\{A,D,E,F\}$ is the densest subgraph, which contains three pattern instances (Figure~\ref{fig:flow}(c)) and has the highest pattern-density.

\subsection{Algorithms for PDS Problem}
\label{sec:pdsAlgorithm}

\noindent\textbf{Approximation methods.} To compute the approximate PDS's, we can directly adapt algorithm {\tt PeelApp} by replacing the steps of computing clique instances (clique-degrees) by pattern instances (pattern-degrees). The correctness is guaranteed by Lemma~\ref{lemma:peelApp}. Similarly, {\tt IncApp} and {\tt CoreApp} can be adapted.

\begin{lemma}
\label{lemma:peelApp}
Given a graph $G$ and a pattern $\Psi(V_\Psi, E_\Psi)$, the subgraph $S^*$ returned by {\tt PeelApp} is a $\frac{1}{|V_\Psi |}$-approximation solution to the PDS problem w.r.t. pattern-density for pattern $\Psi$.
\end{lemma}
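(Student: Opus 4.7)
The plan is to follow the standard Charikar-style analysis, adapting the argument used by Tsourakakis et al.\ for $h$-cliques to arbitrary patterns. First I would verify that the key counting identity still holds: since each pattern instance of $\Psi$ contains exactly $|V_\Psi|$ vertices, double counting gives $\sum_{v \in V_G} \deg_G(v,\Psi) = |V_\Psi| \cdot \mu(G,\Psi)$ for any subgraph $G$. This is the only ingredient from the clique setting that needs re-justification when moving to patterns, and it is immediate.

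Next I would re-establish the pattern analog of Lemma~\ref{lemma:prune}. The proof in the excerpt only uses the identity $\rho(G,\Psi) = \mu(G,\Psi)/|V_G|$ together with a contradiction against optimality of the PDS, so the same argument goes through verbatim for a general pattern $\Psi$. In particular, taking $|U|=1$, one concludes that for every vertex $v$ of the optimal PDS $D^*$, we have $\deg_{D^*}(v,\Psi) \geq \rho_{opt}$.

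With these two tools in hand, I would analyze the execution of \texttt{PeelApp} on $G$. Consider the first iteration in which the peeled vertex $v^*$ belongs to $V_{D^*}$, and let $S$ be the current subgraph at the start of that iteration. By choice of $v^*$, we have $V_{D^*} \subseteq V_S$, hence $\deg_S(v^*,\Psi) \geq \deg_{D^*}(v^*,\Psi) \geq \rho_{opt}$. Because \texttt{PeelApp} always removes the vertex of minimum pattern-degree, every other vertex $u \in V_S$ satisfies $\deg_S(u,\Psi) \geq \deg_S(v^*,\Psi) \geq \rho_{opt}$. Summing and applying the double-counting identity yields
\begin{equation}
|V_\Psi|\cdot \mu(S,\Psi) \;=\; \sum_{u \in V_S} \deg_S(u,\Psi) \;\geq\; \rho_{opt}\cdot |V_S|,
\end{equation}
so $\rho(S,\Psi) \geq \rho_{opt}/|V_\Psi|$. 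Since the algorithm returns the densest subgraph encountered over all iterations, $\rho(S^*,\Psi) \geq \rho(S,\Psi) \geq \rho_{opt}/|V_\Psi|$, which is exactly the claimed approximation ratio.

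The main obstacle is essentially conceptual rather than technical: one must be confident that the pattern-density machinery (summation identity, pruning lemma, monotonicity $\deg_S(v,\Psi) \geq \deg_{D^*}(v,\Psi)$ when $D^* \subseteq S$) transfers cleanly from the clique setting. The monotonicity step deserves a brief remark, since a pattern instance is a subgraph rather than just a vertex set; however, Definition~\ref{def:instance} counts instances by their edge sets, and every instance of $\Psi$ in $D^*$ containing $v^*$ is also an instance in $S \supseteq D^*$, so the inequality is immediate. Aside from these sanity checks, the proof is a direct adaptation of the clique-case argument.
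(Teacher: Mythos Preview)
Your proposal is correct and follows essentially the same Charikar-style peeling argument as the paper's full proof in the appendix: both identify the iteration just before the first vertex of the optimal PDS is removed, use minimality of the peeled vertex together with the monotonicity $\deg_S(v,\Psi)\ge\deg_{D}(v,\Psi)$ and the optimality inequality $\deg_{D}(v,\Psi)\ge\rho_{opt}$, and then apply the double-counting identity $\sum_u \deg_S(u,\Psi)=|V_\Psi|\,\mu(S,\Psi)$ to conclude. The only cosmetic difference is that you obtain $\deg_{D}(v,\Psi)\ge\rho_{opt}$ by invoking the pattern analog of Lemma~\ref{lemma:prune} with $|U|=1$, whereas the paper re-derives this inequality directly from the fact that removing $v$ from $D$ cannot increase the density.
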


\textsc{Proof sketch.} We can prove the lemma by generalizing Lemma \ref{lemma:approx} and Theorem \ref{thm:coreBounds} for supporting an arbitrary pattern.\qed

\vspace{0.05in}

\noindent\textbf{Exact methods.} The algorithm {\tt Exact} in Section \ref{sec:basicExact} cannot be trivially extended for computing the exact PDS's since it relies on ($h$--1)-cliques. Nevertheless, we can adapt the exact CDS algorithm in \cite{tsourakakis2015k}, which follows the framework of {\tt Exact} but introduces a different flow network construction method, for computing the exact PDS's by replacing the steps of computing clique instances (clique-degrees) by pattern instances (pattern-degrees). We denote this algorithm by {\tt PExact}, and its pseudocodes are presented in the technical report \cite{fullVersion}. Theorem~\ref{thm:exactCorrect} shows its correctness.

\begin{theorem}
\label{thm:exactCorrect}
Given a graph $G$ and a pattern $\Psi(V_\Psi, E_\Psi)$, the algorithm {\tt PExact} correctly finds the PDS of $G$ w.r.t. pattern-density of $\Psi$.
\end{theorem}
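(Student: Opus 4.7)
The plan is to mirror the correctness argument of \texttt{Exact} for the CDS problem and verify that each step of that argument carries over when ``$h$-clique instance'' is uniformly replaced by ``pattern instance of $\Psi$''. The framework is binary search on a guessed density $\alpha$, where at each guess we build a flow network $\mathcal{F}(V_\mathcal{F}, E_\mathcal{F})$ whose minimum st-cut certifies whether $\rho_{opt} \geq \alpha$ and, if so, produces a witness subgraph. So the task reduces to three sub-claims: (i) the initial bounds $[0, \max_{v\in V} deg_G(v,\Psi)]$ bracket $\rho_{opt}$; (ii) the min-cut analysis faithfully encodes pattern-density; (iii) the stopping threshold $\tfrac{1}{n(n-1)}$ is fine enough to pin down the exact optimum.

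First I would handle (i). By Definition~\ref{def:patterndensity}, $\rho(G,\Psi)=\mu(G,\Psi)/|V|$ is non-negative, so $0$ is a safe lower bound. For the upper bound, observe that summing pattern-degrees over any subgraph $S$ counts each pattern instance at most $|V_\Psi|$ times, giving $\mu(S,\Psi)\leq \frac{1}{|V_\Psi|}\sum_{v\in S}deg_S(v,\Psi)\leq |V_S|\cdot\max_v deg_G(v,\Psi)$; dividing by $|V_S|$ and noting $|V_\Psi|\geq 1$ shows $\rho(S,\Psi)\leq \max_v deg_G(v,\Psi)$ for every $S$. Hence the binary search range is valid.

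Next, for (ii), I would lift the flow-network argument of \cite{tsourakakis2015k} to general patterns. The network has source $s$, sink $t$, a node for each vertex of $G$, and an ``instance node'' for each pattern instance (or a designated sub-structure thereof, depending on the construction in \texttt{PExact}); capacities from $s$ encode pattern-degrees, capacities into $t$ encode the guess $\alpha|V_\Psi|$, and the infinite-capacity edges between vertex nodes and instance nodes ensure that any finite st-cut $(\mathcal{S},\mathcal{T})$ with $U := \mathcal{S}\setminus\{s\}\neq\emptyset$ places an instance node on the $\mathcal{S}$ side only if all its participating vertices are in $U$. A counting argument then shows that the cut capacity equals $|V_\Psi|\cdot(\alpha |U| - \mu(G[U],\Psi)) + \text{const}$, so that a cut smaller than the trivial $\{s\}$-cut exists iff some $U$ satisfies $\mu(G[U],\Psi)/|U|>\alpha$, i.e.\ iff $\rho_{opt}>\alpha$. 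This step is the main obstacle: the classical CDS argument relies on clique instances being characterized by their vertex sets so that the bipartite ``vertex vs.\ instance'' encoding is tight. For a general pattern $\Psi$ we must be careful about automorphisms, and we must argue that, in the construction of \texttt{PExact}, each pattern instance is counted consistently with Definition~\ref{def:patterndensity} (i.e.\ once per edge-set realization, not once per injective mapping). I would discharge this by checking that the instance-node set of \texttt{PExact} is in bijection with $\{S\subseteq G : S\cong\Psi\}$, and that the vertex-to-instance edges correctly capture the incidence ``$v$ participates in this instance''; once verified, the cut calculation is a direct rewrite of the CDS proof.

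Finally, for (iii), I would use the standard rationality argument: any attainable density $\mu(S,\Psi)/|V_S|$ is a fraction with denominator at most $n$, so two distinct attainable densities differ by at least $\tfrac{1}{n(n-1)}$. Therefore once the binary search interval $[l,u]$ has width below $\tfrac{1}{n(n-1)}$, it contains a unique attainable density, which must equal $\rho_{opt}$, and the subgraph induced by $\mathcal{S}\setminus\{s\}$ in the last successful cut is the PDS. Combining (i)--(iii) yields correctness of \texttt{PExact}.
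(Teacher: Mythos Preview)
Your three-part skeleton (initial bounds, min-cut decision, rationality gap) is exactly the paper's route, and parts (i) and (iii) are fine as written; the paper's Lemma~\ref{lemma:binary} is precisely your (iii).

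The gap is in (ii). You assume that {\tt PExact} inherits the infinite-capacity edges $\psi\to v$ from {\tt Exact}, and you rely on that to force ``instance node in $\mathcal S$ $\Rightarrow$ all its vertices in $\mathcal S$'' in any finite cut. But {\tt PExact} (Algorithm~\ref{alg:basicPExact}) does \emph{not} use infinite capacities: it creates one node per full pattern instance $\psi$ with $c(v\to\psi)=1$ and $c(\psi\to v)=|V_\Psi|-1$. With these finite capacities the placement of an instance node whose vertex set is split between $\mathcal S$ and $\mathcal T$ is not forced by infeasibility; it is a cost comparison. If exactly $i$ of the $|V_\Psi|$ vertices lie in $\mathcal A_1=\mathcal S\cap V$ (with $1\le i\le |V_\Psi|-1$), putting $\psi$ on the $\mathcal S$-side costs $(|V_\Psi|-i)(|V_\Psi|-1)$, while putting it on the $\mathcal T$-side costs $i$. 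The paper's proof (Lemma~\ref{lemma:mincut}) hinges on the elementary inequality $(|V_\Psi|-1)(|V_\Psi|-i)\ge i$ for that range of $i$, which shows that in the \emph{minimum} cut such split instances sit in $\mathcal T$. Only after establishing this does the capacity collapse to the form you wrote, $|V_\Psi|\mu(G,\Psi)+|V_\Psi|\bigl(\alpha|U|-\mu(G[U],\Psi)\bigr)$, and Lemma~\ref{lemma:guess} then gives the two directions of the decision test. So your cut formula and conclusion are correct, but the step that gets you there---the finite-capacity inequality rather than an infinity argument---is missing and is the actual crux for general patterns.
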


\textsc{Proof.} Please refer to the technical report \cite{fullVersion}.
\qed

\vspace{0.05in}

Our core-based techniques can be used for improving {\tt PExact}. Specifically, we adopt the $k$-pattern-core in Section \ref{sec:kcoreDiscuss} and use the three optimization techniques in Section \ref{sec:advancedExact}.
In addition, we propose a new optimization strategy, which relies on the following key observation: for a general pattern $\Psi$, different pattern instances may share the same set of vertices, but {\tt PExact} creates a node for each of them when building the flow network.
For example, consider the graph in Figure~\ref{fig:flow}(a). If the pattern is a diamond, then the three pattern instances in Figure~\ref{fig:flow}(c) share the same set of vertices.

\begin{algorithm}[htb]{}
\small
\caption{{\tt construct+}($G$, $\Psi$, $\alpha$).}
\label{alg:construct+}
\KwIn{$G(V, E)$, $\Psi(V_\Psi,E_\Psi)$, $\alpha$;}
\KwOut{The flow network ${\mathcal F}(V_{\mathcal F}, E_{\mathcal F})$;}
$\Lambda\gets$ all the pattern instances of $\Psi$ in $G$\;
$\Lambda'$=\{$g_1$, $g_2$, $\cdots$, $g_{|\Lambda'|}$\}$\gets$ group the pattern instances in $\Lambda$\;
$V_{\mathcal F}\gets \{s\}\cup V\cup\Lambda'\cup\{t\}$\;
$\forall v\in V$, add an edge $s$$\rightarrow$$v$ with capacity $deg_G(v, \Psi)$\;
$\forall v\in V$, add an edge $v$$\rightarrow$$t$ with capacity $\alpha|V_\Psi|$\;
$\forall v\in V$, if it appears in a group $g\in\Lambda'$, add an edge $v$$\rightarrow$$g$ with capacity $|g|$\;
$\forall g\in\Lambda'$, if it contains a vertex $v$, add an edge $g$$\rightarrow$$v$ with capacity $|g|(|V_\Psi|-1)$\;
\Return ${\mathcal F}(V_{\mathcal F}, E_{\mathcal F})$\;
\setlength{\textfloatsep}{0pt}
\end{algorithm}

Based on the observation above, we propose a new flow network construction method {\tt construct+}, by grouping nodes of pattern instances having same set of vertices. Algorithm \ref{alg:construct+} shows {\tt construct+}.
First, a set $\Lambda'$=$\{g_1, g_2, \cdots, g_{|\Lambda'|}\}$ is collected, where each $g_i$ denotes a group of pattern instances sharing the same set of vertices.
Second, for each vertex $v\in V$, we set the capacities of edges ($s$, $v$) and ($v$, $t$) similarly with that in {\tt PExact}.
Third, for each vertex $v\in V$, if it appears in a group $g\in\Lambda'$, the capacity of edge ($v$, $g$) is set to $|g|$; for each group $g\in\Lambda'$, the capacity of edge ($g$, $v$) is set to $|g|(|V_\Psi|-1)$.
Here, we define the capacties based on the intuition that the densest subgraph $D$ is obtained by computing the minimum st-cut ($\mathcal S$, $\mathcal T$), and vertices of $D$ must be in one partition $\mathcal S$. This implies that nodes of all the pattern instances in $D$ should be in $\mathcal S$, and thus we can accumulate their capacities by using the term $|g|$ when computing the maximum flow from $\mathcal S$ to $\mathcal T$.  Note that if $\Psi$ is a clique, then $|g|$=1.
The correctness is stated by Lemma~\ref{lemma:compressF}. We illustrate {\tt construct+} by Example~\ref{eg:flow}.
We denote the above core-based exact PDS algorithm by {\tt CorePExact}.

\begin{lemma}
\label{lemma:compressF}
Given a graph $G$, a pattern $\Psi(V_\Psi, E_\Psi)$, the flow networks built by {\tt PExact} (lines 5-12) and {\tt construct+} have the same capacity for their minimum st-cut.
\end{lemma}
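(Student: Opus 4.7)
The plan is to prove equality of the two minimum st-cut capacities by exhibiting a capacity-preserving correspondence between cuts of the two networks. I will establish two inclusions: (i) every cut $(\mathcal{S}', \mathcal{T}')$ of the {\tt construct+} network yields a cut of the {\tt PExact} network with the same capacity, and (ii) every minimum cut of {\tt PExact} can be massaged into a ``group-monolithic'' cut that projects onto a cut of {\tt construct+} with equal capacity. Combined, these force the two minima to coincide.

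For direction (i), given $(\mathcal{S}', \mathcal{T}')$ in the {\tt construct+} network, I define $(\mathcal{S}, \mathcal{T})$ in the {\tt PExact} network by keeping $s$, $t$, and every $v \in V$ on their original sides, and placing each individual pattern-instance node $\psi_i$ on the same side as the group $g \ni \psi_i$. The $s \to v$ and $v \to t$ cut contributions are identical by construction. For a group $g$ with shared vertex set $V_g$, if $g \in \mathcal{S}'$ and $v \in V_g \cap \mathcal{T}'$, the single cut edge $g \to v$ of capacity $|g|(|V_\Psi|-1)$ in {\tt construct+} is matched by the $|g|$ parallel cut edges $\psi_i \to v$, each of capacity $|V_\Psi|-1$, in {\tt PExact}; symmetrically for $g \in \mathcal{T}'$ and $v \in V_g \cap \mathcal{S}'$, the capacity-$|g|$ edge $v \to g$ is matched by $|g|$ unit-capacity edges $v \to \psi_i$. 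Summing over all groups and all $v \in V_g$ gives identical total cut capacity.

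For direction (ii), the key observation is that within any group $g$ the instance nodes $\psi_1,\ldots,\psi_{|g|}$ have identical incoming and outgoing edges in the {\tt PExact} network: each connects only to $V_g$, with the same per-direction capacities. Hence, once the placement of $s$, $t$, and the vertices of $V$ is fixed, each $\psi_i$ independently chooses the side that minimizes its cut contribution, and that choice depends only on $V_g$ and on $V \cap \mathcal{S}$. So every $\psi_i$ in the group has the same optimal side, and any minimum cut of {\tt PExact} can be rewritten --- breaking ties uniformly without increasing capacity --- as a cut in which each group is placed monolithically. Such a cut projects onto a cut of {\tt construct+} of the same capacity by reversing the construction of direction (i).

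The main obstacle is the per-group bookkeeping: verifying that the factor $|g|$ in the {\tt construct+} capacities is exactly what balances the sum of $|g|$ parallel edges in {\tt PExact} in both directions, and ensuring that no cross-edges (e.g.\ vertices outside $V_g$) are omitted in the match. A subsidiary point is that the correspondence must hold uniformly in the guessed density $\alpha$ used during the binary search, not only at the optimum; this is automatic because the construction does not rely on $\alpha$ beyond the $v \to t$ capacities, which are identical in both networks.
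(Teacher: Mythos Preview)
Your proof is correct and takes a genuinely different route from the paper. The paper's argument piggybacks on the structural analysis of Lemma~\ref{lemma:mincut}: it derives an explicit closed-form for the minimum st-cut capacity of the {\tt construct+} network (the expression $\Phi + \alpha|V_\Psi||\mathcal{A}_1| + \Pi$), and then revisits that derivation case by case to check that the same formula applies to the {\tt PExact} network, using that the aggregate capacity contributed by a group equals $|g|$ times the contribution of a single instance node. Your argument instead builds a direct capacity-preserving correspondence between cuts of the two networks, without ever writing down the min-cut value. The symmetry observation in direction~(ii)---that all $\psi_i$ in a group have identical incidence and hence the same optimal side once $V\cap\mathcal{S}$ is fixed---is exactly the right tool and makes the argument self-contained: it does not depend on Lemma~\ref{lemma:mincut} at all. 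The paper's route has the advantage of reusing machinery already needed for Theorem~\ref{thm:exactCorrect}; your route is more elementary and would survive even if the explicit min-cut formula were not available.
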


\textsc{Proof.} Please refer to the technical report \cite{fullVersion}.
\qed

\begin{example}
\label{eg:flow}
{\em
Let $\Psi$ be the diamond pattern. The graph in Figure~\ref{fig:flow}(a) has 4 pattern instances, which are grouped into 2 groups as shown in Figures \ref{fig:flow}(b) and \ref{fig:flow}(c).
Clearly, we can locate the PDS in (1, $\Psi$)-core, in which the vertex set is $\{A, B, \cdots, F\}$ and $\Lambda'$ = $\{g_1, g_2\}$.
To build $\mathcal F(V_{\mathcal F}$, $E_{\mathcal F})$, we first collect the set $V_{\mathcal F}$, then create 10 nodes, and finally add edges. For example, for group $g_2$, we link it to all its vertices with capacities $|g_2|(|V_{\Psi}|$--1$)$=9 and their reversed edges are with capacities 3. Figure~\ref{fig:flow}(d) shows $\mathcal F$.}
\qed
\end{example}

\begin{figure}[]
	\centering
	\includegraphics[width=0.91\linewidth]{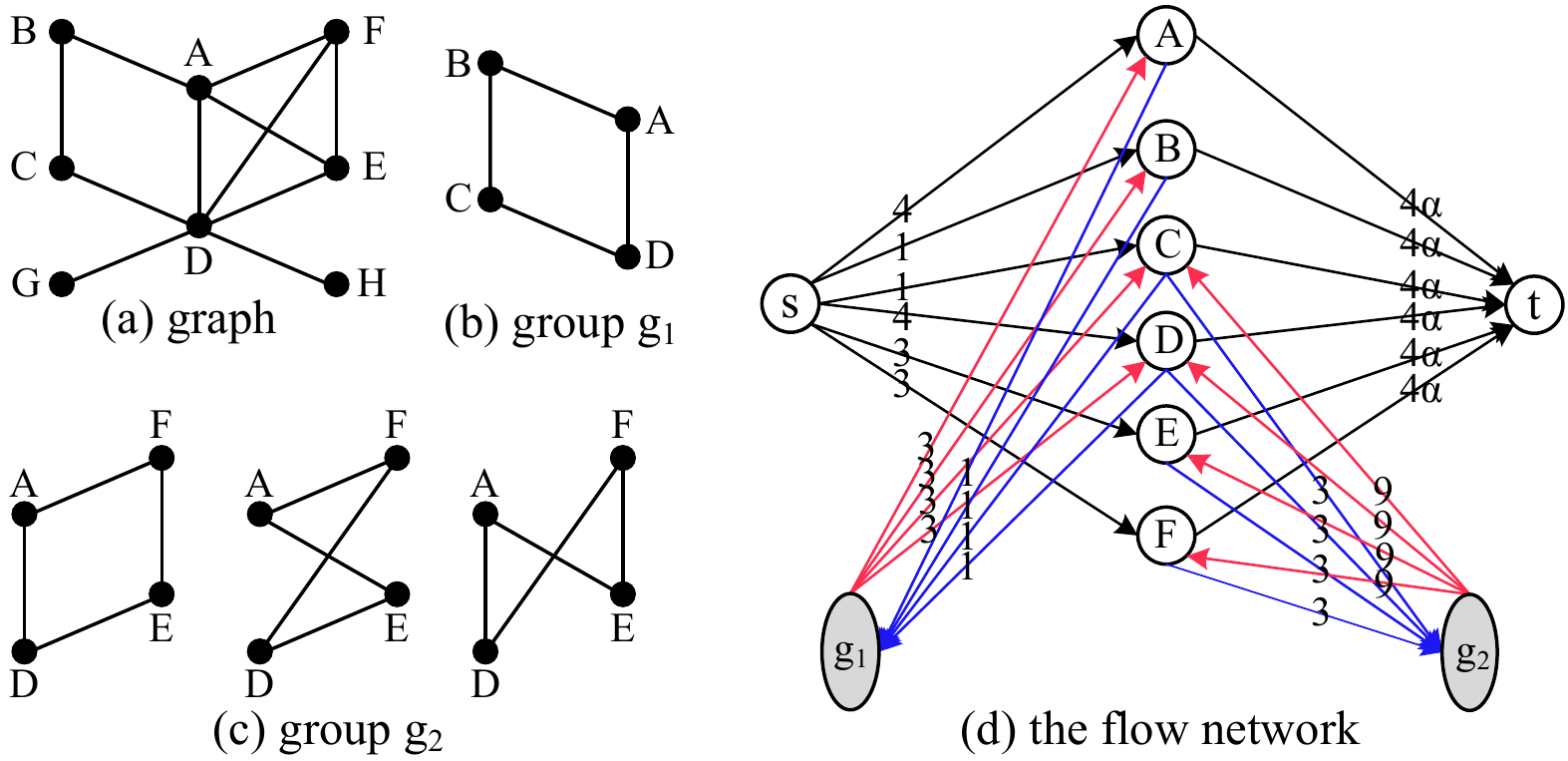}
    \vspace{-0.10in}
	\caption{Illustrating the flow network in {\tt CorePExact}. }
	\label{fig:flow}
\end{figure}

\noindent\textbf{Remark.} {\tt CorePExact} relies on the core decomposition. For some special patterns such as stars and loops, the core decomposition algorithm in Algorithm~\ref{alg:core} can be performed faster by optimizing the steps of computing pattern-degrees and decreasing the vertices' pattern-degrees. For details, please refer to the technical report~\cite{fullVersion}. For general patterns, we use the state-of-the-art pattern enumeration algorithm \cite{qiao2017subgraph} for computing the pattern-degrees.

\section{Experiments}
\label{sec:exp}



We have performed experiments on ten real graphs~\footnote{The datasets are:
Yeast (\url{https://dip.doe-mbi.ucla.edu/dip/Stat.cgi});
Netscience (\url{http://www-personal.umich.edu/~mejn/netdata/});
DBLP (\url{http://dblp.uni-trier.de/xml/}); and
Enwiki-2017 and UK-2002 (\url{http://law.di.unimi.it/}).
Others are found at~\url{https://snap.stanford.edu/data/}.}
(see Table~\ref{tab:dataset}).
These graphs cover various domains, such as biological networks (e.g., Yeast), collaboration networks (e.g., Ca-HepTh), autonomous system graphs (e.g., As-Caida), bibliographical graphs (e.g., DBLP), web graphs (e.g., UK-2002), citation networks (e.g., Cit-Patents), social networks (e.g., Friendster), etc.

\begin{table}[ht]
  \vspace{-0.1in}
  \centering
  \scriptsize
  \caption {Datasets used in our experiments.}
  \label{tab:dataset}
  \begin{tabular}{c|c|r|r}
     \hline
          {\bf Graph}     & \multicolumn{1}{c|}{\textbf{Name}}
                         & \multicolumn{1}{c|}{\textbf{Vertices}}
                         & \multicolumn{1}{c}{\textbf{Edges}}\\
     \hline\hline
          \multirow{4}*{\tabincell{c}{Real small graphs\\ (all algo.)}}
          &Yeast         &  1,116    &  2,148\\
     \cline{2-4}
          & Netscience    &  1,589    &  2,742\\
     \cline{2-4}
          &As-733        &  1,486    &  3,172\\
     \cline{2-4}
          &Ca-HepTh      &  9,877    &  25,998 \\
     \cline{2-4}
          &As-Caida      &  26,475   &  106,762\\
     \hline
          \multirow{6}*{\tabincell{c}{Real large graphs\\ (approx. algo.)}} & DBLP          &  425,957  &  1,049,866\\
     \cline{2-4}
          & Cit-Patents   &  3,774,768&  16,518,948\\
     \cline{2-4}
          & Friendster    &  20,145,325& 106,570,765\\
     \cline{2-4}
          & Enwiki-2017   & 5,409,498 & 122,008,994\\
     \cline{2-4}
          & UK-2002       &  18,520,486	& 298,113,762\\
     \hline
           \multirow{3}*{\tabincell{c}{Synthetic\\ random graphs}}
          & SSCA      &  100,000  &  3,405,676\\
     \cline{2-4}
          & ER         &  100,000  &  4,837,534\\
     \cline{2-4}
          & R-MAT        &  100,000  &  2,571,986\\
     \hline
  \end{tabular}
\end{table}

Besides, as shown in Table \ref{tab:dataset}, we have used three synthetic random graphs (SSCA, ER, and R-MAT) generated by GTgraph \footnote{\label{foot:test}GTgraph random graph generator: \url{http://www.cse.psu.edu/~kxm85/software/GTgraph/}}.
These three graphs follow three representative distributions: SSCA is made by random-sized cliques, ER follows the random distribution, and R-MAT follows the power-law distribution. Note that for SSCA and R-MAT, we set default parameters of their generators; for ER, we set the probability of an edge between any pair of vertices to 0.0005, which is also the chance that an edge exists in the real graph Cap-HepTh. A more detailed analysis of the characteristics of these datasets is in the technical report \cite{fullVersion}, which we omit here due to lack of space.

We considered two groups of patterns: (1) $h$-cliques (with $h\in[2,6]$); and (2) seven other patterns (Figure~\ref{fig:7motifs}) studied in \cite{nature2003,leskovec2006,lai2016scalable}, each of which is associated with an ID (e.g., 4 = diamond).

\begin{figure}[ht]
	\centering
	\includegraphics[width=0.88\linewidth]{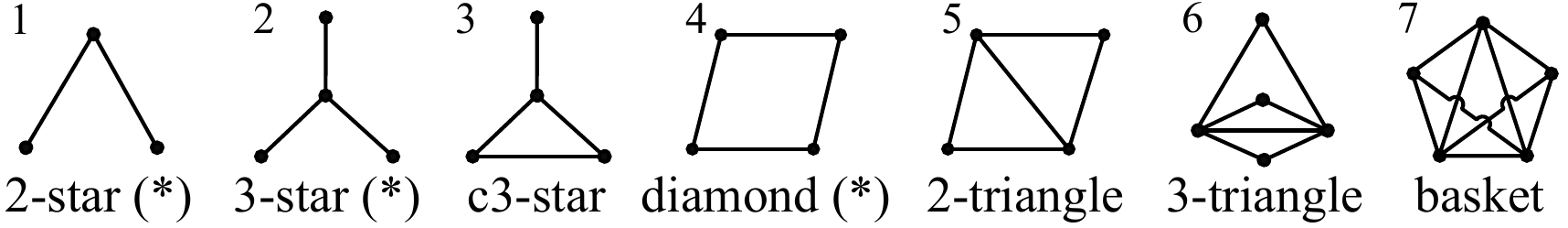}
    \vspace{-0.05in}
	\caption{Patterns used in evaluation of PDS.}
	\label{fig:7motifs}
\end{figure}

For CDS problem, we tested 2 exact algorithms ({\tt Exact} and {\tt CoreExact}) and 5 approximation algorithms ({\tt Nucleus} \cite{sariyuce2018local}, {\tt EMcore} \cite{cheng2011efficient}, {\tt PeelApp}, {\tt IncApp}, and {\tt CoreApp}).
{\tt Nucleus} is applied for decomposing the ($k$, $\Psi$)-core where $\Psi$ is an $h$-clique. For fair comparison, we implement the the faster nucleus decomposition algorithm AND \cite{sariyuce2018local} on a single core.
We also adapt {\tt EMcore} such that it works in main memory and stops when the $k_{\max}$-core is computed.
For PDS problem, we tested both exact algorithms ({\tt PExact}, {\tt CorePExact}) and approximation algorithms ({\tt PeelApp}, {\tt IncApp}, {\tt CoreApp}).
For special patterns marked $\ast$ in Figure~\ref{fig:7motifs}, we have implemented optimizations discussed in the technical report~\cite{fullVersion}, for all algorithms.
Note that {\tt CoreExact}, {\tt IncApp}, {\tt CoreApp}, and {\tt CorePExact} are our core-based approaches.
All these solutions are implemented in Java, and executed on a machine having an Intel(R) Xeon(R) 3.40GHz processor, 16 cores, and 125GB of memory, with Ubuntu installed.

\begin{figure*}[htp]
\hspace*{-.6cm}
\centering
\begin{tabular}{c c c c c}
  &
  \begin{minipage}{3.30cm}
	\includegraphics[width=9.0cm]{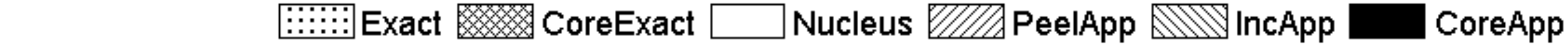}
  \end{minipage}
  &
  &
  &
  \\

  \begin{minipage}{3.30cm}
	\includegraphics[width=3.55cm]{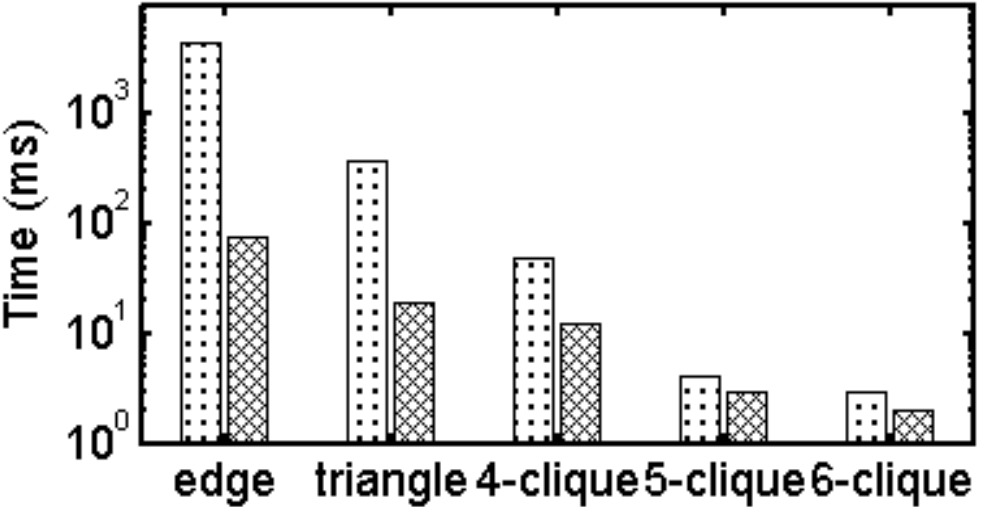}
  \end{minipage}
  &
  \begin{minipage}{3.30cm}
	\includegraphics[width=3.55cm]{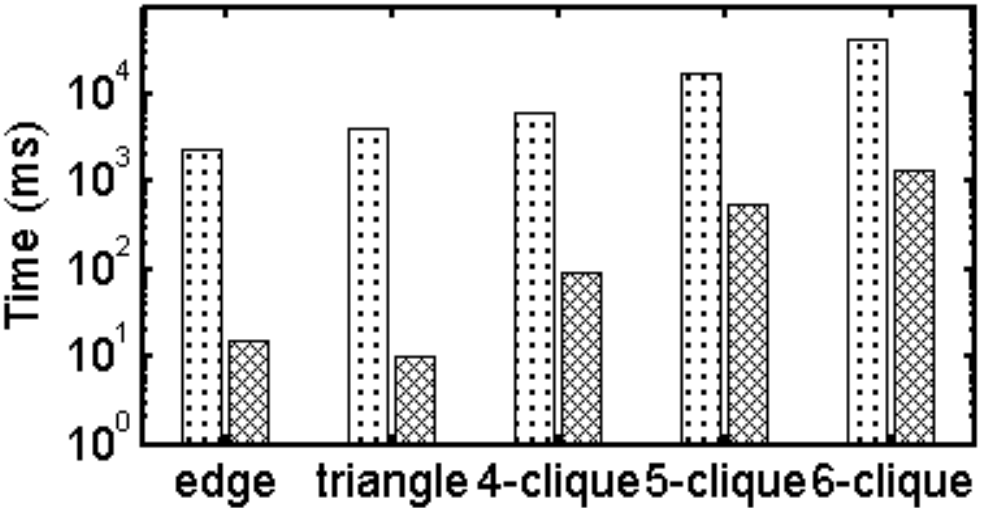}
  \end{minipage}
  &
  \begin{minipage}{3.30cm}
	\includegraphics[width=3.55cm]{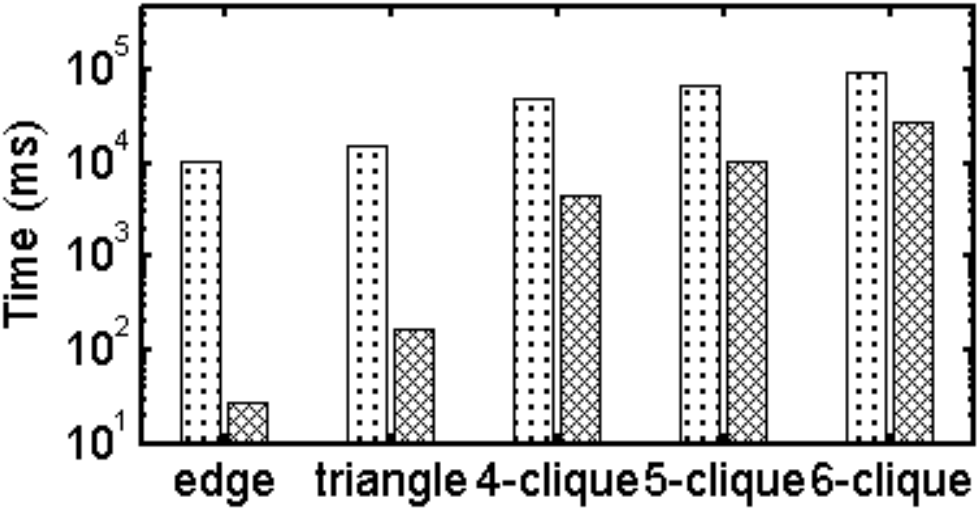}
  \end{minipage}
  &
  \begin{minipage}{3.30cm}
	\includegraphics[width=3.55cm]{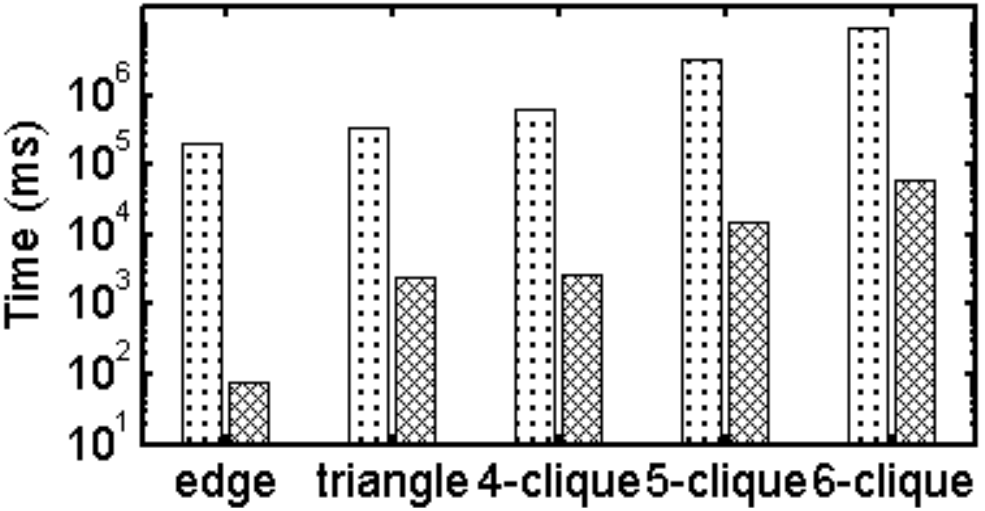}
  \end{minipage}
  &
  \begin{minipage}{3.30cm}
	\includegraphics[width=3.55cm]{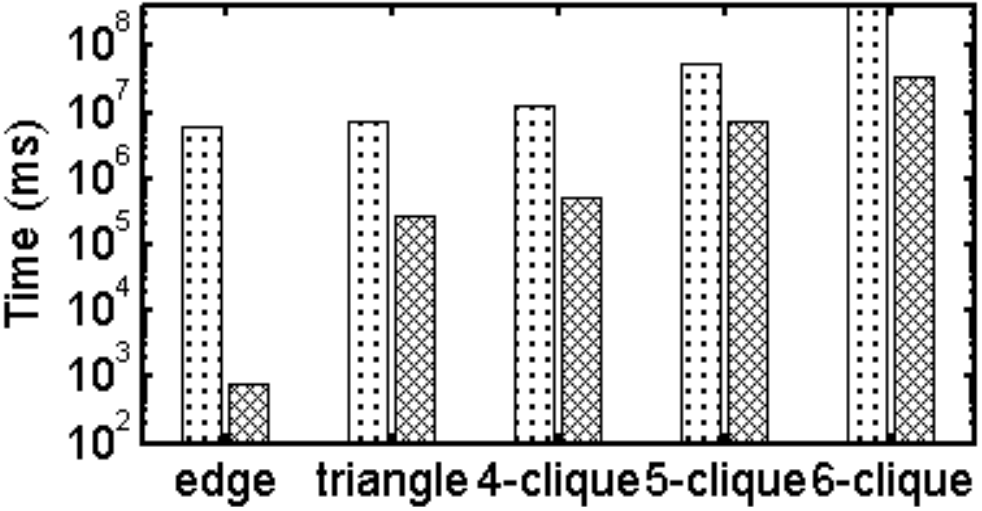}
  \end{minipage}
  \\
  (a) Yeast (exact)
  &
  (b) Netscience (exact)
  &
  (c) As-733 (exact)
  &
  (d) Ca-HepTh (exact)
  &
  (e) As-Caida  (exact)
  \\

  \begin{minipage}{3.30cm}
	\includegraphics[width=3.55cm]{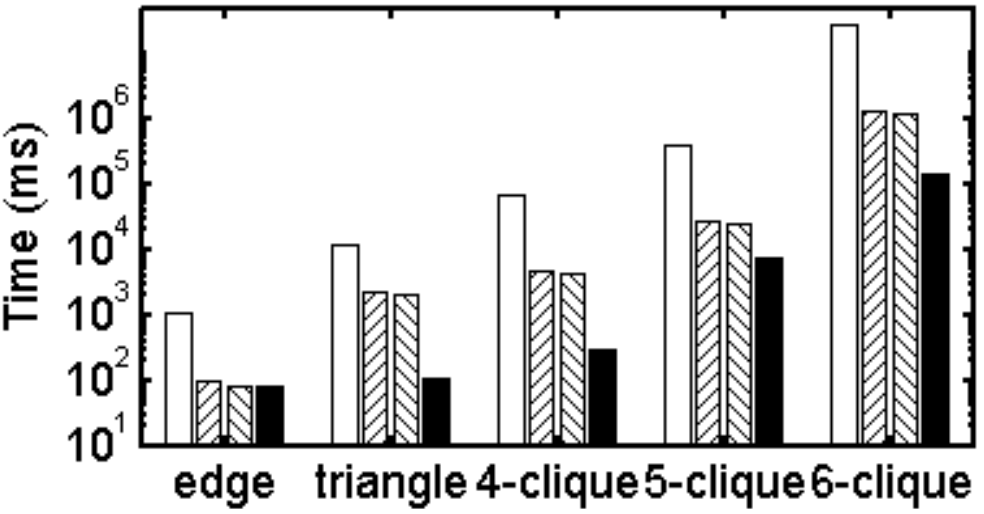}
  \end{minipage}
  &
  \begin{minipage}{3.30cm}
	\includegraphics[width=3.55cm]{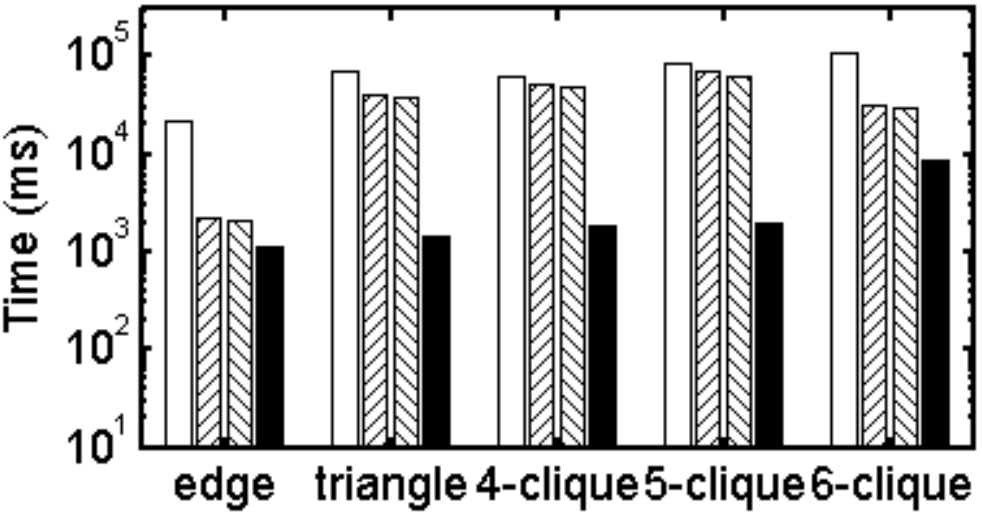}
  \end{minipage}
  &
  \begin{minipage}{3.30cm}
	\includegraphics[width=3.55cm]{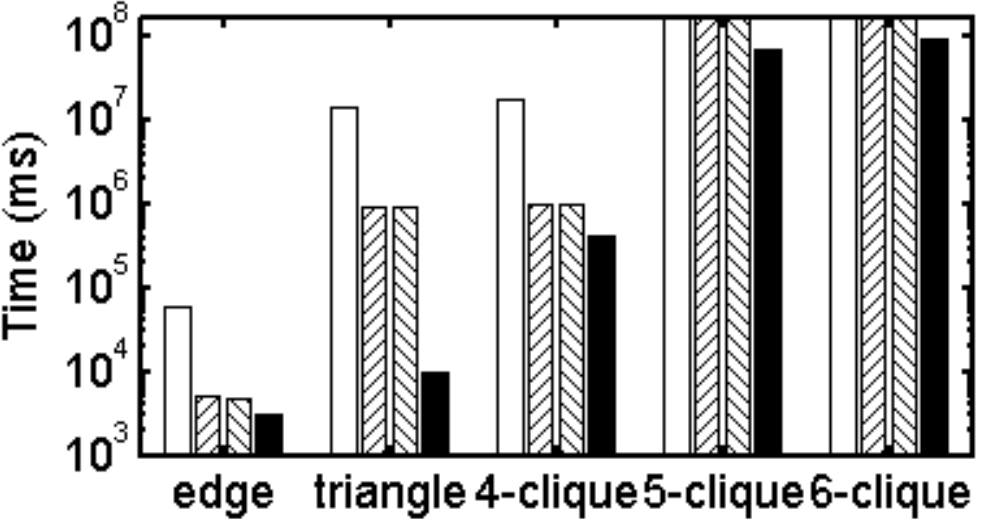}
  \end{minipage}
  &
  \begin{minipage}{3.30cm}
	\includegraphics[width=3.55cm]{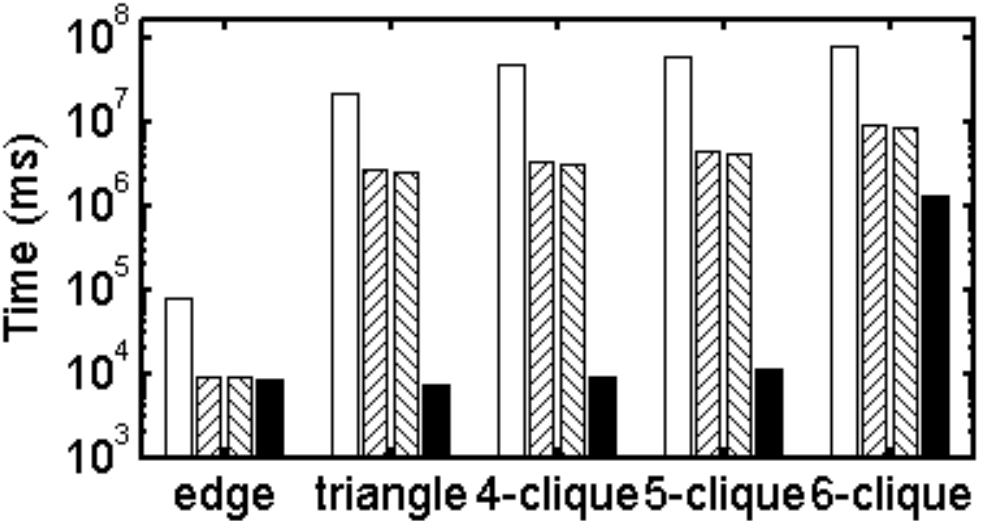}
  \end{minipage}
  &
  \begin{minipage}{3.30cm}
	\includegraphics[width=3.55cm]{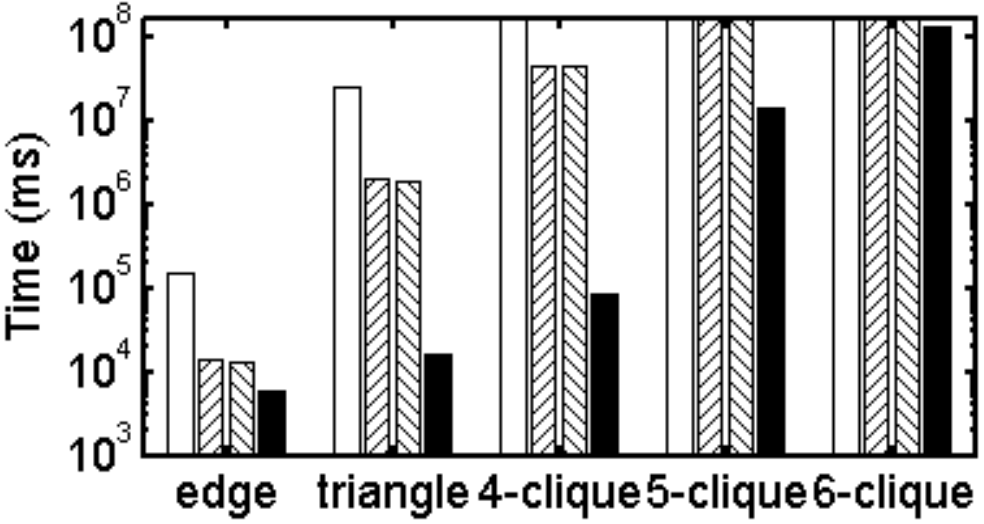}
  \end{minipage}
  \\
  (f) DBLP (app.)
  &
  (g) Cit-Patents (app.)
  &
  (h) Friendster (app.)
  &
  (i) Enwiki-2017 (app.)
  &
  (j) UK-2002 (app.)
\end{tabular}
\vspace{-0.10in}
\caption{Efficiency of exact and approximation CDS algorithms.}
\label{fig:exp-kds}
\end{figure*}

\subsection{DSD for Edge- and $h$-Clique-Densities}
\label{sec:expKDS}

\noindent
{\bf\underline{1 Exact algorithms}.} Figures~\ref{fig:exp-kds}(a)-(e) show the performance of exact algorithms on five small datasets. (As these solutions cannot finish in a reasonable time on larger datasets, we do not report their results here.)  We see that the time costs of all the algorithms increase with the $h$-clique size. Moreover, {\tt CoreExact} is at least 4.5$\times$ and up to four orders of magnitude faster than the existing algorithm {\tt Exact}~\footnote{For exact algorithms, bars touching the upper boundaries mean that the corresponding algorithms cannot finish within 5 days.}.
This is because {\tt CoreExact} employs the $k$-clique-cores, or ($k$, $\Psi$)-cores, which not only effectively locate the CDS in some smaller subgraphs, but also significantly reduce the flow network sizes in the binary search process. In contrast, the flow network of {\tt Exact} is built on the entire graph in each iteration, and the sizes of the flow networks remain unchanged in all the iterations. Hence, {\tt CoreExact} is faster than {\tt Exact}.

We now investigate how the flow network size (number of nodes) changes in the first six iterations of {\tt CoreExact}, on Ca-HepTh and As-Caida (Figure~\ref{fig:flowExp}). In the $x$-axis, ``--1" denotes that the flow network is constructed for the entire graph $G$, instead of a subgraph located by the ($k$, $\Psi$)-cores (Section~\ref{sec:basicExact}); ``0" means that the flow network is built on the subgraph located by the clique-cores.
The ($k$, $\Psi$)-cores are indeed effective for locating the CDS, as it greatly prunes vertices and clique instances.
The flow networks shrink, as the number of iterations increases.  After an iteration of the binary search is completed, a tighter lower bound of $\rho _{\rm{opt}}$ is obtained, which can then be used to locate the CDS in a smaller subgraph with a larger core number, resulting in a smaller flow network. For example, for the triangle on the Ca-HepTh dataset, over 95\% of the nodes in the flow network is pruned after six iterations. As the flow network is smaller, the minimum st-cut can be computed faster, thus yielding a better performance. As the clique size (i.e., $h$) increases, the proportion of cliques instances in the densest subgraph becomes larger, so the degree of pruning gets smaller.

\begin{figure}[]
\hspace*{-.15cm}
\centering
\begin{tabular}{c c}
  \begin{minipage}{3.601cm}
	\includegraphics[width=7.2cm]{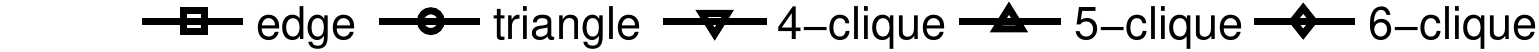}
  \end{minipage}
  &
  \\
  \begin{minipage}{3.601cm}
	\includegraphics[width=3.601cm]{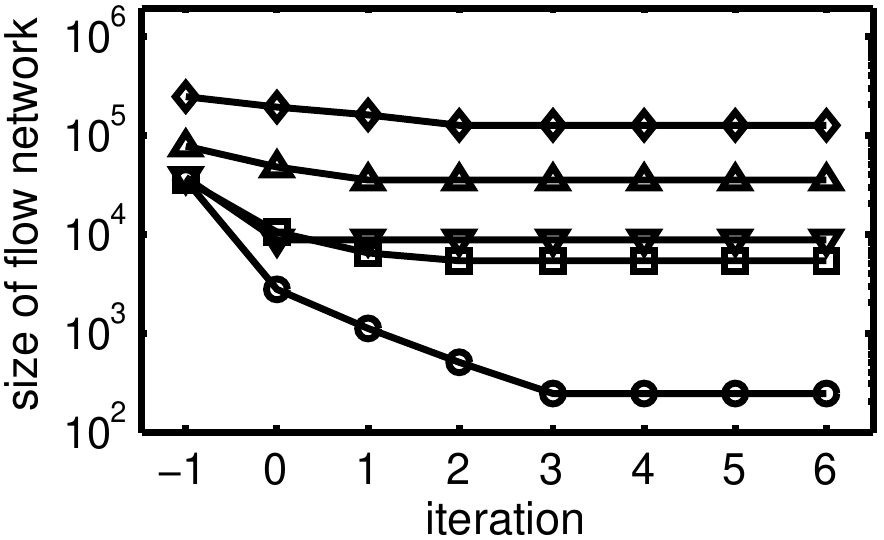}
  \end{minipage}
  &
  \begin{minipage}{3.601cm}
	\includegraphics[width=3.601cm]{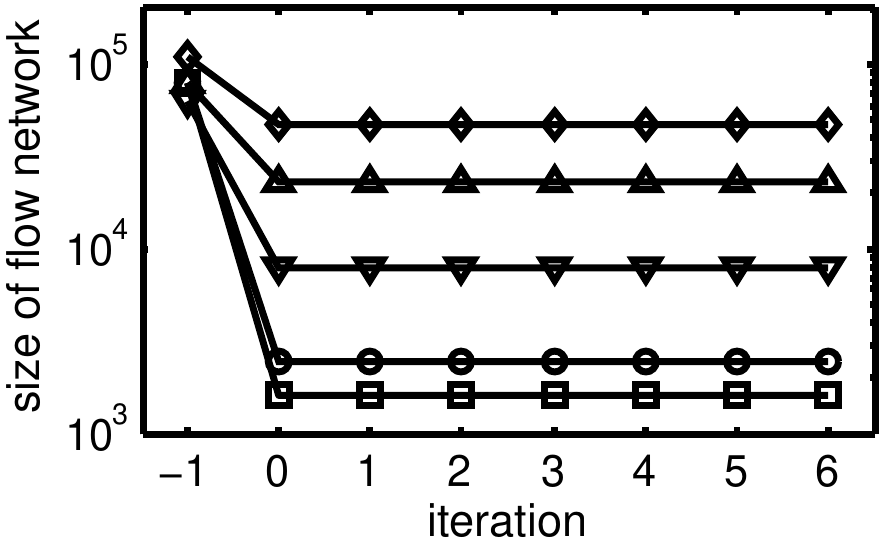}
  \end{minipage}
  \\
  (a) Ca-HepTh
  &
  (b) As-Caida
\end{tabular}
\vspace{-0.05in}
\caption{Flow network sizes in {\tt CoreExact}.}
\label{fig:flowExp}
\end{figure}

\begin{figure}[h]
\hspace*{-.15cm}
\centering
\begin{tabular}{c c}
  \begin{minipage}{3.801cm}
	\includegraphics[width=6.001cm]{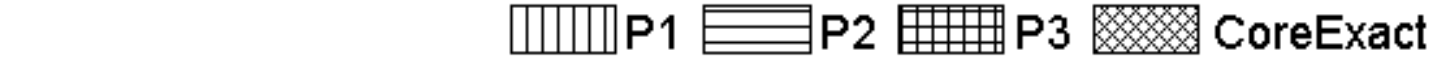}
  \end{minipage}
  &
  \\
  \begin{minipage}{3.601cm}
	\includegraphics[width=3.601cm]{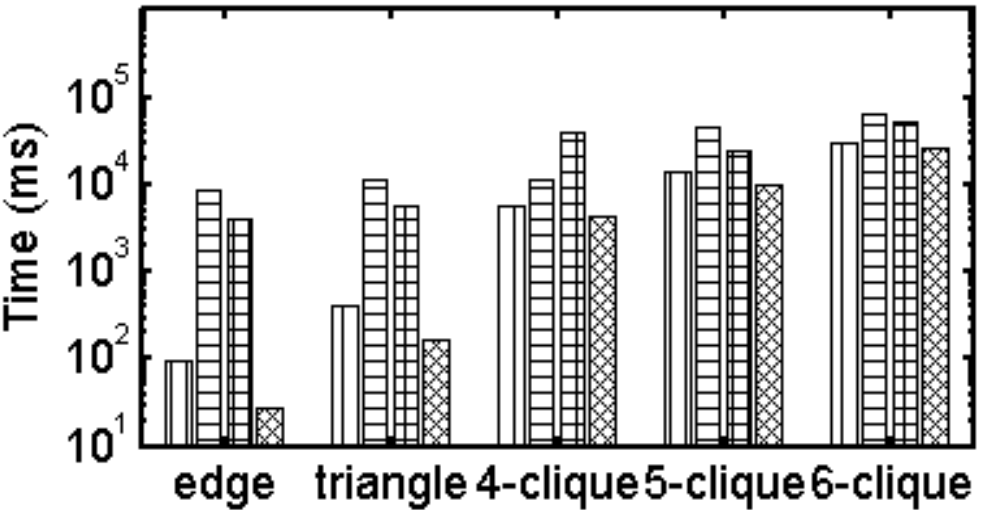}
  \end{minipage}
  &
  \begin{minipage}{3.601cm}
	\includegraphics[width=3.601cm]{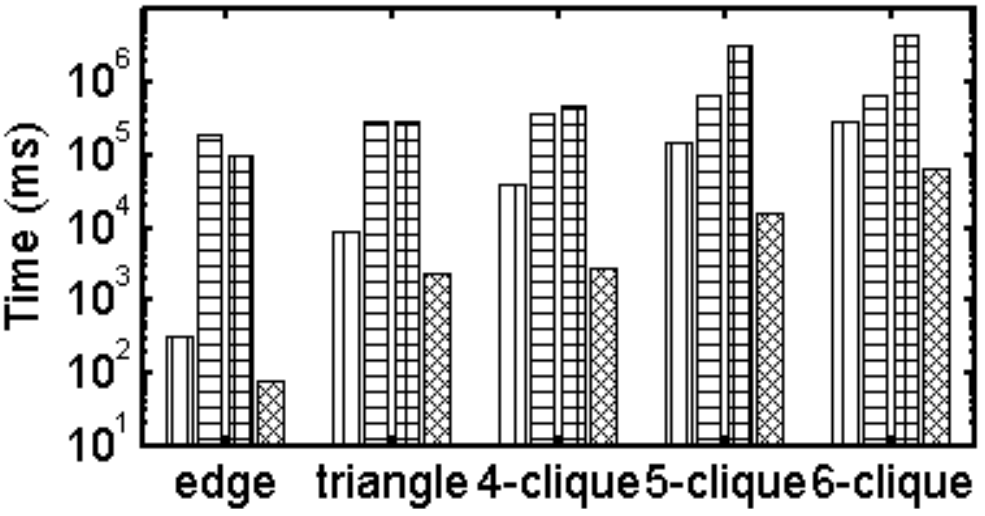}
  \end{minipage}
  \\
  (a) As-733
  &
  (b) Ca-HepTh
\end{tabular}
\vspace{-0.05in}
\caption{The effect of pruning criteria in {\tt CoreExact}.
\label{fig:pruningThree}}
\end{figure}

Next, we evaluate the individual effect of the three pruning criteria in {\tt CoreExact}. We create three variants of {\tt CoreExact}, namely $P1$, $P2$, and $P3$, which only include {\it Pruning1}, {\it Pruning2}, and {\it Pruning3} respectively, while other steps are the same as those of {\tt CoreExact}. Our experimental results (Figure \ref{fig:pruningThree}) confirm that each of the pruning strategies makes a contribution to the efficiency of {\tt CoreApp}. Most of the savings come from {\it Pruning1}; however, while the contribution of other pruning strategies is small on the As-733 and Ca-HepTh, {\it Pruning2} and {\it Pruning3} still make a non-trivial contribution on Ca-HepTh.

Finally, we examine the percentage of time cost of core decomposition in {\tt CoreExact}. As shown in Table \ref{tab:coreTime}, the percentage is small and decreases with the $h$-clique size. Besides, cores are effective for locating the CDS in some small subgraphs. Thus, {\tt CoreExact} achieves high efficiency, while incurring negligible overhead from core decomposition.

\begin{table}[ht]
  \centering
  \scriptsize
  \caption {\% of time cost of core decomposition.}\label{tab:coreTime}
  \begin{tabular}{c|c|c|c|c|c}
     \hline
          {\bf Dataset}  & {\bf edge}
                         & {\bf triangle}
                         & {\bf 4-clique}
                         & {\bf 5-clique}
                         & {\bf 6-clique}\\
     \hline\hline
     As-733&    57.14\%&	8.28\%&	0.31\%&	0.09\%&	0.04\%\\
     \hline
     Ca-HepTh&  69.74\%&	6.01\%&	2.32\%&	0.87\%&	0.65\%\\
     \hline
  \end{tabular}
\end{table}

\noindent
{\bf\underline{2 Approximation algorithms.}} We next report the efficiency results of approximation solutions on the five largest datasets. From Figures~\ref{fig:exp-kds}(f)-(j), we observe that core-based approximation algorithms ({\tt IncApp} and {\tt CoreApp}) are consistently faster than {\tt Nucleus}
and {\tt PeelApp}~\footnote{For approximation algorithms, bars touching the upper boundaries mean that the corresponding algorithms cannot finish within 2 days.}. This implies that for decomposing cores, our algorithm (Algorithm \ref{alg:core}) which is almost the same as {\tt IncApp} is faster than the nucleus decomposition algorithm.
The average running time of {\tt IncApp} is only 90\% of that of {\tt PeelApp}. Both algorithms iteratively remove vertices from the graph $G$. Particularly, {\tt PeelApp} computes the density after removing each vertex, and only stops after $G$ has no more vertices.  However, {\tt IncApp} does not compute the density, and stops after the ($k_{\max}$, $\Psi$)-core is discovered. {\tt CoreApp} performs the best, as it finds the ($k_{\max}$, $\Psi$)-core in a top-down manner, and skips the computation of cores with smaller clique-core numbers. In our experiments, {\tt CoreApp} is up to three and two orders of magnitude faster than {\tt Nucleus} and {\tt PeelApp} respectively.

\begin{figure}
        \centering
        \begin{tabular}{c c}
          \begin{minipage}{3.60cm}
        	\includegraphics[width=5.80cm]{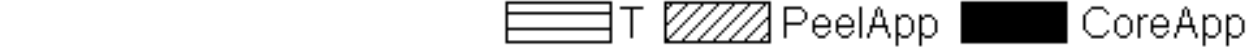}
          \end{minipage}
          &
          \\
          \begin{minipage}{3.60cm}
        	\includegraphics[width=3.60cm]{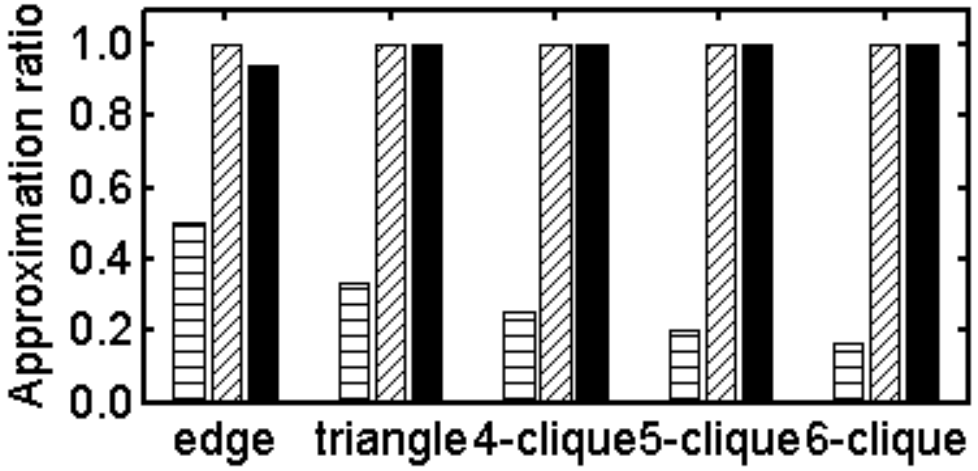}
          \end{minipage}
          &
          \begin{minipage}{3.60cm}
        	\includegraphics[width=3.60cm]{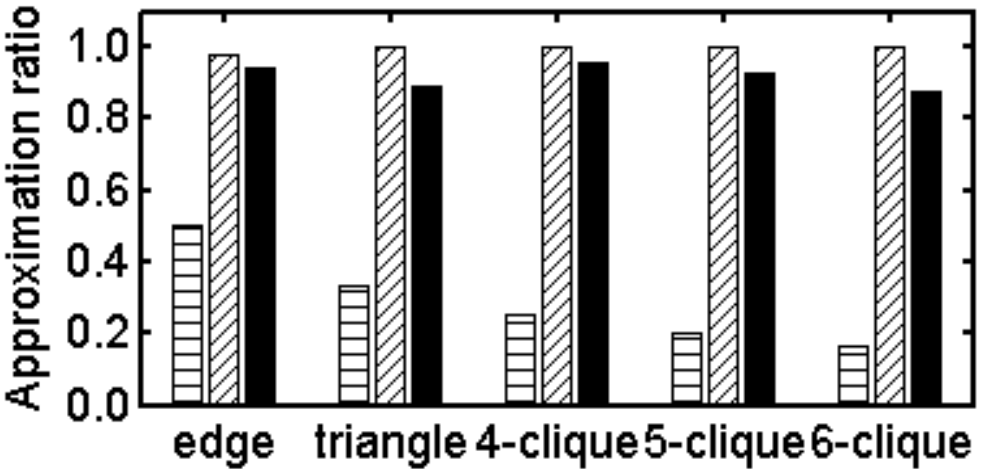}
          \end{minipage}
          \\
          (a) Netscience
          &
          (b) As-Caida
        \end{tabular}
        \vspace{-0.10in}
        \caption{Approximation ratio.}
        \label{fig:approxExp}
\end{figure}

As the clique size ($h$) increases, the speedup of {\tt CoreApp} over {\tt PeelApp} decreases, because the proportion of clique instances in the densest subgraph becomes larger, increasing the time cost of computing ($k_{\max}$, $\Psi$)-core.
Meanwhile, the running time generally grows as the clique size ($h$) increases, except for the Cit-Patents dataset. This is because on Cit-Patents, the numbers of 5-cliques and 6-cliques are less than the number of 4-cliques.
In addition, we compare {\tt CoreApp} with {\tt EMcore} for computing approximate EDS's on five largest datasets. As reported in Table \ref{tab:emcore}, {\tt EMcore} is slower than {\tt CoreApp}, because it differs with {\tt CoreApp} on four aspects as discussed in Section \ref{sec:coreAppAlgo}.

\begin{table}[ht]
  \centering
  \scriptsize
  \vspace{-0.1in}
  \caption {Efficiency of {\tt EMcore} and {\tt CoreApp} (seconds).}
  \label{tab:emcore}
  \begin{tabular}{c|c|c|c|c|c}
     \hline
          Algo.  & DBLP & CitPatents & FriendSter & Enwiki-2017 & UK-2002\\
     \hline\hline
     {\tt EMcore}&   0.091&	1.132&  3.143& 8.543 & 7.543\\
     \hline
     {\tt CoreApp}&  0.077&	1.021&	2.986& 8.139 & 5.825\\
     \hline
  \end{tabular}
\end{table}

We next report the theoretical ratio $T$ (i.e., $\frac{1}{|V_\Psi|}$) and actual approximation ratios $R$ of approximation methods.
Since {\tt Nucleus}, {\tt IncApp} and {\tt CoreApp} return the same ($k_{\max}$, $\Psi$)-core, their $R$ values are the same, so we only show results for {\tt CoreApp}. As shown in Figure~\ref{fig:approxExp}, $R$ is often larger than $T$. Although {\tt CoreApp} is slightly worse than {\tt PeelApp} on 6/10 instances (the average ratio of {\tt CoreApp} is 0.956 times that of {\tt PeelApp}), they have the same theoretical guarantee and their actual ratios are close to 1.0 in most cases, so {\tt CoreApp} produces high-quality results in practice.

In addition, we compare the efficiency of core-based exact and approximation approaches on two datasets. As shown in Figure~\ref{fig:cmpExactApp}, {\tt CoreApp} is much faster than {\tt CoreExact}. The reason is that {\tt CoreExact} relies on not only core decomposition, but also computing the minimum st-cut from flow networks using binary search, whereas {\tt CoreApp} just computes the ($k_{\max}$, $\Psi$)-core directly.

\noindent
{\bf Remark.}  For small-to-moderate-sized graphs (e.g., Ca-HepTh), {\tt CoreExact} is the best choice, as it computes an exact result in a reasonable time.  For larger graphs (e.g., UK-2002), {\tt CoreApp} is a much better option since it achieves high accuracy and efficiency.

\begin{figure}[ht]
        \centering
        \begin{tabular}{c c}
          \begin{minipage}{3.60cm}
        	\includegraphics[width=5.40cm]{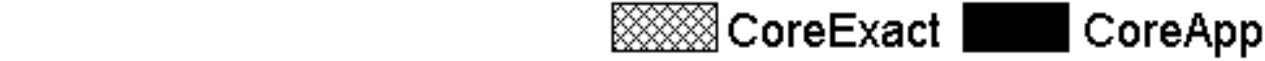}
          \end{minipage}
          &
          \\
          \begin{minipage}{3.60cm}
        	\includegraphics[width=3.60cm]{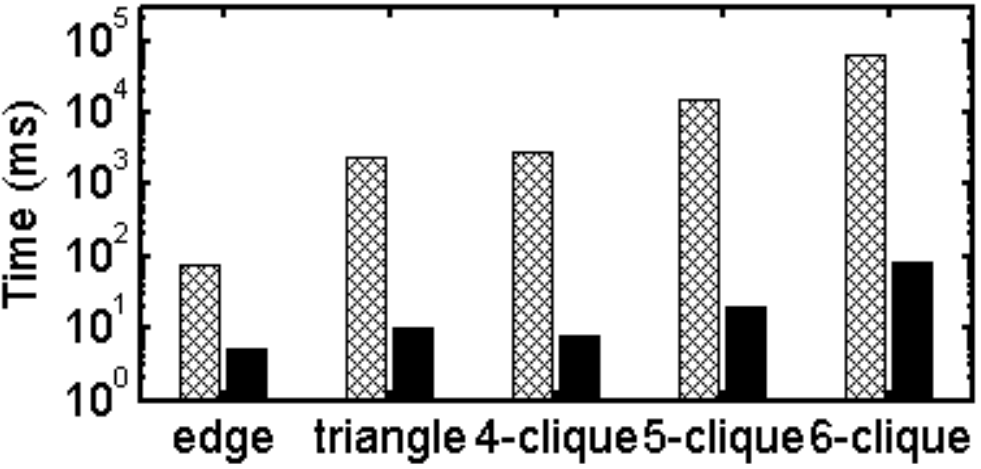}
          \end{minipage}
          &
          \begin{minipage}{3.60cm}
        	\includegraphics[width=3.60cm]{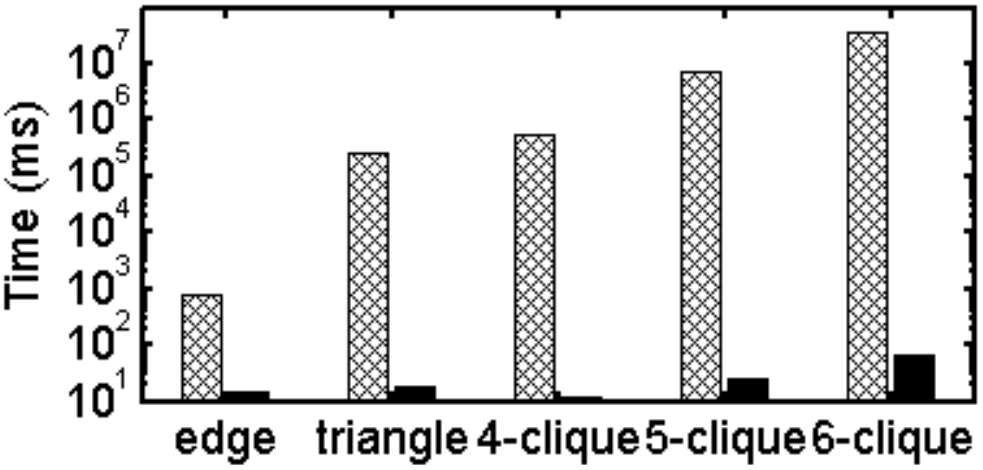}
          \end{minipage}
          \\
         (a) Ca-HepTh
          &
         (b) As-Caida
        \end{tabular}
        \vspace{-0.05in}
        \caption{{\tt CoreExact} and {\tt CoreApp}.}
        \label{fig:cmpExactApp}
\end{figure}

\begin{figure*}
   \hspace*{-.6cm}
   \centering
   \begin{minipage}[t]{0.59\linewidth}
        \centering
        \begin{tabular}{c c c}
          &
          \begin{minipage}{3.30cm}
        	\includegraphics[width=3.40cm]{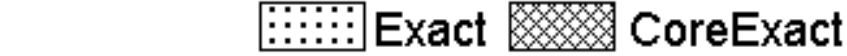}
          \end{minipage}
          &
          \\
          \begin{minipage}{3.30cm}
        	\includegraphics[width=3.45cm]{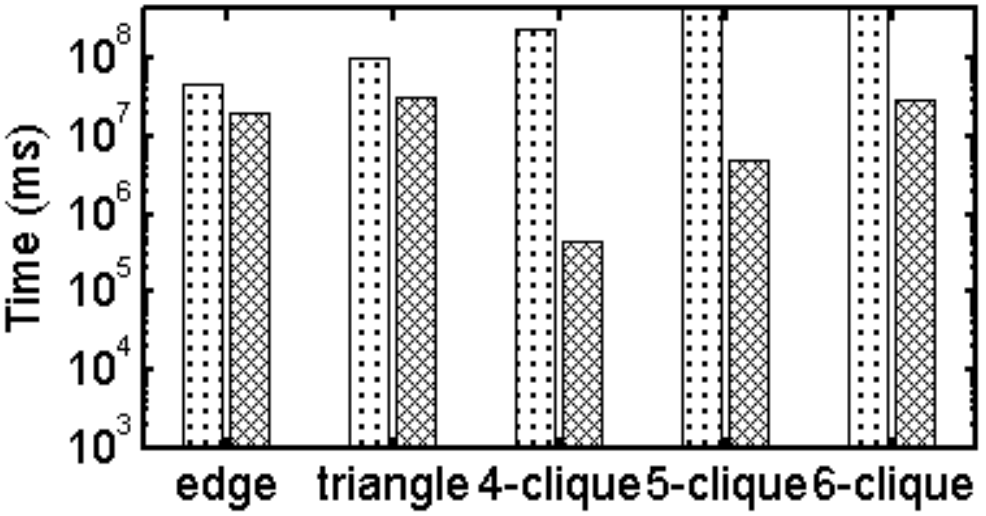}
          \end{minipage}
          &
          \begin{minipage}{3.30cm}
        	\includegraphics[width=3.45cm]{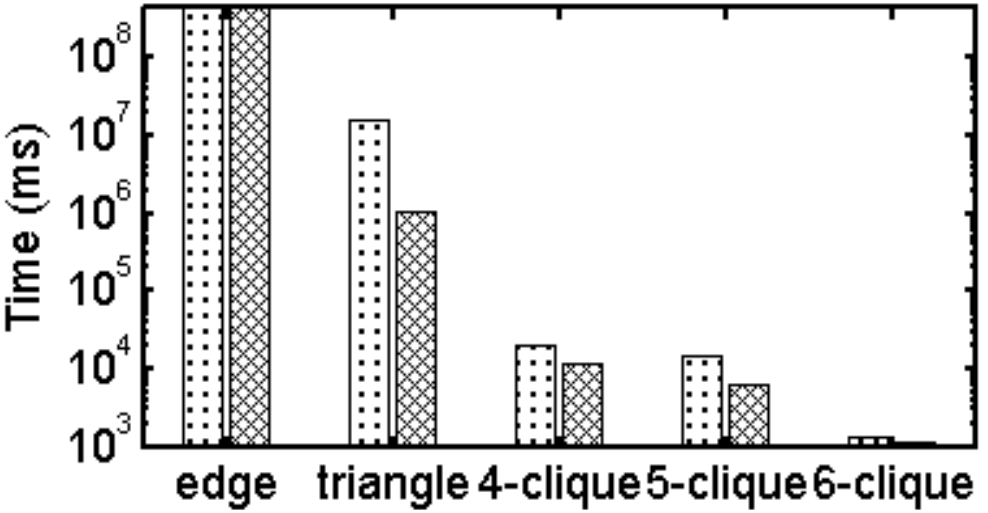}
          \end{minipage}
          &
          \begin{minipage}{3.30cm}
        	\includegraphics[width=3.45cm]{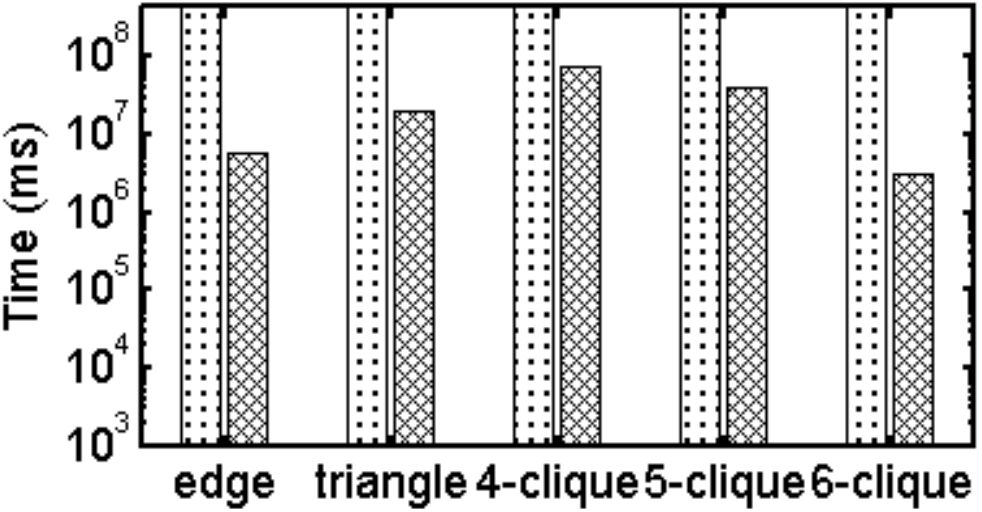}
          \end{minipage}
          \\
          (a) SSCA
          &
          (b) ER
          &
          (c) R-MAT
        \end{tabular}
        \vspace{-0.15in}
        \caption{Efficiency of exact CDS algorithms on random graphs.}
        \label{fig:RandExact}

        \centering
        \begin{tabular}{c c c}
          &
          \begin{minipage}{3.30cm}
	        \includegraphics[width=5.001cm]{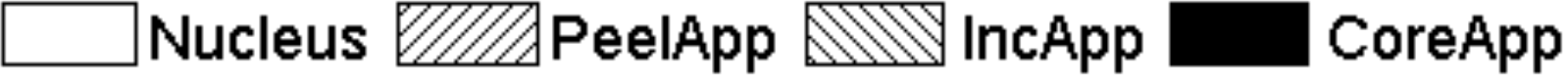}
          \end{minipage}
          &
          \\
          \begin{minipage}{3.30cm}
        	\includegraphics[width=3.45cm]{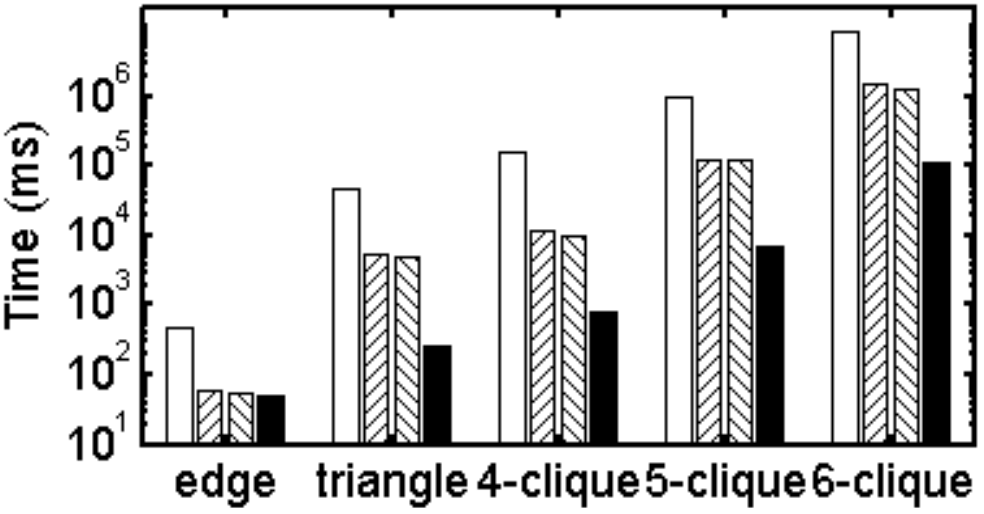}
          \end{minipage}
          &
          \begin{minipage}{3.30cm}
        	\includegraphics[width=3.45cm]{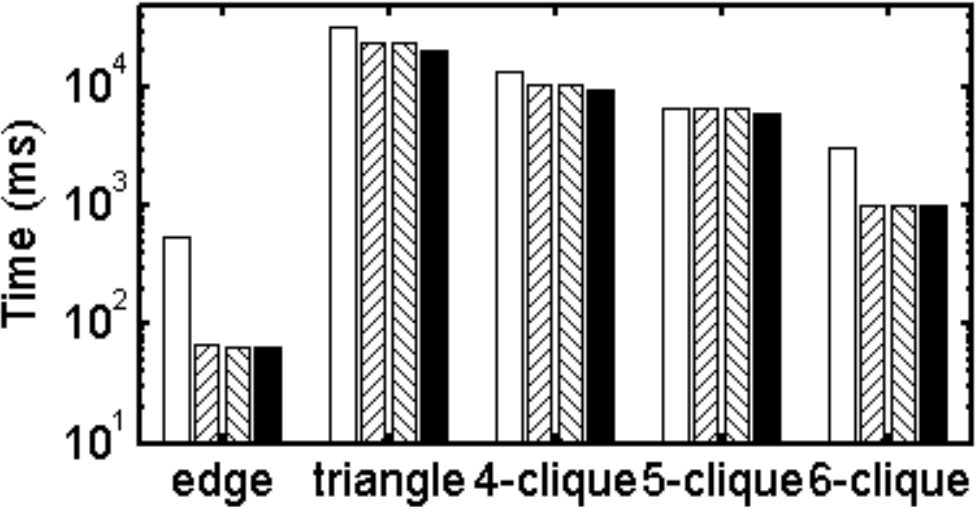}
          \end{minipage}
          &
          \begin{minipage}{3.30cm}
        	\includegraphics[width=3.45cm]{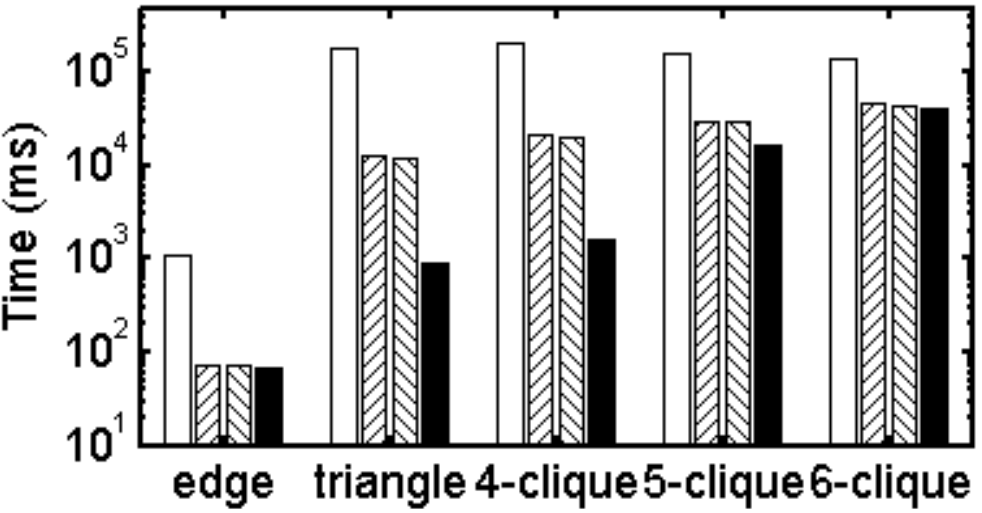}
          \end{minipage}
          \\
          (a) SSCA
          &
          (b) ER
          &
          (c) R-MAT
        \end{tabular}
        \vspace{-0.15in}
        \caption{Efficiency of approximation CDS algorithms on random graphs.}
        \label{fig:RandApp}
   \end{minipage}
   \hspace{3ex}
   \begin{minipage}[t]{0.39\linewidth}
      \centering
      \begin{tabular}{c c}
        \begin{minipage}{3.30cm}
        	\includegraphics[width=5.10cm]{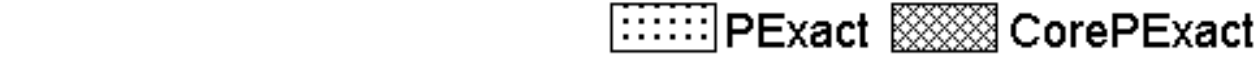}
        \end{minipage}
        &
        \\
        \begin{minipage}{3.30cm}
	       \includegraphics[width=3.45cm]{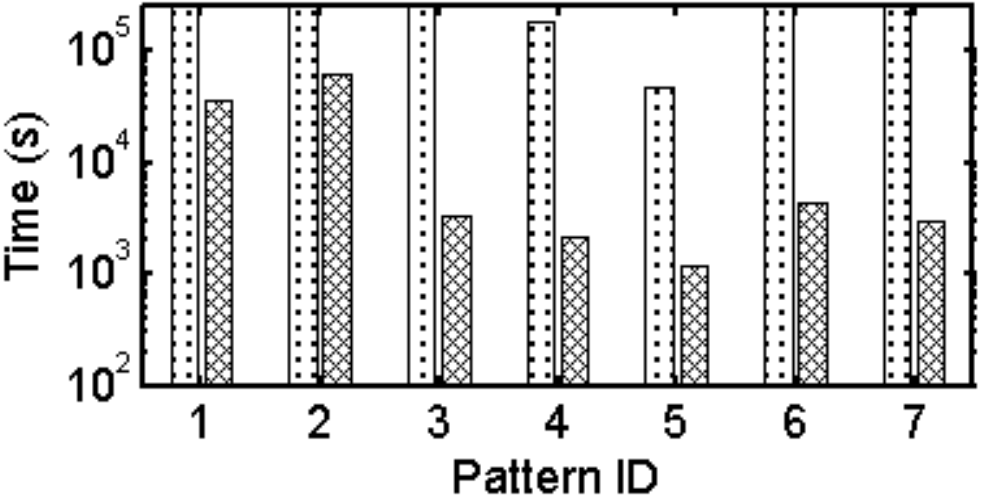}
        \end{minipage}
        &
        \begin{minipage}{3.30cm}
	       \includegraphics[width=3.45cm]{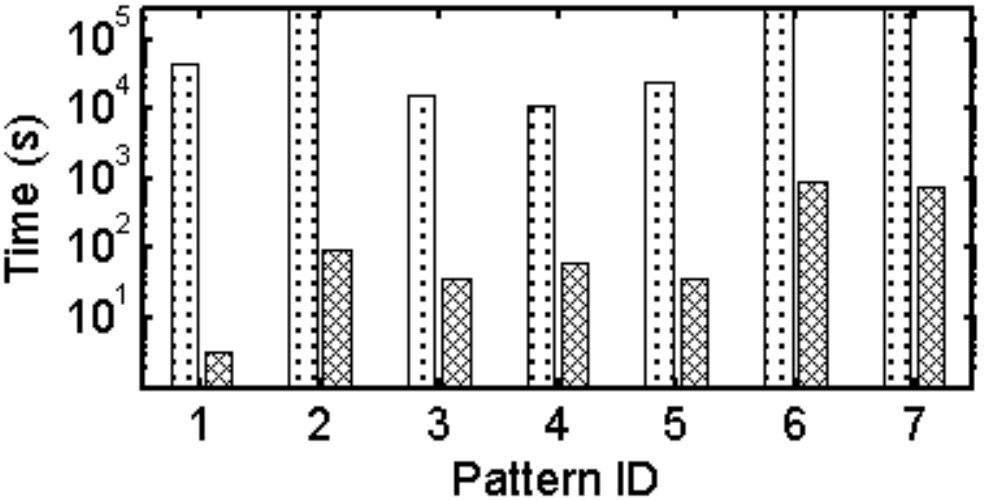}
        \end{minipage}
        \\
        (a) As-733
        &
        (b) Ca-HepTh
      \end{tabular}
      \vspace{-0.15in}
      \caption{Efficiency of exact PDS algorithms.}
      \label{fig:exp-mds-exact}

      \begin{tabular}{c c}
        \begin{minipage}{3.30cm}
	       \includegraphics[width=5.601cm]{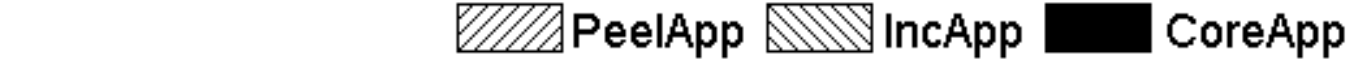}
        \end{minipage}
        &
        \\
        \begin{minipage}{3.30cm}
	       \includegraphics[width=3.45cm]{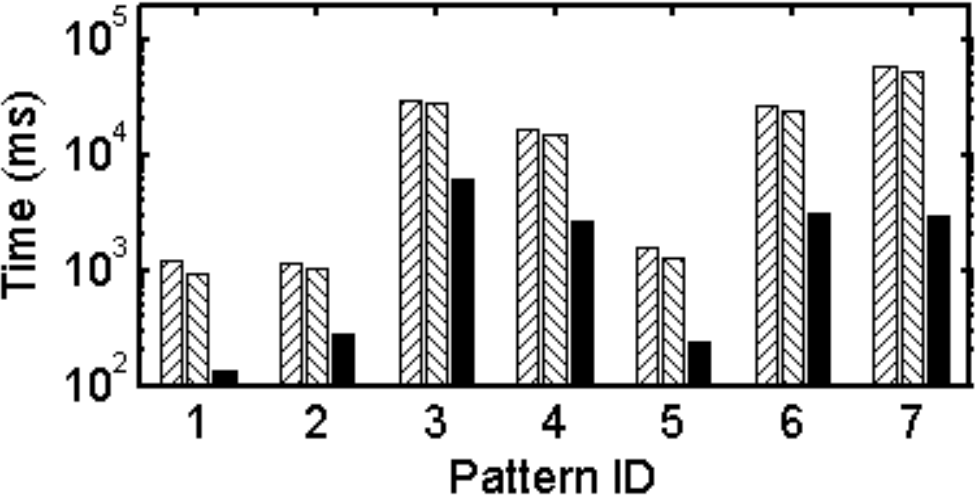}
        \end{minipage}
        &
        \begin{minipage}{3.30cm}
	       \includegraphics[width=3.45cm]{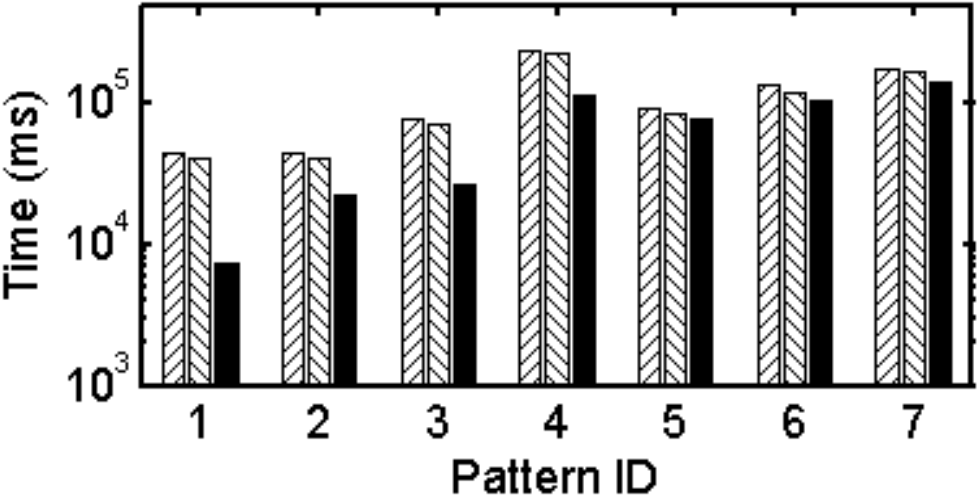}
        \end{minipage}
        \\
        (a) DBLP
        &
        (b) Cit-Patents
      \end{tabular}
      \vspace{-0.15in}
      \caption{Efficiency of approx. PDS algorithms.}
      \label{fig:exp-mds-app}
   \end{minipage}
\end{figure*}

\begin{table*}[ht]
  \hspace*{-4.9cm}
  \scriptsize
  \vspace{-0.2in}
  \centering
  \caption {The edge-densities and clique-densities (pattern-densities) of CDS's (PDS's).}
  \label{tab:density}
  \begin{tabular}{c|c|c|c|c|c|c|c|c|c|c|c|c|c}
     \hline
     \multirow{2}{*}{\bf Dataset}&
     \multicolumn{1}{c|}{\bf edge}&
     \multicolumn{2}{c|}{\bf triangle}&
     \multicolumn{2}{c|}{\bf 4-clique}&
     \multicolumn{2}{c|}{\bf 5-clique}&
     \multicolumn{2}{c|}{\bf 6-clique}&
     \multicolumn{2}{c|}{\bf 2-star}&
     \multicolumn{2}{|c}{\bf diamond}\\
     \cline{2-14}
     & $\rho_{opt}$
     & $\rho_{opt}$& $\rho$(EDS,$\Psi$)
     & $\rho_{opt}$& $\rho$(EDS,$\Psi$)
     & $\rho_{opt}$& $\rho$(EDS,$\Psi$)
     & $\rho_{opt}$& $\rho$(EDS,$\Psi$)
     & $\rho_{opt}$& $\rho$(EDS,$\Psi$)
     & $\rho_{opt}$& $\rho$(EDS,$\Psi$)  \\
     \hline\hline
       S-DBLP&6
       &	22&	22
       &	55&	55
       &    99& 99
       &    132& 132
       &    73.5&	66
       &	165&	165\\
     \hline
       Yeast&3.13
       &	2.11&	0.467
       &    0.67&	0.0
       &    0.0& 0.0
       &    0.0& 0.0
       &	111.3&	18.13
       &	20&	19.2\\
     \hline
       Netscience&9.50
       &	57.25&	57.25
       &	242.3&	242.3
       &    775.2&  775.2
       &    1938&   1938
       &	171&	171
       &	726.8&	726.8\\
     \hline
       As-733&8.19
       &	31.43&	31.35
       &	68.67&	67.94
       &    92.78&    90.23
       &    79.37&    75.13
       &	826.3&  153.8
       &	3376&	437.7\\
     \hline
  \end{tabular}
\end{table*}

\noindent{{\bf\underline{3 Random graphs.}}
As depicted in Figures \ref{fig:RandExact} and \ref{fig:RandApp}, for SSCA and R-MAT, the performance of our proposed solution
is generally satisfactory. For example, the running time of {\tt CoreApp} is 20 (resp., 201) times faster than {\tt PeelApp} in SSCA (resp., R-MAT) when $\Psi$ is the triangle. For ER, the degree values of vertices are almost the same, and the $k_{\max}$-core contains 96.8\% of the vertices in the graph. This affects the pruning effectiveness of {\tt CoreApp}, rendering a lower performance gain. All in all, our core-based algorithms favor real-world graphs.

\noindent
{\bf\underline{4 Densities of CDS's.}}
We next show the clique-densities of CDS's for different $h$-cliques ($h\geq3$). Specifically, for each dataset, we first use {\tt CoreExact} to compute its exact CDS's for different cliques, then compute the $h$-clique-densities of its EDS, and finally report the $h$-clique-densities of its EDS and CDS's in Table \ref{tab:density}.
Due to the space limitation, we only show the results on four small datasets (where S-DBLP is a sub-graph of the DBLP dataset used in Section \ref{sec:expMDS}).
We remark that for Yeast dataset, the EDS does not contain any 4, 5, 6-clique, so its $h$-clique-density is 0.0 ($h$$\geq$4).
As we can see, for S-DBLP and Netscience, their CDS's are exactly the same as EDS. In fact, they are the maximal clique in the graph, which confirms the conclusion that CDS's can be used for identifying large near-cliques \cite{tsourakakis2015k}.
For Yeast and As-733, the clique-density values of CDS's are higher than those on the EDS.

\subsection{DSD for Pattern-Densities}
\label{sec:expMDS}

Next, we present the results for general patterns in Figure~\ref{fig:7motifs}. For lack of space, we only report results on a subset of datasets. In addition, we perform case studies on real datasets for these patterns.

\noindent
{\bf\underline{1 Exact algorithms.}} In Figure~\ref{fig:exp-mds-exact}, we present the efficiency results of exact algorithms on two small datasets As-733 and Ca-HepTh. The bars touching the top of the figures mean that the corresponding algorithms cannot find densest subgraphs within 3 days, at which point we time them out.  We can see that {\tt CorePExact} is up to four orders of magnitude faster than {\tt PExact}.  For different patterns, their running times vary, because the number of pattern instances in the underlying graph for each pattern can be very different.  For any two patterns $\Psi_1$ and $\Psi_2$ which are not ``special patterns'' (e.g., star and loop), we observe that if $|V_{\Psi_1}|$=$|V_{\Psi_2}|$ and $\Psi_1\subseteq\Psi_2$, then it takes longer to find the densest subgraph w.r.t. $\Psi_1$ than w.r.t. $\Psi_2$.  This is because the number of pattern instances of $\Psi_1$ is more than that of  $\Psi_2$. For example, {\it c3-star} is a subgraph of {\it 2-triangle} (with 4 vertices) and it takes more time to find the densest subgraph w.r.t. {\it c3-star} than {\it 2-triangle}.

\noindent
{\bf\underline{2 Approximation algorithms.}}  As shown in Figure~\ref{fig:exp-mds-app}, the running time of an approximation algorithm increases with the graph size in general.  This is because computing the cores is more expensive for a larger graph. Again, {\tt CoreApp} performs the fastest, and it is up to two orders of magnitude faster than {\tt PeelApp}.  For special patterns (star and diamond), we use optimized algorithms (details are in~\cite{fullVersion}) for core decomposition.  Hence, they need less time cost than other more complicated patterns (e.g., 2-triangle).

\noindent
{\bf\underline{3 Case studies.}} We use two real graphs, namely {\it S-DBLP} and {\it Yeast}. S-DBLP ($|V|$=478, $|E|$=1,086) is a sub-graph of the DBLP dataset. It is the co-authorship network of authors who published at least two DB/DM papers between 2013 and 2015. We consider two 3-vertex patterns, i.e., {\it triangle} and {\it 2-star} (Figure~\ref{fig:7motifs}).
We use the exact algorithm to compute their PDS's, as depicted in Figure~\ref{fig:dblp}.
In a {\it triangle} pattern, every pair of vertices is connected, so the PDS tends to be a near-clique~\cite{tsourakakis2015k}.
The researchers involved in this PDS possess a close collaboration relationship: any two researchers have published papers together. The PDS for {\it 2-star} is quite different from that of {\it triangle}. Particularly, researchers in the ``central'' part of the PDS formed by {\it 2-star} tend to be group directors or senior researchers (e.g., Profs. Jiawei Han and  Chengxiang Zhai), who are linked to their former students or postdocs. For this PDS, over half of the researchers worked in Prof. Han's lab before. Similarly, for Yeast, different PDS's can capture different semantics~\cite{fullVersion}.

\begin{figure}
    \hspace*{-.2cm}
	\centering
    \includegraphics[width=0.9\linewidth]{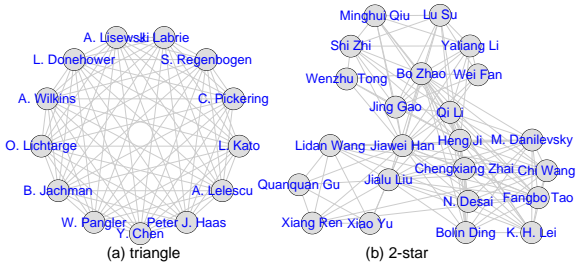}
    \vspace{-0.1in}
    \caption{The densest subgraphs found in DBLP network, based on triangle and 2-star patterns.}
    \label{fig:dblp}
\end{figure}

\noindent
{\bf\underline{4 Densities of PDS's.}}
In this experiment, we analyze the pattern-densities of PDS's for different patterns. Again, for each dataset, we first compute its exact EDS and PDS's for all patterns, and then report the pattern-densities of its EDS and PDS's in Table~\ref{tab:density}. Due to the space limitation, we only show results of 2-star and diamond. As we can observe, for most of the datasets, the pattern-density values of PDS's are higher than those on the EDS.

\section{Conclusions}
\label{sec:conclusion}

The densest subgraph discovery (DSD) problem is fundamental to many graph applications.  In this paper, we develop new algorithms to discover edge- and $h$-clique-based densest subgraphs, which are well studied in the literature.  Our main observation is that densest subgraphs can be derived efficiently from $k$-cores. We extend $k$-core to ($k$, $\Psi$)-core by incorporating an $h$-clique $\Psi$. Based on ($k$, $\Psi$)-cores, we develop core-based exact and approximation solutions to the DSD problem. Moreover, we generalize the edge- and $h$-clique-density to pattern-density and show that our solutions can be easily adapted for finding pattern-density-based densest subgraphs. Extensive experiments show that our exact (resp., approximation) ``core-based solutions'' outperform existing algorithms by up to four orders (resp., two orders) of magnitude.

In the future, we will attempt to derive even tighter bounds for densities of ($k$, $\Psi$)-cores. We will also extend our core-based algorithms for finding densest subgraphs with size constraints. Another interesting research direction is to exploit our core-based techniques to speed up the randomized approximation algorithm in \cite{mitzenmacher2015scalable}.


\section*{Acknowledgments}
Reynold Cheng was supported by the Research Grants Council of Hong Kong (RGC Projects HKU 17229116, 106150091, and 17205115) and the University of Hong Kong (Projects 104004572, 102009508, and 104004129), and the Innovation and Technology Commission of Hong Kong (ITF project MRP/029/18).
Lakshmanan's research was supported in part by a discovery grant and a discovery accelerator supplement grant from NSERC (Canada).
Xuemin Lin was supported by 2019DH0ZX01, 2018YFB1003504, NSFC61232006, DP180103096, and DP170101628.
We would like to thank Dr. Charalampos E. Tsourakakis for bringing his KDD'15 paper to our attention.

\clearpage
\bibliographystyle{abbrv}
\bibliography{mds}

\clearpage

\appendix

\section{Dataset Statistics}
\label{app:statistics}

\begin{figure}[ht]
	\centering
	\includegraphics[width=0.96\linewidth]{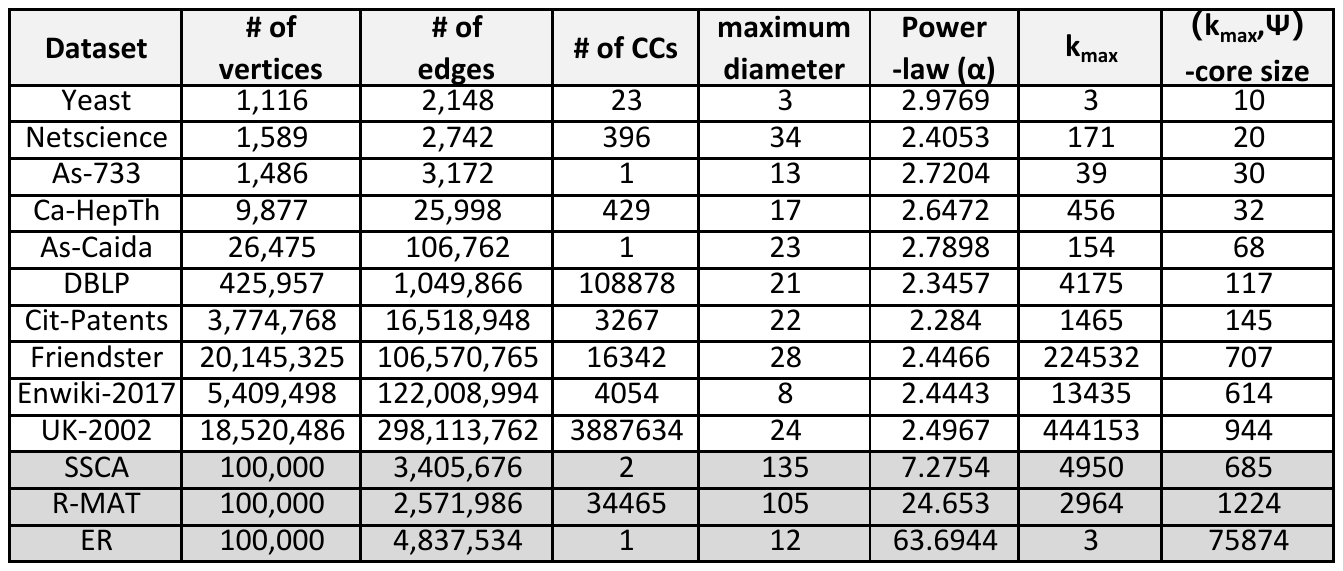}
    \vspace{-0.15in}
	\caption{Characteristics of the networks in our experiments.}
	\label{fig:statistics}
\end{figure}

In this section, we analyze the properties of the graph datasets used, according to their number of vertices and edges, the number of connected components (\# of CCs), diameter, decay factor $\alpha$ of the power law distribution (where $\alpha$ in $f(x)$=$x^{-\alpha}$), $k_{\max}$, and size of ($k_{\max}$, $\Psi$)-core (where $\Psi$ is triangle).  Fig.~\ref{fig:statistics} shows these statistics.  We can see that the networks have a variety of characteristics. For example, the number of CCs ranges from 1 to 3.8M; the diameter varies from 3 to 135; the value of $\alpha$ is between 2.28 and 63.7. To conclude, the graphs we used exhibit a wide range of characteristics.


\section{Additional Proofs}
\label{sec:appProofs}

\subsection{Proofs for Section~\ref{sec:baseline}}
\label{sec:proof-sec4}

\noindent\textsc{Lemma}~\ref{lemma:exactTime}.
\emph{Given a graph $G(V$,$E)$ and an $h$-clique $\Psi(V_\Psi$,$E_\Psi)$, {\tt Exact} takes
${\mathcal O}$$\left( {n\cdot{d-1 \choose h-1}}+ (n|\Lambda|+\min {{(n,|\Lambda|)}^3})\log n\right)$ time and $\mathcal{O}$ $(n+|\Lambda|)$ space, where $\Lambda$ is set of ($h-$1)-clique instances in $G$ \cite{tsourakakis2015k}.
}

\begin{proof}
To collect the $h$-clique instances, for each vertex $v$, we compute the number of clique instances it involves. In the worst case, any $h$--1 neighbors of $v$ can form an $h$-clique with $v$, so the maximum number of clique instances it involves is ${d-1 \choose h-1}$. As a result, collecting all the instances of the $h$-clique and building the flow network takes $O(n\cdot{d-1 \choose h-1})$.
The number of binary search queries can be bound by ${\mathcal O}\left(\left\lceil {\log ({n^{|V_\Psi|}} \cdot n \cdot (n - 1))} \right\rceil\right)$=${\mathcal O}(\log n)$.
In each binary search query, we adopt the Gusfield's algorithm~\cite{ahuja1994improved} to compute the minimum st-cut of the flow network and its time cost is ${\mathcal O}(n \cdot |\Lambda | + \min{(n,|\Lambda |)^3})$, where $|\Lambda|$ denotes the number of clique instances.
In addition, to compute the minimum st-cut, the space cost of is linear to the size of flow network,
i.e., $\mathcal{O}(n+|\Lambda|)$.
Therefore, the lemma holds.
\end{proof}

\noindent\textsc{Lemma}~\ref{lemma:peelAppTime}.
\emph{Given a graph $G$ and an $h$-clique $\Psi(V_\Psi,$$E_\Psi)$, then {\tt PeelApp} takes ${\mathcal O}\left( {n \cdot {{d-1} \choose {h-1}}}\right)$ time and ${\mathcal O}\left(m\right)$ space \cite{tsourakakis2015k}.}

\begin{proof}
For each vertex, computing its clique-degree can be completed in ${\mathcal O}\left( {{d-1 \choose h-1}}\right)$ time. After removing a vertex $v$, we need to update the clique-degrees of vertices which share at least one clique instance with $v$, which can be completed in ${\mathcal O}\left( {{d-1 \choose h-1}}\right)$ time.
After computing the clique-degrees, the clique-density of each residual subgraph can be updated in constant time.
Note that computing the clique-degree of $v$ takes $m$ space as we sequentially compute the clique instances.
Hence, the lemma holds.
\end{proof}

\subsection{Proofs for Section~\ref{sec:coredensity}}
\label{sec:proof-sec5.2}

\noindent\textsc{Lemma}~\ref{lemma:connectivity}.
\emph{Given a graph $G$ and an $h$-clique $\Psi$, the connected components of CDS $D$ have the same clique-density.}

\begin{proof}
We first consider the case that $D$ has two connected components $C_1(V_{C_1}, E_{C_1})$ and $C_2(V_{C_2}, E_{C_2})$, and assume that their clique-densities are not the same.
Then, we can easily conclude
\begin{equation}
{\rho _{opt}} = \frac{{\mu ({C_1},\Psi ) + \mu ({C_2},\Psi )}}{{|{V_{{C_1}}}| + |{V_{{C_2}}}|}}.
\end{equation}

Since $|V_{C_1}|$ and $|V_{C_2}|$ are non-negative, we can derive
\begin{equation}
\small{
\min \left\{ {\frac{{\mu ({C_1},\Psi )}}{{|{V_{{C_1}}}|}},\frac{{\mu ({C_2},\Psi )}}{{|{V_{{C_2}}}|}}} \right\} \le {\rho _{opt}} \le \max \left\{ {\frac{{\mu ({C_1},\Psi )}}{{|{V_{{C_1}}}|}},\frac{{\mu ({C_2},\Psi )}}{{|{V_{{C_2}}}|}}} \right\}.}
\end{equation}

Notice that, $\rho ({C_1},\Psi ) = \frac{{\mu(C_1, \Psi)}}{{|{V_{{C_1}}}|}}$ and $\rho ({C_2},\Psi ) = \frac{{\mu(C_2, \Psi)}}{{|{V_{{C_2}}}|}}$. Therefore, if $\rho(C_1, \Psi)\neq \rho(C_2, \Psi)$, we can obtain a denser subgraph by removing the one with smaller density. This, however, contradicts the fact that $D$ is the CDS. Hence, $C_1$ and $C_2$ must be of the same density. The proof for the case where $D$ consists of more than two connected components is a straightforward extension, and is omitted here.
\end{proof}
\subsection{Proofs for Section~\ref{sec:pdsAlgorithm}}
\label{sec:exactCorrect}

\noindent\textsc{Lemma}~\ref{lemma:peelApp}.
\emph{Given a graph $G$ and a pattern $\Psi(V_\Psi, E_\Psi)$, the subgraph $S^*$ returned by {\tt PeelApp} is a $\frac{1}{|V_\Psi |}$-approximation solution to the PDS problem.}

\begin{proof}
Consider the PDS $D(V_D, E_D)$. After removing any vertex $v$ from $D$, the density will decrease, so we get
\begin{equation}
\small
\rho(D, \Psi) \geq \frac{{\mu(D, \Psi)-deg_D(v, \Psi)}}{{|V_D|-1}} \Leftrightarrow deg_D(v, \Psi) \geq \rho(D, \Psi).
\end{equation}

Let us consider the iteration before {\tt PeelApp} removes the first vertex $v$ which is in $D$.
We denote the subgraph in this iteration by $O(V_O, E_O)$ ($D\subseteq O$).
Since {\tt PeelApp} works in a greedy manner, we can conclude that,
for each vertex $u\in O$, $deg_O(u,\Psi)\geq deg_O(v,\Psi) \geq deg_D(v,\Psi) \geq \rho(D, \Psi)$.
As a result, we have $\mu(O, \Psi)=\frac{1}{|V_\Psi|} \sum\limits_{u \in O} {deg_G} (u, \Psi) \geq \frac{1}{|V_\Psi|}\cdot|V_O|\cdot\rho(D, \Psi)$.
Then, the density of $O$ is
\begin{equation}
\rho(O, \Psi)=\frac{\mu(O, \Psi)}{|V_O|}\geq \frac{1}{|V_\Psi|}\cdot\rho(D, \Psi)=\frac{\rho_{opt}}{|V_\Psi|}.
\end{equation}
Since $O$ is one of the $n$ residual subgraphs and $S^*$ is the densest one, we have $\rho(S^*, \Psi)\geq \rho(D, \Psi) \geq \frac{\rho_{opt}}{|V_\Psi|}$.
\end{proof}

\textsc{Theorem}
\ref{thm:exactCorrect}
\emph{Given a graph $G$ and a pattern $\Psi(V_\Psi, E_\Psi)$, the algorithm {\tt PExact} correctly finds the PDS of $G$ w.r.t. pattern-density of $\Psi$.}

We first prove the correctness of the criterion for deciding when to stop the binary search. Let ${\rho (G[{\Gamma _1}],\Psi )}$ and ${\rho (G[{\Gamma _2}],\Psi )}$ be the pattern-density of any two arbitrary subgraphs $G[{\Gamma _1}]$ and $G[{\Gamma _2}]$ of $G$, where $\Gamma _1$ and $\Gamma _1$ represent their vertex sets, respectively.
We show that if their density values are different, their difference cannot be arbitrary small.
\begin{lemma}
\label{lemma:binary}
\emph{For any two sets of vertices $\Gamma_1$ and $\Gamma_2\subseteq V$, if $\rho (G[{\Gamma _1}],\Psi ) \ne \rho (G[{\Gamma _2}],\Psi )$, then their difference is at least $\frac{1}{{n(n - 1)}}$.
}
\end{lemma}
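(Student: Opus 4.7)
\textbf{Proof proposal for Lemma~\ref{lemma:binary}.} The plan is to exploit the fact that the pattern-density of any induced subgraph is a rational number of a very restricted form: the numerator $\mu(G[\Gamma_i],\Psi)$ is a non-negative integer (a count of pattern instances), and the denominator $|\Gamma_i|$ is a positive integer bounded by $n$. A standard lower bound on the gap between two unequal fractions with bounded denominators will then give the desired $\frac{1}{n(n-1)}$.

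Concretely, I would set $a_i = \mu(G[\Gamma_i],\Psi)$ and $b_i = |\Gamma_i|$ for $i=1,2$, so that $\rho(G[\Gamma_i],\Psi) = a_i/b_i$. Under the assumption $a_1/b_1 \neq a_2/b_2$, the integer $a_1 b_2 - a_2 b_1$ is nonzero, hence has absolute value at least $1$. This immediately yields
\begin{equation}
\left|\frac{a_1}{b_1} - \frac{a_2}{b_2}\right| \;=\; \frac{|a_1 b_2 - a_2 b_1|}{b_1 b_2} \;\geq\; \frac{1}{b_1 b_2}.
\end{equation}

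The remaining step, which I expect to be the only subtle point, is to argue $b_1 b_2 \leq n(n-1)$ rather than the naive bound $n^2$. The key observation is that if both $b_1 = n$ and $b_2 = n$, then $\Gamma_1 = \Gamma_2 = V$, so $G[\Gamma_1] = G[\Gamma_2] = G$ and the two densities would coincide, contradicting the hypothesis. Hence at least one of $b_1, b_2$ is at most $n-1$, while the other is at most $n$, giving $b_1 b_2 \leq n(n-1)$. Combined with the displayed inequality above, this completes the proof.

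The main obstacle is really just spotting that the case $b_1 = b_2 = n$ must be ruled out using the unequal-density hypothesis; without this observation, the argument would only yield $\frac{1}{n^2}$, which is weaker than what Algorithm~\ref{alg:basicExact}'s termination criterion requires. All other ingredients are elementary.
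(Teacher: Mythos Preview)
Your proof is correct and follows essentially the same approach as the paper: write the density difference as $|a_1 b_2 - a_2 b_1|/(b_1 b_2)$ with integer numerator and bound the denominator product by $n(n-1)$. The only cosmetic difference is in the case split for that bound: the paper separates the case $|\Gamma_1|=|\Gamma_2|$ (where the difference simplifies to at least $1/n$) from $|\Gamma_1|\neq|\Gamma_2|$ (where the product of two distinct sizes in $\{1,\dots,n\}$ is at most $n(n-1)$), whereas you more directly observe that $b_1=b_2=n$ forces $\Gamma_1=\Gamma_2=V$ and hence equal densities.
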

\begin{proof}
For any two non-empty sets of vertices $\Gamma_1$, $\Gamma_2\subseteq V$, their pattern-density difference is
\begin{equation}
\small
\begin{array}{l}
\Delta  = \left| {\rho (G[{\Gamma _1}],\Psi ) - \rho (G[{\Gamma _2}],\Psi )} \right| = \left| {\frac{{\mu (G[{\Gamma _1}],\Psi )}}{{|{\Gamma _1}|}} - \frac{{\mu (G[{\Gamma _2}],\Psi )}}{{|{\Gamma _2}|}}} \right|\\
 = \left| {\frac{{\mu (G[{\Gamma _1}],\Psi )|{\Gamma _2}| - \mu (G[{\Gamma _2}],\Psi )|{\Gamma _1}|}}{{|{\Gamma _1}||{\Gamma _2}|}}} \right|
\end{array}
\end{equation}
, which is larger than 0, since $\rho (G[{\Gamma _1}],\Psi ) \ne \rho (G[{\Gamma _2}],\Psi )$.

If $\left| {{\Gamma _1}} \right|$=$\left| {{\Gamma _2}} \right|$, then
$\Delta  \ge \frac{1}{n}$; otherwise, since $0 \textless |{\Gamma _1}||{\Gamma _2}| \le n(n - 1)$, we have
$\Delta  \ge \frac{1}{{n(n - 1)}}$.
Hence, the lemma holds.
\end{proof}

The lemma above implies that, during the binary search process, if the distance between the lower and upper bounds is at most $\frac{1}{{n(n - 1)}}$, then we can stop the binary search.

Next, we discuss the capacity of the minimum st-cut. For ease of presentation, we introduce a notation, $\Psi [R, i]$, to indicate the number of pattern instances that involve exactly $i$ ($1\leq i\leq |V_\Psi|$) vertices from a given set $R$ of vertices.

\begin{lemma}
\label{lemma:mincut}
\emph{Given a flow network ${\mathcal F}(V_{\mathcal F}, E_{\mathcal F})$ constructed by {\tt construct+},
let (${\mathcal S}$, ${\mathcal T}$) denote its minimum st-cut,
${\mathcal A_1}$=${\mathcal S}\bigcap {\mathcal A}$,
${\mathcal B_1}$=${\mathcal S}\bigcap {\mathcal B}$,
${\mathcal A_2}$=${\mathcal T}\bigcap {\mathcal A}$, and
${\mathcal B_2}$=${\mathcal T}\bigcap {\mathcal B}$.
Then, the capacity of the minimum st-cut is
\begin{equation}
\label{eq:capacity}
\Phi  + \alpha \cdot|{V_\Psi }|\cdot|{\mathcal{A}_1}| + \Pi ,
\end{equation}
where $\Phi$=$\sum\limits_{v \notin {\mathcal A_1}} {de{g_G}} (v,\Psi )$, and $\Pi=\sum\limits_{i = 1}^{|{V_\Psi }| - 1} i \cdot\Psi [{\mathcal A_1},i]$.
}
\end{lemma}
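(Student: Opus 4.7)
The plan is to compute the capacity of $(\mathcal S, \mathcal T)$ by enumerating every edge of $\mathcal F$ whose tail lies in $\mathcal S$ and whose head lies in $\mathcal T$, then exploit minimality to eliminate the only awkward term. I interpret $\mathcal A$ as the set of vertex-nodes (copies of $V$) and $\mathcal B$ as the set of group-nodes (copies of $\Lambda'$), so $\mathcal A_1,\mathcal A_2,\mathcal B_1,\mathcal B_2$ induce a partition of $V_{\mathcal F}\setminus\{s,t\}$. From Algorithm~\ref{alg:construct+}, exactly four kinds of edges can cross the cut: (i) source edges $s\to v$ for $v\in\mathcal A_2$, with total capacity $\sum_{v\in\mathcal A_2} deg_G(v,\Psi)=\Phi$; (ii) sink edges $v\to t$ for $v\in\mathcal A_1$, with total capacity $\alpha|V_\Psi|\cdot|\mathcal A_1|$; (iii) vertex-to-group edges $v\to g$ with $v\in\mathcal A_1$, $g\in\mathcal B_2$, and $v\in g$, each contributing $|g|$; and (iv) group-to-vertex edges $g\to v$ with $g\in\mathcal B_1$, $v\in\mathcal A_2$, and $v\in g$, each contributing $|g|(|V_\Psi|-1)$.

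The central step is to show that in any minimum cut the contribution of type (iv) is zero, i.e., each $g\in\mathcal B_1$ has all $|V_\Psi|$ of its instance's vertices in $\mathcal A_1$. Fix a group $g$ and let $j$ denote the number of its vertices lying in $\mathcal A_2$. Keeping $g$ in $\mathcal S$ incurs $j\cdot|g|(|V_\Psi|-1)$ units from type-(iv) edges, whereas moving $g$ to $\mathcal T$ instead incurs $(|V_\Psi|-j)\cdot|g|$ units from type-(iii) edges. For $j\ge 1$ we have $(|V_\Psi|-j)\,|g|\le(|V_\Psi|-1)\,|g|\le j\,(|V_\Psi|-1)\,|g|$, so the latter never exceeds the former; by minimality any such $g$ can be relocated to $\mathcal B_2$ without increasing the cut. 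Consequently type (iv) vanishes, and conversely groups whose underlying instance has all vertices in $\mathcal A_1$ prefer $\mathcal B_1$ (since placing them in $\mathcal B_2$ would cost $|V_\Psi|\cdot|g|$), so they contribute nothing to type (iii) either.

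With this structure I rewrite the type-(iii) total by partitioning groups in $\mathcal B_2$ according to how many of their vertices fall in $\mathcal A_1$. A group $g$ whose vertex set meets $\mathcal A_1$ in exactly $i$ vertices contributes $i\cdot|g|$ to the cut, and since each group accounts for $|g|$ pattern instances sharing that vertex set, aggregating over $i$ yields $\sum_{i=1}^{|V_\Psi|-1} i\cdot\Psi[\mathcal A_1,i]=\Pi$; the $i=0$ terms vanish, while $i=|V_\Psi|$ is excluded because those groups live in $\mathcal B_1$ by the structural step. Summing the contributions from (i)-(iv) gives the claimed formula $\Phi+\alpha|V_\Psi||\mathcal A_1|+\Pi$. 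The main obstacle is the structural argument above: the inequality $(|V_\Psi|-j)\,|g|\le j\,(|V_\Psi|-1)\,|g|$ can be tight (e.g.\ $j=1$ and $|V_\Psi|=2$), so the argument only guarantees that \emph{some} minimum cut enjoys the property rather than asserting uniqueness, but this is all one needs in order to evaluate the minimum capacity via the displayed formula.
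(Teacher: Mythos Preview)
Your proposal is correct and follows essentially the same approach as the paper: both arguments enumerate the four types of crossing edges, then use the comparison $(|V_\Psi|-j)\,|g|\le j\,(|V_\Psi|-1)\,|g|$ (the paper writes it with $i=|V_\Psi|-j$ vertices on the $\mathcal A_1$ side) to argue that any group with a vertex in $\mathcal A_2$ can be moved to $\mathcal B_2$ without increasing the cut, leaving only the type-(iii) contribution $\Pi$. Your explicit remark that the inequality may be tight, so the formula is guaranteed only for \emph{some} minimum st-cut, is a useful clarification that the paper glosses over.
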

\begin{proof}
We consider two different cases as follows.

\noindent\textbf{\underline{$i$) ${\mathcal A_1}$=$\emptyset$:}} In this case, ${\mathcal B_1}$ must be empty. This is because, if ${\mathcal B_1}\ne\emptyset$, the capacity of the st-cut must be larger than the situation when both ${\mathcal A_1}$ and ${\mathcal B_1}$ are empty sets, so we have ${\mathcal B_1}$=$\emptyset$.
In other words, ${\mathcal S}$=\{$s$\},
${\mathcal T}$=${\mathcal A}\cup {\mathcal B}\cup$ \{$t$\}.
It is easy to conclude that the lemma holds, as the capacity of the st-cut is
$\sum\limits_{v \in {{\mathcal A}_2}} {de{g_G}(v,\Psi )}$=$\sum\limits_{v \notin {{\mathcal A}_1}} {de{g_G}(v,\Psi )}$=$\Phi$.

\noindent\textbf{\underline{$ii$) ${\mathcal A_1}\ne\emptyset$:}} In this case, we need to consider four possible parts of the overall st-cut capacity:

(1) the capacity of the edges from the source node $s$ to nodes in $\mathcal A_2$, which is $\sum\limits_{v \in {{\mathcal A}_2}} {de{g_G}(v,\Psi )}$=$\sum\limits_{v \notin {{\mathcal A}_1}} {de{g_G}(v,\Psi )}$=$\Phi$.

(2) the capacity of the edges from nodes in $\mathcal A_1$ to the sink node $t$, which is $\alpha  \cdot |{V_\Psi }| \cdot |{{\mathcal A}_1}|$.

(3) the capacity of edges from nodes in $\mathcal B_1$ to nodes in $\mathcal A_2$. Consider a specific node (i.e., a pattern instance) in $\mathcal B_1$ and let $X$ ($|X|$=$|V_\Psi|$) be the set of its vertices. If $X\subseteq\mathcal A_1$, then this node and its vertices are in the same partition $\mathcal S$ and thus we skip it directly. If $|X\cap\mathcal A_1|$=$i$ ($1\leq i\leq |V_\Psi|-1$),then the capacity of all the patterns, whose vertex sets are $X$,
is $\mu (G[X],\Psi )(|{V_\Psi }| - 1)(|{V_\Psi }| - i)$, where $\mu (G[X],\Psi )$ equals to the size of this group.

(4) the capacity of edges from nodes in $\mathcal A_1$ to nodes in $\mathcal B_2$. Consider a specific node (i.e., a pattern instance) in $\mathcal B_2$ and let $X$ ($|X|$=$|V_\Psi|$) be the set of its vertices. Again, if $X\subseteq\mathcal A_2$, then this node and its vertices are in the same partition $\mathcal T$ and thus we skip it directly. If $|X\cap\mathcal A_2|$=$|V_\Psi|-i$ ($1\leq i\leq |V_\Psi|-1$),then $|X\cap\mathcal A_1|$=$i$ and the capacity is $i \cdot \mu (G[X],\Psi )$, where $\mu (G[X],\Psi )$ equals to the size of this group.

Notice that, for each node of $\mathcal B$, if its vertices come from both $\mathcal A_1$ and $\mathcal A_2$, it will be considered by both parts (3) and (4). However, since $1\leq i\leq |V_\Psi|-1$, we always have
$\mu (G[X],\Psi )(|{V_\Psi }| - 1)(|{V_\Psi }| - i) \ge i \cdot \mu (G[X],\Psi )$.
As a result, for any node with vertex set $X$ in part (3), if it shares at least one vertex with $\mathcal A_2$, we can move the node of $X$ into $\mathcal B_2$ to reduce the capacity, so we do not need to consider the capacity from part (3).

By summing up the capacity from parts (1), (2), and (3), we get the overall capacity as shown in Eq~(\ref{eq:capacity}).
Hence, the lemma holds for these two cases.
\end{proof}

We further present another lemma, which theoretically shows the correctness of the binary search in {\tt CoreExact}.

\begin{lemma}
\label{lemma:guess}
\emph{Consider a flow network ${\mathcal F}$ built on $G$,
\begin{enumerate}
  \item if there is a subgraph with vertex set $Y$ s.t. $\rho(G[Y], \Psi)\textgreater\alpha$, then any minimum st-cut ($\mathcal S$, $\mathcal T$) of ${\mathcal F}$ has $\mathcal S\backslash$\{$s$\}$\ne\emptyset$;
  \item if there does not exist a subgraph with vertex set $Y$ such that $\rho(G[Y], \Psi)\textgreater\alpha$, then the minimum st-cut satisfies that $\mathcal S$=\{$s$\} and $\mathcal T$=$\mathcal A\cup\mathcal B\cup$\{$t$\}.
\end{enumerate}}
\end{lemma}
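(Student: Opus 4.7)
The plan is to reduce the statement to a comparison between the capacity of an arbitrary st-cut and the capacity of the \emph{trivial} cut $\mathcal{S}_0 = \{s\}$, $\mathcal{T}_0 = \mathcal{A} \cup \mathcal{B} \cup \{t\}$. The trivial cut severs only the source-edges, so its capacity equals $\sum_{v \in V} \deg_G(v, \Psi) = |V_\Psi| \cdot \mu(G, \Psi)$, where the last equality is a standard double count (each pattern instance contributes $|V_\Psi|$ incidences). I will then invoke Lemma~\ref{lemma:mincut} to express the capacity of any candidate minimum st-cut and take the difference.

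The key algebraic step is to simplify $\Phi + \alpha|V_\Psi||\mathcal{A}_1| + \Pi$ from Lemma~\ref{lemma:mincut}. Using the identity $\sum_{v \in \mathcal{A}_1} \deg_G(v, \Psi) = \sum_{i=1}^{|V_\Psi|} i \cdot \Psi[\mathcal{A}_1, i]$ (again by double-counting) together with the observation $\Psi[\mathcal{A}_1, |V_\Psi|] = \mu(G[\mathcal{A}_1], \Psi)$, the sum $\Phi + \Pi$ collapses to $|V_\Psi| \cdot (\mu(G,\Psi) - \mu(G[\mathcal{A}_1], \Psi))$. Hence a non-trivial cut (with $\mathcal{A}_1 \neq \emptyset$) has capacity
\[
|V_\Psi| \cdot \mu(G,\Psi) + |V_\Psi| \cdot |\mathcal{A}_1| \cdot \bigl(\alpha - \rho(G[\mathcal{A}_1], \Psi)\bigr),
\]
so it undercuts the trivial capacity by exactly $|V_\Psi| \cdot |\mathcal{A}_1| \cdot (\rho(G[\mathcal{A}_1], \Psi) - \alpha)$.

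Both parts then follow from the sign of this quantity. For part~(1), if some $Y$ satisfies $\rho(G[Y], \Psi) > \alpha$, I would exhibit an explicit non-trivial cut with $\mathcal{A}_1 = Y$ (and with $\mathcal{B}_1$ chosen as the pattern-instance nodes whose vertices all lie in $Y$, which is the capacity-minimizing placement already implicit in the proof of Lemma~\ref{lemma:mincut}), obtaining a cut strictly cheaper than the trivial one; therefore no minimum st-cut can have $\mathcal{S} = \{s\}$, i.e.\ $\mathcal{S} \setminus \{s\} \neq \emptyset$. For part~(2), the hypothesis forces $\alpha - \rho(G[\mathcal{A}_1], \Psi) \geq 0$ for every non-empty $\mathcal{A}_1 \subseteq V$, so every non-trivial cut is at least as expensive as the trivial one, and $\mathcal{S}_0 = \{s\}$, $\mathcal{T}_0 = \mathcal{A} \cup \mathcal{B} \cup \{t\}$ realizes the minimum.

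The main technical obstacle is the bookkeeping for the rearrangement of $\Phi$ and $\Pi$: the $i = |V_\Psi|$ term of the double-counting identity is \emph{not} present inside $\Pi$, so it has to be pulled out of $\Phi$'s contribution and identified with $|V_\Psi| \cdot \mu(G[\mathcal{A}_1], \Psi)$. Once the identity $\Phi + \Pi = |V_\Psi|(\mu(G,\Psi) - \mu(G[\mathcal{A}_1], \Psi))$ is secured, both cases reduce to a one-line sign comparison against the trivial cut.
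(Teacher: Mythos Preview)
Your proposal is correct and is essentially the same argument the paper gives: both reduce the claim to comparing the trivial cut $(\{s\},\mathcal{A}\cup\mathcal{B}\cup\{t\})$ of capacity $|V_\Psi|\mu(G,\Psi)$ against the formula of Lemma~\ref{lemma:mincut}, using the identity $\sum_{v\in\mathcal{A}_1}\deg_G(v,\Psi)=\sum_{i=1}^{|V_\Psi|} i\cdot\Psi[\mathcal{A}_1,i]$ to isolate the $i=|V_\Psi|$ term as $|V_\Psi|\mu(G[\mathcal{A}_1],\Psi)$. The only difference is cosmetic: the paper wraps each part in a proof by contradiction, while you compute the capacity difference $|V_\Psi|\,|\mathcal{A}_1|\bigl(\alpha-\rho(G[\mathcal{A}_1],\Psi)\bigr)$ once and read off both cases from its sign.
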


\begin{proof}
We sequentially prove these two cases.

\noindent\textbf{\underline{First case:}}
We prove it by contradiction. Suppose that there is a subgraph with vertex set $Y$ such that $\rho(G[Y], \Psi)\textgreater\alpha$ and the minimum st-cut is achieved by $(\{s\},\mathcal A\cup \mathcal B\cup \{t\})$. In this case the capacity of the minimum st-cut is
\begin{equation}
\label{eq:firstTerm}
\sum_{v\in \mathcal A} deg_G(v, \Psi)=|V_\Psi|\mu(G,\Psi).
\end{equation}

We now consider another different st-cut ($\mathcal S$, $\mathcal T$), where $\mathcal S$=\{$s$\}$\cup\mathcal A_1\cup\mathcal B_1$, $\mathcal A_1$=$Y$, $\mathcal B_1$ is the set of groups of pattern instances formed by vertices in $\mathcal A_1$, and $\mathcal T$ contains the rest nodes in the network $\mathcal F$. Then, the capacity of this st-cut is exactly Eq~(\ref{eq:capacity}).

Since we assume that the minimum st-cut is achieved by $(\{s\},\mathcal A\cup \mathcal B\cup \{t\})$, the relationship between Eqs~(\ref{eq:capacity}) and (\ref{eq:firstTerm}) can be represented as
\begin{equation}
\label{eq:compare}
\footnotesize
\sum\limits_{v \in {\mathcal A}} d e{g_G}(v,\Psi ) \le \sum\limits_{v \notin {{\mathcal A}_1}} d e{g_G}(v,\Psi ) + \alpha  \cdot |{V_\Psi }| \cdot |{{\mathcal A}_1}| + \sum\limits_{i = 1}^{|{V_\Psi }| - 1} i  \cdot \Psi [{{\mathcal A}_1},i].
\end{equation}

The left part of Eq~(\ref{eq:compare}) can be rewritten as
\begin{equation}
\label{eq:formula1}
\sum\limits_{v \in {\mathcal A}} d e{g_G}(v,\Psi ) = \sum\limits_{v \in {{\mathcal A}_1}} d e{g_G}(v,\Psi ) + \sum\limits_{v \notin {{\mathcal A}_1}} d e{g_G}(v,\Psi ).
\end{equation}

Notice that the first term of the right part of Eq~(\ref{eq:formula1}) is
\begin{equation}
\label{eq:formula2}
\sum\limits_{v \in {{\mathcal A}_1}} d e{g_G}(v,\Psi ) = \sum\limits_{i = 1}^{|{V_\Psi }| - 1} {i\cdot\Psi [{{\mathcal A}_1},i]}  + |{V_\Psi }|\cdot\Psi [{{\mathcal A}_1},|{V_\Psi }|].
\end{equation}

By considering the two equations above, we can simplify the inequation of Eq~(\ref{eq:compare}) as
\begin{equation}
\label{eq:capacityCompare}
|{V_\Psi }| \cdot \Psi [{{\mathcal A}_1},|{V_\Psi }|] \le \alpha  \cdot |{V_\Psi }| \cdot |{{\mathcal A}_1}|.
\end{equation}

Since $|V_\Psi|\geq 1$, $|{V_\Psi }| \cdot \Psi [{{\mathcal A}_1}, |{V_\Psi }|]$=$\mu(G[Y], \Psi)$, and $|\mathcal A_1|$=$|Y|$, we can conclude that $\rho (G[Y],\Psi ) = \frac{{\mu (G[Y],\Psi )}}{{|Y|}} \le \alpha$,
which contradicts the assumption. Hence the first case holds.

\noindent\textbf{\underline{Second case:}}
We also prove by contradiction. Suppose that there does not exist a subgraph with vertex set $Y$ such that $\rho(G[Y], \Psi)\textgreater\alpha$, and the st-cut (\{$s$\}, $\mathcal A\cup\mathcal B\cup$ \{$t$\}) is not the minimum st-cut. Notice that the capacity of this st-cut is $\sum\limits_{v \in {\mathcal A}} d e{g_G}(v,\Psi )$.

By Lemma~\ref{lemma:mincut}, for any minimum st-cut (${\mathcal S}$, $\mathcal T$), its capacity is
$\sum\limits_{v \notin {{\mathcal A}_1}} {deg_G} (v,\Psi ) + \alpha \cdot|{V_\Psi }|\cdot|{{\mathcal A}_1}| + \sum\limits_{i = 1}^{|{V_\Psi }| - 1} i \cdot\Psi [{{\mathcal A}_1},i]$.

As a result, we have
\begin{equation}
\label{eq:compare2}
\footnotesize
\sum\limits_{v \in {\mathcal A}} d e{g_G}(v,\Psi ) \ge \sum\limits_{v \notin {{\mathcal A}_1}} d e{g_G}(v,\Psi ) + \alpha  \cdot |{V_\Psi }| \cdot |{{\mathcal A}_1}| + \sum\limits_{i = 1}^{|{V_\Psi }| - 1} {i \cdot \Psi [{{\mathcal A}_1},i]}.
\end{equation}

We notice that Eq~(\ref{eq:compare2}) is very similar to Eq~(\ref{eq:compare}), so we can further analyze it using the similar analysis above and finally derive that $\mu (G[Y],\Psi ) > \alpha |Y|$. This contradicts the assumption. Hence, the second case holds.
\end{proof}

Therefore, Theorem~\ref{thm:exactCorrect} is proved.

\noindent\textsc{Lemma}~\ref{lemma:compressF}.
\emph{Given a graph $G$, a pattern $\Psi(V_\Psi, E_\Psi)$, the flow networks built by {\tt PExact} (lines 5-12) and {\tt construct+} have the same capacity for their minimum st-cut.}
\begin{proof}
To prove the lemma, let us revisit the proof of Lemma~\ref{lemma:mincut}, which considers two cases.
Let the flow networks built by {\tt PExact} (lines 5-12) and {\tt construct+} be ${\mathcal F}$ and ${\mathcal F}+$ respectively.
We show that, for each case, the minimum st-cuts of ${\mathcal F}$ and ${\mathcal F}+$ are the same.
The case ``${\mathcal A_1}$=$\emptyset$" obviously holds as all the pattern groups are in the same partition $\mathcal T$.

For the other case ``${\mathcal A_1}\ne\emptyset$", the capacity consists of four parts.
The first two parts holds as they do not consider the patterns.
For part (3), in the proof of Lemma~\ref{lemma:mincut}, when considering a specific pattern instance with vertex set $X$, we have considered all the patterns which share the set of vertices. Let $g$ be the group of pattern instances with vertex set $X$. Then, we can observe that $\mu(G[X], \Psi)$=$|g|$. Thus, after grouping these pattern instances and increasing the capacity of the edges by $|g|$ times, the capacity of part (3) does not change.
Similarly, we can prove that the capacity in part (4) remains unchanged.
Hence, the lemma holds.
\end{proof}

\section{Pseudocodes of PExact}
\label{app:pexact}

Algorithm \ref{alg:basicPExact} presents the detailed pseudocodes of {\tt PExact}.

\begin{algorithm}{}
\small
\caption{The algorithm: {\tt PExact}.}
\label{alg:basicPExact}
\KwIn{$G(V,E)$, $\Psi(V_\Psi,E_\Psi)$;}
\KwOut{The PDS $D(V_D, E_D)$;}
initialize $l \gets 0$, $u\gets\mathop {\max }\limits_{v \in V} deg_G(v, \Psi)$\;
initialize $\Lambda\gets$all the pattern instances of $\Psi$ in $G$, $D\gets\emptyset$\;
\While{$ u-l\geq \frac{1}{n(n-1)} $}{
    $\alpha\gets \frac{l+u}{2}$\;
    $V_{\mathcal F}\gets \{s\}\cup V\cup\Lambda\cup\{t\}$\tcp*{build a flow network}
    \For{$each$ $vertex$ $v\in V$}{
        add an edge $s$$\rightarrow$$v$ with capacity $deg_G(v, \Psi)$\;
        add an edge $v$$\rightarrow$$t$ with capacity $\alpha|V_\Psi|$\;

    }
    \For{$each$ $pattern$ $\psi\in\Lambda$}{
        \For{$each$ $vertex$ $v\in\psi$}{
            add an edge $v$$\rightarrow$$\psi$ with capacity 1\;
            add an edge $\psi$$\rightarrow$$v$ with capacity ($|V_\Psi|$--1)\;
        }
    }
    find minimum st-cut ($\mathcal S$, $\mathcal T$) from the flow network $\mathcal F(V_{\mathcal F}, E_{\mathcal F})$\;
    \textbf{if} $\mathcal S$=$\{s\}$ \textbf{then} $u \gets \alpha$\;
    \textbf{else} \text{ } \text{ } \text{ } \text{ } $l \gets \alpha$,  $D\gets$ the subgraph induced by $\mathcal S\backslash \{s\}$\;
}
\Return $D$\;
\end{algorithm}

\section{Optimizing Core Decomposition for Special Patterns}
\label{app:kcoreSpec}

The core decomposition algorithm in Section~\ref{sec:decompose} can be extended for a general pattern $\Psi$. For particular patterns, such as stars, and loops, the decomposition process above can be performed faster, because the two key steps -- \emph{computing the pattern-degrees} and \emph{decreasing the vertices' pattern-degrees}, can be done more efficiently.
In the following, we illustrate this for two kinds of commonly encountered patterns, namely \emph{star}, and \emph{loop}.

\begin{figure}[h]
\hspace*{-.1cm}
\centering
\begin{tabular}{c c c}
  \begin{minipage}{2.2cm}
	\includegraphics[width=2.2cm]{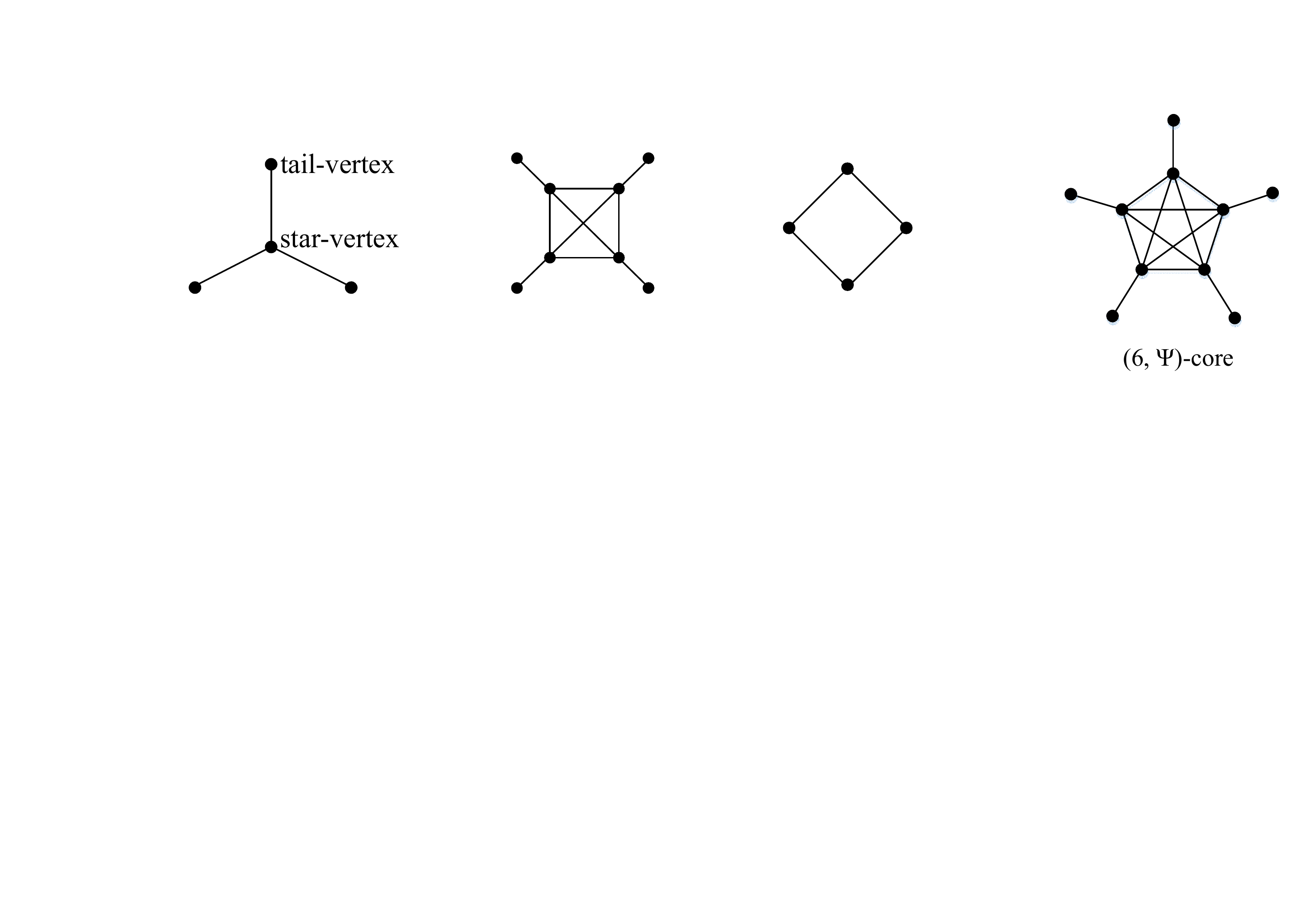}
  \end{minipage}
  &
  \begin{minipage}{1.40cm}
	\includegraphics[width=1.40cm]{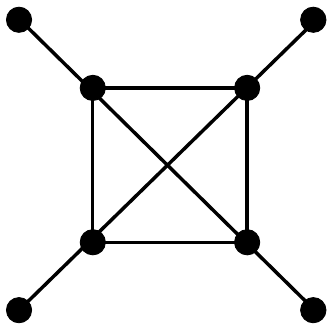}
  \end{minipage}
  &
  \begin{minipage}{1.40cm}
	\includegraphics[width=1.40cm]{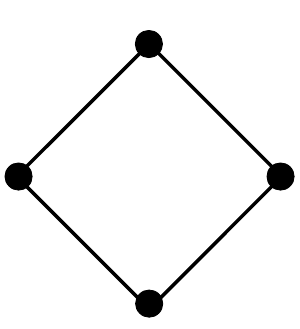}
  \end{minipage}
  \\
  (a) 3-star
  &
  (b) (3, $\Psi$)-core ($\Psi$=3-star)
  &
  (c) diamond
\end{tabular}
\caption{Illustrating the special patterns.}
\label{fig:starGraph}
\end{figure}

\noindent \textbf{1 Star Patterns.}
The star pattern has been widely investigated~\cite{nature2003,leskovec2006}. Specifically, a star pattern consists of a star-vertex (center) with a few tail-vertices (spokes). We call a star pattern with $x$ ($x\textgreater$1) tail-vertices an $x$-star pattern. In Figure~\ref{fig:starGraph}(a), we depict the 3-star pattern $\Psi$. The whole graph in Figure~\ref{fig:starGraph}(b) is an example (3, $\Psi$)-core: each vertex participates in at least three 3-star pattern instances.

Given an $x$-star pattern, we now discuss how to efficiently perform these two key steps for a vertex $v$ during the decomposition process. Assume that $v$ has $y$ neighbors, and let $v$'s $i$-th (1$\leq i\leq y$) neighbor be $u$.

\textbf{\underline{1.1 Computing $v$'s pattern-degree.}}
There are two disjoint cases: (1) $v$ is the star-vertex and (2) $v$ is a tail-vertex.
For case (1), the number of pattern instances is ${y \choose x}$,
since any $x$ neighbors of $v$ can form a $x$-star with $v$.
For case (2), the number of pattern instances is
$\sum\limits_{1 \le i \le y \wedge x \le z_i} {{z_i-1} \choose x-1}$,
where $z_i$ is $u$'s degree, since $v$ and any other $x$--1 neighbors of $u$ can form a $x$-star with $u$.
Therefore, we have
\begin{equation}
\small
deg_G(v, \Psi)={y \choose x} + \sum\limits_{1 \le i \le y \wedge x \le z_i} {{{z_i-1} \choose x-1}}.
\end{equation}
Here, we adopt the convention that ${y \choose x} =0$, whenever $y\textless x$.

\textbf{\underline{1.2 Decreasing pattern-degrees.}}
After removing $v$, we also have two disjoint cases:
(1) decreasing the pattern-degrees of $v$'s 1-hop neighbors,
and (2) decreasing the pattern-degrees of $v$'s 2-hop neighbors.
Let us denote $v$'s $i$-th (1$\leq i\leq y$) neighbor by $u$.
For case (1), we have to compute the number of pattern instances that include both $v$ and $u$.
When $v$ is the star-vertex, the number of $x$-stars that have a tail-vertex $u$ is ${{y-1} \choose {x-1}}$,
since $u$ and any other $x$--1 neighbors of $v$ can form a $x$-star with $v$;
when $v$ is the tail-vertex, the number of $x$-stars that have a star-vertex $u$ is ${{z_i-1} \choose {x-1}}$,
since $v$ and any other $x$--1 neighbors of $u$ can form a $x$-star with $u$.
Therefore, we decrease $u$'s pattern-degree by
${{y-1} \choose {x-1}}+{{z_i-1} \choose {x-1}}$.

In case (2), $u$ must be a star-vertex and $v$ is a tail-vertex. For each neighbor $w$ of $u$ ($w\neq v$), we need to decrease $w$'s pattern-degree by ${{z_i-2} \choose {x-2}}$, because $v$, $w$, and any other $x$--2 neighbors of $u$ can form an $x$-star with $u$.

From the discussions above, we can easily conclude that, the first and second steps can be completed in ${\mathcal O}(d)$ and ${\mathcal O}(d^2)$ time respectively, because we only need to enumerate their 1-hop and 2-hop neighbors. Note that the values of different ${y \choose x}$ ($x$$\leq$$y$$\leq$$d$) can be precomputed in advance. Therefore, for any $x$-star pattern, the time cost of performing $k$-pattern-core decomposition can be reduced from ${\mathcal O}\left( {n \cdot d^x} \right)$ to ${\mathcal O}(n\cdot{d^2})$.

\noindent \textbf{2 Loop Patterns.}
For ease of exposition, we take the diamond pattern (see Figure~\ref{fig:starGraph}(c)) as an example to illustrate the details of these two steps. The diamond pattern has also been used in many real applications~\cite{nature2003,leskovec2006}.

\textbf{\underline{2.1 Computing $v$'s pattern-degree.}}
We first enumerate all the paths from $v$ to its 2-hop neighbors, and then organize these paths into $h$ groups, each of which share the same 2-hop neighbor. Then, for each group, any pair of paths can form an instance of the diamond pattern.
Let the size of the $i$-th group be $y_i$ ($1\leq i\leq h$). We thus have $de{g_G}(v,\Psi) = \sum\limits_{1 \le i \le h \wedge {y_i} \ge 2} {y_i \choose 2}$.

\textbf{\underline{2.2 Decreasing pattern-degrees.}}
We handle the groups one by one. Consider the specific group $h_i$, which only has one 2-hop neighbor of $v$.
If $y_i$=1, we skip this group as there is not any pattern instance.
If $y_i\geq 2$, we decrease the pattern-degree of the 2-hop neighbor by ${y_i \choose 2}$.
For each 1-hop neighbor in $g_i$, we decrease its pattern-degree by $y_i$--1, because each path participates in $y_i$--1 pattern instances.

Clearly, the time complexity of performing the two steps above is ${\mathcal O}(d^2)$. Hence,
the time cost of performing core decomposition is reduced from ${\mathcal O}\left( {n \cdot d^3} \right)$ to ${\mathcal O}(n\cdot{d^2})$.

\section{Results on Additional Datasets}
\label{sec:addExp}

Table \ref{tab:moreDatasets} shows the additional three real datasets used in our experiments. The efficiency results are presented in Figure \ref{fig:addExp}. We can observe that the results are highly similar to those presented in the main paper, so we skip the detailed description.

\begin{table}[]
  \centering
  \scriptsize
  \vspace{-0.10in}
  \caption {Additional Datasets.}
  \label{tab:moreDatasets}
  \begin{tabular}{c|r|r}
     \hline
         \textbf{Name}
                         & \textbf{Vertices}
                         & \textbf{Edges}\\
     \hline\hline
          Flickr        &  214,698  &  2,096,306\\
     \hline
          Google        &  875,713&  4,322,051\\
     \hline
          Foursquare    &  2,127,093&  8,640,352\\
     \hline
  \end{tabular}
\end{table}

\begin{figure*}[]
\hspace*{-.10cm}
\centering
\begin{tabular}{c c c}
  &
  \begin{minipage}{3.30cm}
	\includegraphics[width=5.1cm]{legend-rand-app}
  \end{minipage}
  &
  \\
  \begin{minipage}{3.601cm}
	\includegraphics[width=3.601cm]{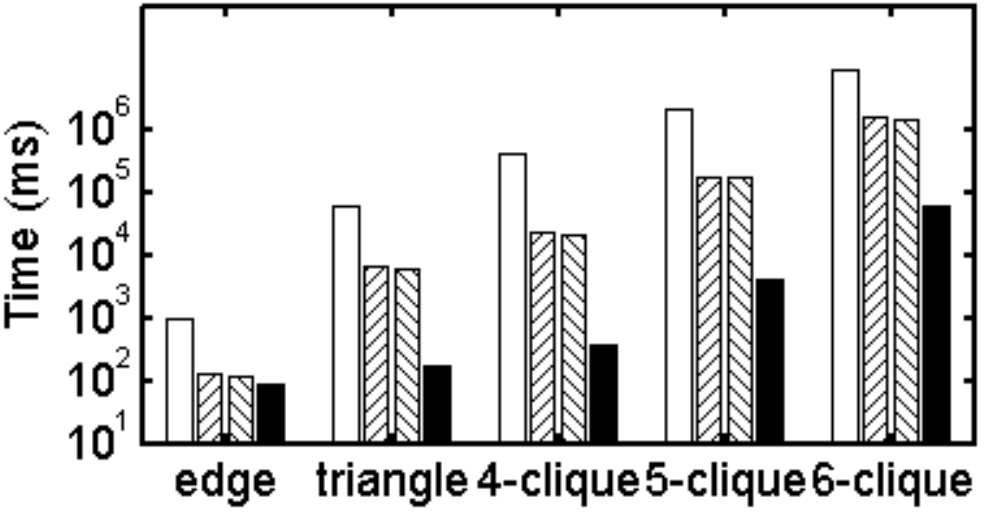}
  \end{minipage}
  &
  \begin{minipage}{3.601cm}
	\includegraphics[width=3.601cm]{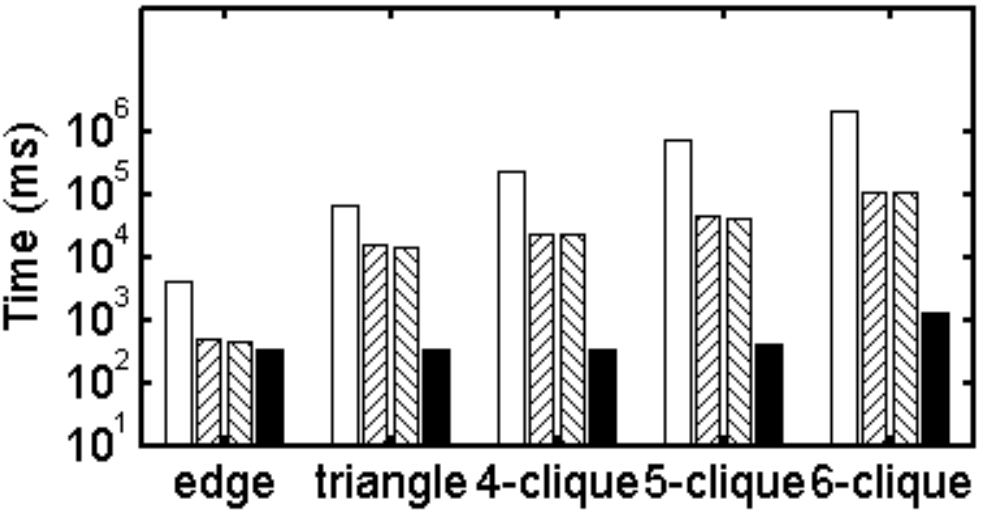}
  \end{minipage}
  &
  \begin{minipage}{3.601cm}
	\includegraphics[width=3.601cm]{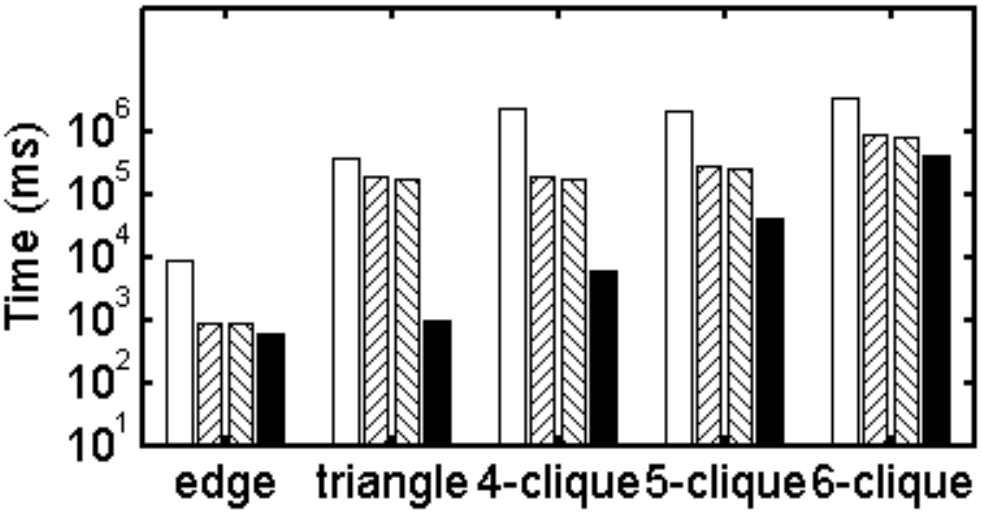}
  \end{minipage}
  \\
  (a) Flickr
  &
  (b) Google
  &
  (c) Foursquare
\end{tabular}
\vspace{-0.10in}
\caption{Efficiency of approximation CDS algorihms.}
\label{fig:addExp}
\end{figure*}

\begin{figure*}[ht]
\centering
\begin{tabular}{c c c c c}
  \begin{minipage}{3.9cm}
	\includegraphics[width=3.9cm]{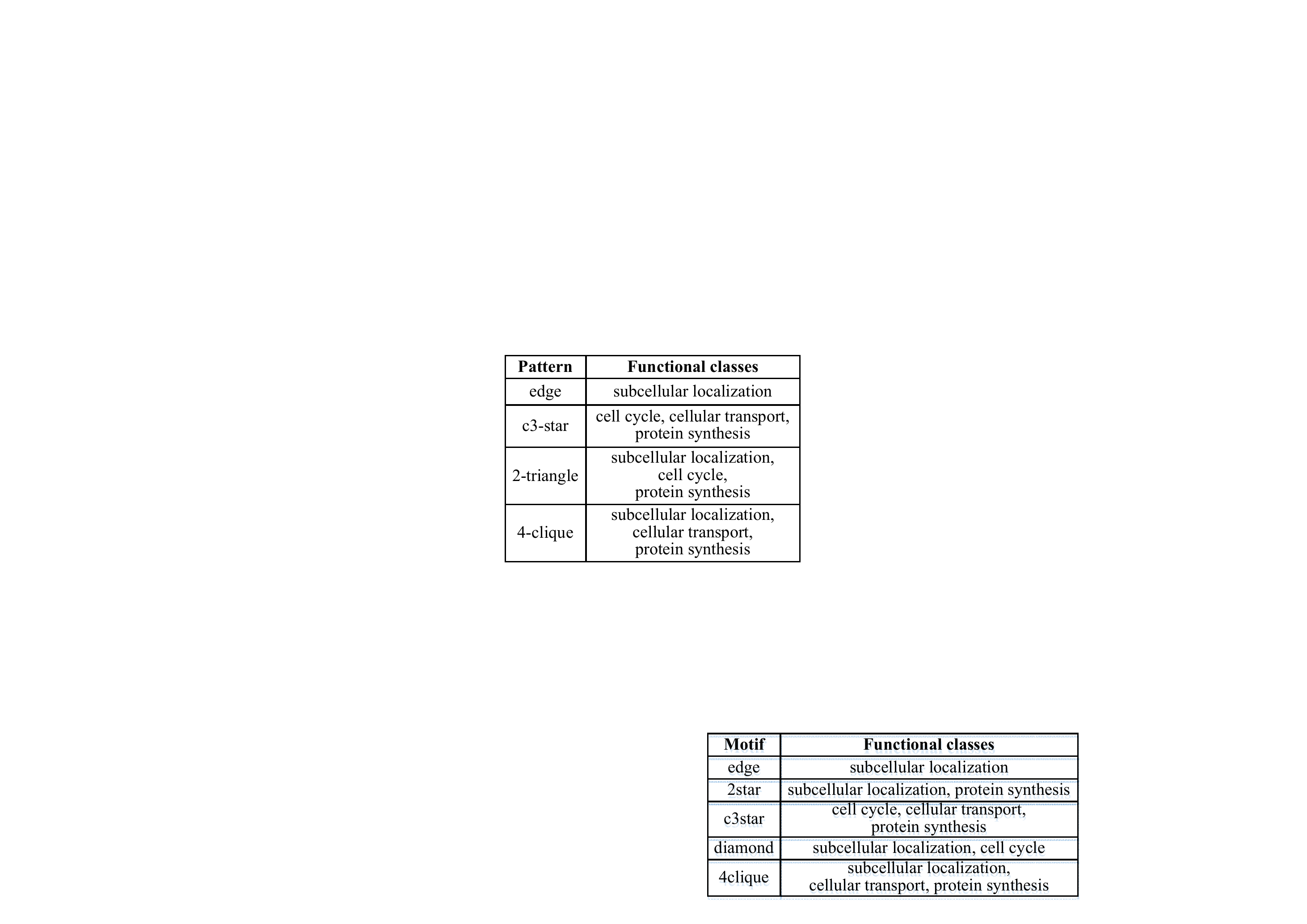}
  \end{minipage}
  &
  \begin{minipage}{3.0cm}
	\includegraphics[width=3.0cm]{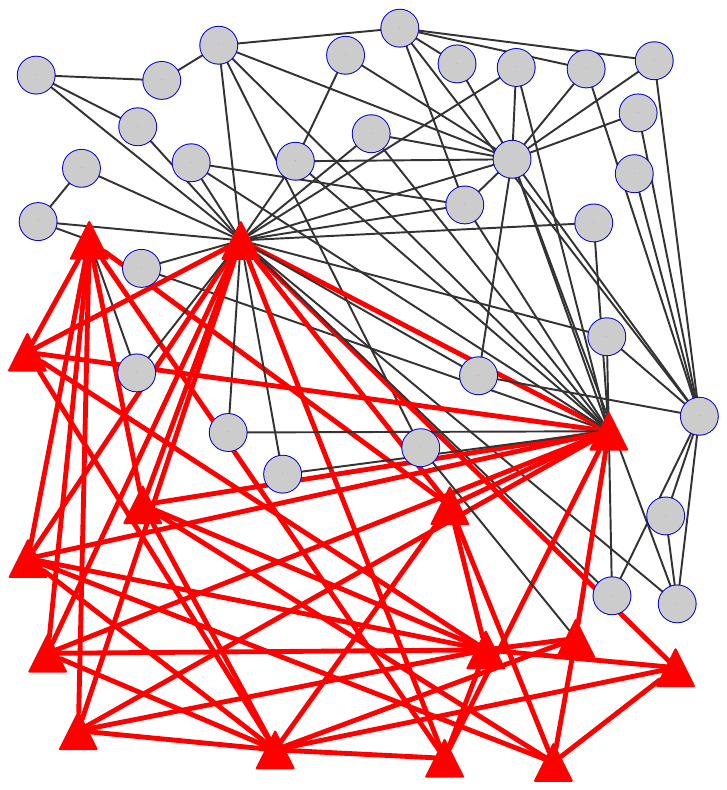}
  \end{minipage}
  &
  \begin{minipage}{3.0cm}
	\includegraphics[width=3.0cm]{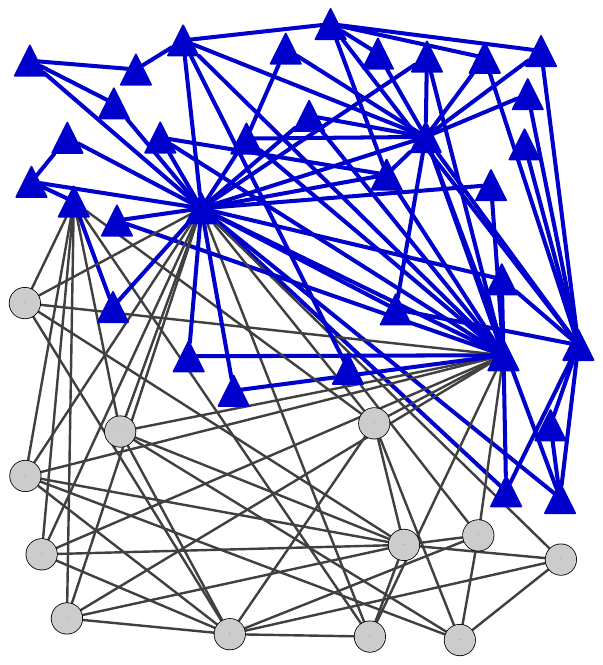}
  \end{minipage}
  &
  \begin{minipage}{3.0cm}
	\includegraphics[width=3.0cm]{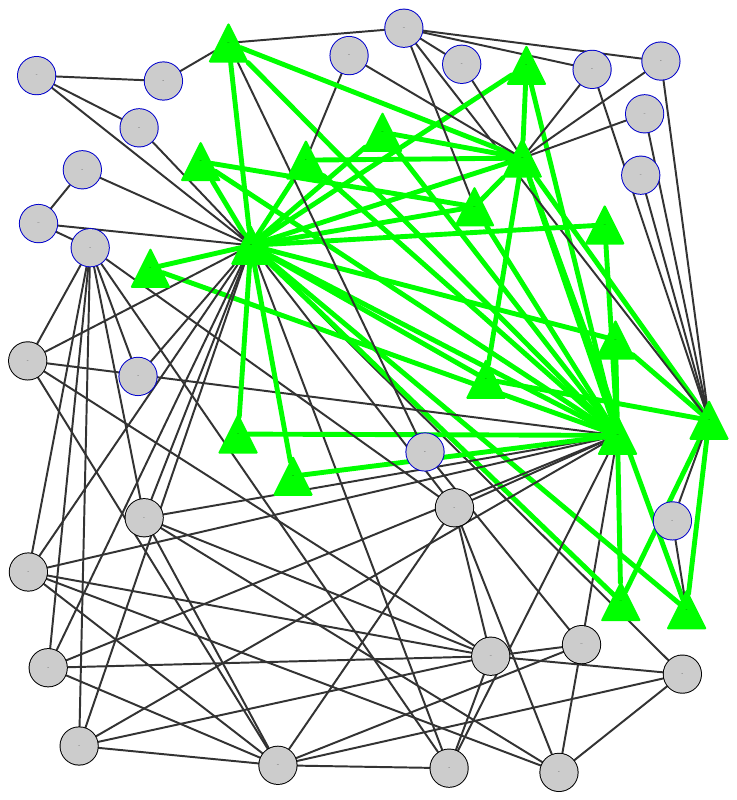}
  \end{minipage}
  &
  \begin{minipage}{3.0cm}
	\includegraphics[width=3.0cm]{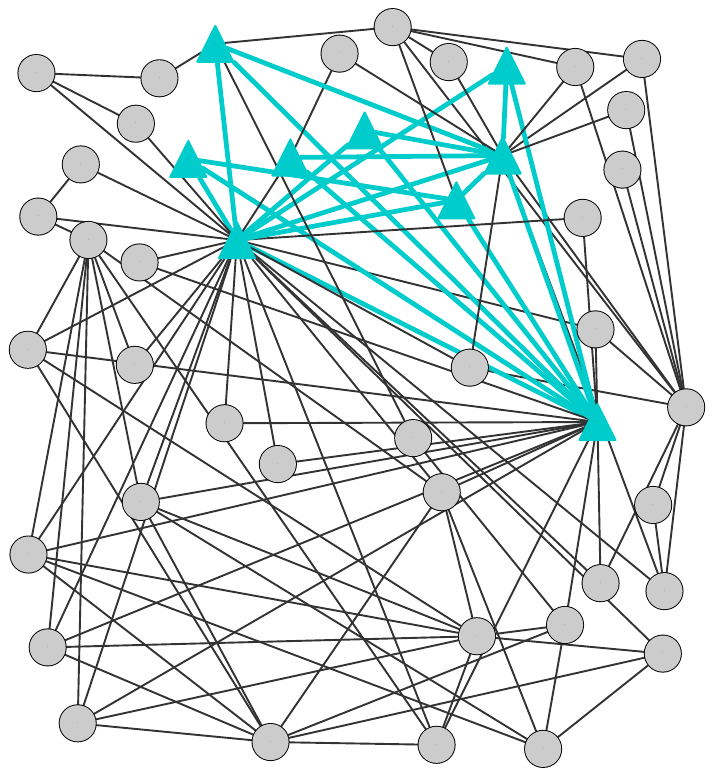}
  \end{minipage}
  \\

  (a) Functional classes~\cite{nature2003}
  &
  (b) edge (red)
  &
  (c) c3-star (blue)
  &
  (d) 2-triangle (green)
  &
  (e) 4-clique (cyan)
\end{tabular}
\small\caption{PDS's (vertices in colored triangles) in the yeast PPI network.}
\label{fig:casestudy}
\end{figure*}

\section{More case studies}
\label{app:caseStudy}

\textbf{Yeast.} As \cite{nature2003} pointed out, the conservation of biological cellular functions across species during the evolution process is often carried out by integrated activities of various patterns. A protein can belong to more than one functional class, and a pattern corresponds to multiple classes. Figure~\ref{fig:casestudy}(a) shows four patterns and their corresponding functional classes for yeast proteins~\cite{nature2003}.
We studied a yeast PPI network~\footnote{DIP: \url{http://dip.doe-mbi.ucla.edu/dip/Stat.cgi}} ($|V|$=1,116, $|E|$=2,148), where  vertices represent proteins and edges represent interactions.
For each pattern, we compute the PDS, as highlighted in Figure~\ref{fig:casestudy}, which also shows the subnetwork induced by the vertices in the union of all the PDS's, for convenience.
Notice that the PDS's corresponding to different patterns have distinct shapes. Each PDS 
could represent a subnetwork with a specific function.  Tasks such as analyzing the conservation and evolution of cellular components~\cite{nature2003} can then be performed on these subnetworks.

\end{document}